\documentclass[a4paper]{article}

%%%%%%%%%%%%%%%%%%%%%%%%%%
% Packages
%%%%%%%%%%%%%%%%%%%%%%%%%%

\usepackage{amsmath}
\usepackage{amsthm}
\usepackage{amsfonts}
\usepackage{enumerate}
\usepackage{dsfont}
\usepackage{eepic}
\usepackage{graphicx}
\usepackage{url,eucal}
\usepackage{psfrag}

%%%%%%%%%%%%%%%%%%%%%%%%%%
% New commands
%%%%%%%%%%%%%%%%%%%%%%%%%%

% The numbers
\newcommand{\reals}{\mathbb{R}}

\newcommand{\complex}{\mathbb{C}}

\newcommand{\integers}{\mathbb{Z}}

% Brackets
\newcommand{\bracket}[1]{\left[#1\right]}

\newcommand{\bracketb}[1]{\Big[#1\Big]}
\newcommand{\bracketc}[1]{\bigg[#1\bigg]}

\newcommand{\angles}[1]{\left\langle #1 \right\rangle}
\newcommand{\pb}[1]{\left\{#1\right\}}
\newcommand{\com}[1]{\left[#1\right]}

% Parenthesis
\newcommand{\para}[1]{\left(#1\right)}
\newcommand{\paraa}[1]{\big(#1\big)}

\newcommand{\parac}[1]{\bigg(#1\bigg)}

% Environments
\newtheorem{theorem}{Theorem}[section]
\newtheorem{proposition}[theorem]{Proposition}
\newtheorem{lemma}[theorem]{Lemma}
\newtheorem{definition}[theorem]{Definition}
\newtheorem{corollary}[theorem]{Corollary}

% Implications/equivalent
\newcommand{\spacearound}[1]{\quad#1\quad}
\newcommand{\equivalent}{\spacearound{\Longleftrightarrow}}

\newcommand{\Com}[2]{\big[#1,#2\big]}
\newcommand{\st}{\,:\,}
\newcommand{\Rad}{\operatorname{Rad}}
\newcommand{\diag}{\operatorname{diag}}
\newcommand{\sgn}{\operatorname{sgn}}
\newcommand{\half}{\frac{1}{2}}
\newcommand{\phih}{\hat{\phi}}

% From previous paper
\renewcommand{\tilde}[1]{\widetilde{#1}}
\newcommand{\su}{\mathfrak{su}(2)}
\newcommand{\glnc}{\mathfrak{gl}(N,\mathbb{C})}
\newcommand{\sph}{\mathbb{S}^2}
\newcommand{\cote}{\operatorname{Cote}_x}
\newcommand{\grad}{\operatorname{grad}}

\renewcommand{\mid}{\mathds{1}}
\newcommand{\h}{\hbar}
\newcommand{\C}{C}
\newcommand{\Cmuh}{\C(\mu,\h)}
\newcommand{\Wd}{W^\dagger}
\newcommand{\Wb}{\overline{W}}
\newcommand{\Wh}{\hat{W}}
\newcommand{\Dt}{\tilde{D}}
\newcommand{\dt}{\tilde{d}}
\newcommand{\et}{\tilde{e}}
\newcommand{\Ch}{\hat{C}}
\newcommand{\A}{\mathcal{A}}
\newcommand{\xv}{\vec{x}}
\newcommand{\xh}{\hat{x}}
\newcommand{\vphi}{\varphi}

% Laurent
\newcommand{\Rb}{\mathbb{R}}
\newcommand{\Sb}{\mathbb{S}}

\newcommand{\Zb}{\mathbb{Z}}
\newcommand{\fatone}{\mid}
\newcommand{\til}[1]{\widetilde{#1}}
\renewcommand{\and}{{\quad\text{and}\quad}}
\newcommand{\ou}{\quad\text{where}\quad}

\newcommand{\mat}{\mathcal{M}}
\newcommand{\smat}[4]{\begin{pmatrix}
			0 & #1 & 0 & \cdots & 0\\
			0 & 0 & #2 & \ddots & 0\\
		        \vdots & \vdots & \vdots & \ddots & \vdots\\
			0 & 0 & 0 & \cdots & #3\\
			#4 & 0 & 0 & \cdots & 0
                      \end{pmatrix}}
\newcommand{\spsmat}[4]{\begin{pmatrix}
    0      & #1    & 0      & \cdots & 0      & #4 \\
    #1    & 0      & #2    & \cdots & 0      & 0   \\
    0      & #2    & 0      & \ddots & 0      & 0   \\
    \vdots & \ddots & \ddots & \ddots & \ddots & \vdots\\
    0      & 0      & \cdots & \ddots & 0      & #3 \\
    #4    & 0      & \cdots & 0      & #3 & 0
  \end{pmatrix}}
\newcommand{\smsmat}[4]{\begin{pmatrix}
    0      & #1    & 0      & \cdots & 0      & -#4 \\
    -#1    & 0      & #2    & \cdots & 0      & 0   \\
    0      & -#2    & 0      & \ddots & 0      & 0   \\
    \vdots & \ddots & \ddots & \ddots & \ddots & \vdots\\
    0      & 0      & \cdots & \ddots & 0      & #3 \\
    #4    & 0      & \cdots & 0      & -#3 & 0
  \end{pmatrix}}

\newtheorem{thm}[theorem]{Theorem}
\newtheorem{prop}[theorem]{Proposition}
\newtheorem{lem}[theorem]{Lemma}

\theoremstyle{remark}
\newtheorem{rmk}[theorem]{Remark}

%%%%%%%%%%%%%%%%%%%%%%%%%%
% Page layout
%%%%%%%%%%%%%%%%%%%%%%%%%%

\setlength{\textheight}{20cm}
\setlength{\textwidth}{14cm}
\setlength{\oddsidemargin}{1.2cm}

\numberwithin{equation}{section}

% Draft header
\makeatletter
\newcommand{\ps@draft}{%
\renewcommand{\@oddhead}{\hfill\textit{Preliminary version of \today}\hfill}}%
\makeatother

%\pagestyle{draft}

%\parindent0mm
%\setlength{\parskip}{0cm}

%%%%%%%%%%%%%%%%%%%%%%%%%%
% Title and author info
%%%%%%%%%%%%%%%%%%%%%%%%%%

\title{\flushleft{Fuzzy Riemann Surfaces}}
\author{\flushleft{Joakim Arnlind, Martin Bordemann,
Laurent Hofer} \\ \flushleft{Jens Hoppe, Hidehiko Shimada}}

% \institute{Department of Mathematics, Royal Institute of Technology \and 
% Laboratoire de MIA, Universit{\'e} de Haute-Alsace \and
% Max Planck Institute for Gravitational Physics}

%%%%%%%%%%%%%%%%%%%%%%%%%
% The document
%%%%%%%%%%%%%%%%%%%%%%%%%

\begin{document}

%\maketitle

\noindent \textbf{\huge Noncommutative Riemann Surfaces}
\vspace{0.3cm}\\
\noindent \textrm{ Joakim Arnlind$^1$, Martin Bordemann$^2$, Laurent Hofer$^2$,\\ 
  Jens Hoppe$^1$ and Hidehiko Shimada$^3$}
\vspace{0.5cm}\\
\noindent $^1$\,{\small Dept. of Mathematics, KTH, S-10044 Stockholm.\\
\hspace*{0.22cm}\url{jarnlind@math.kth.se}, \url{hoppe@math.kth.se}}\\
$^2$\,{\small Laboratoire de MIA, 4, rue des Fr\`eres Lumi\`ere, Universit\'e de Haute-Alsace, F-68093 Mulhouse.\\
\hspace*{0.22cm}\url{martin.bordemann@uha.fr}, \url{laurent.hofer@uha.fr}}\\
$^3$\,{\small Max Planck Institute for Gravitational Physics, Am M\"uhlenberg 1, D-14476 Golm.\\
\hspace*{0.22cm}\url{hidehiko.shimada@aei.mpg.de}}

\begin{abstract}
  \noindent We introduce \emph{C-Algebras} of compact Riemann surfaces
  $\Sigma$ as non-commutative analogues of the Poisson algebra of
  smooth functions on $\Sigma$.  Representations of these algebras
  give rise to sequences of matrix-algebras for which
  matrix-commutators converge to Poisson-brackets as $N\to\infty$.
  For a particular class of surfaces, nicely interpolating between
  spheres and tori, we completely characterize (even for the
  intermediate singular surface) all finite dimensional
  representations of the corresponding \emph{C-algebras}.
\end{abstract}

%\vspace{0.5cm}\noindent\textit{Version of \today}

%%%%%%%%%%%%%%%%%%%%%%%%%%%%%%%%%%%%%%%%%%%%%%
\section*{Introduction}
%%%%%%%%%%%%%%%%%%%%%%%%%%%%%%%%%%%%%%%%%%%%%%

Attaching sequences of matrix algebras to a given manifold $M$ to
describe a non-commutative and approximate version of its ring of
smooth functions has now become a rather important tool in
non-commutative field theory: more precisely, for each positive integer
$N$ let $Q_N:\mathcal{C}^\infty (M,\mathbb{C})\rightarrow
M_{N,N}(\mathbb{C})$ be a complex linear surjective map of the ring of
smooth functions on $M$ into the space of all complex $N\times
N$-matrices such that products of functions are approximately mapped
to products of matrices in the limit $N\to \infty$. In almost all
cases, $\mathcal{C}^\infty (M,\mathbb{C})$ carries a Poisson bracket
$\{~,~\}$ (for instance if $M$ is symplectic such as every orientable
Riemann surface), and one further demands that Poisson brackets are
approximately mapped to matrix commutators in the limit $N\to \infty$.

For the $2$-sphere $\sph$ this was done in 1982 in \cite{Hop82}: using
the classical fact that the space of all spherical harmonics of fixed
$l$ is in bijection with the space of all harmonic polynomials in
$\mathbb{R}^3$ of degree $l$ and substituting the three commuting
variables by $N=2l+1$-dimensional representations of the
three-dimensional Lie algebra $\su$ allows to define a map from
functions on $\sph$ to $N\times N$ matrices, that sends Poisson
brackets to matrix commutators. The result was dubbed ``Fuzzy
Sphere'', in \cite{Mad92}. The papers \cite{KL92} prove that the
(complexified) Poisson algebra of functions on \emph{any} Riemann
surface arises as a $N\rightarrow\infty$ limit of $\glnc$ -- which had
been conjectured in \cite{BHSS91}. This result was extended to
any quantizable compact K\"{a}hler manifold in \cite{BMS94}, the
technical tool being geometric and Berezin-Toeplitz quantization.
Insight on how matrices can encode topological information (certain
sequences having been identifiable as converging to a particular
function, but $\glnc$ lacking topological invariants) was gained in
\cite{Shi04}.

The above general results, however, are merely existence proofs: there
are only two rather explicit formulas, for the two-sphere \cite{Hop82}
and for the two-torus \cite{FFZ89} (see also
\cite{Hop89}), which are quite different since the former
uses the natural embedding of the two-sphere into $\mathbb{R}^3$
whereas the latter relies on the fact that the two-torus is a quotient
$\mathbb{R}^2/\mathbb{Z}^2$.  The general results are based on the
complex nature of any compact orientable Riemann surface.

In this paper, we should like to propose an approach which to the best
of our knowledge seems to have been neglected in the literature so
far, despite its rather intuitive appeal: we are using the
`visualisable' embedding of a compact orientable Riemann surface
$\Sigma$ into $\mathbb{R}^3$ explicitly given by the set of all zeros
of a real polynomial $C$.  The function $C$ then defines a Poisson
bracket in $\mathbb{R}^3$ by the formula
\[
   \{f,g\}_C:=\nabla C \cdot\paraa{\nabla f \times \nabla g}
\]
for two real-valued smooth functions $f,g$ defined on $\mathbb{R}^3$.
Since $C$ is a Casimir function for the bracket $\{~,~\}_C$, one gets
a symplectic Poisson bracket on $\Sigma$ by restriction. The idea now
is to use the above Poisson bracket on $\mathbb{R}^3$ to first define
an infinite-dimensional non-commutative algebra as a quotient algebra
of the free algebra in three non-commuting variables by means of
relations involving $\{~,~\}_C$ and a real parameter $\hbar$. In a
second step the resulting algebra is divided by an ideal generated by
the constraint polynomial $C$ thus giving a non-commutative version of
the functions on $\Sigma$. In a third step matrix representations of
any size $N$ of this latter algebra are constructed where the
parameter $\hbar$ takes specific values depending on $N$.  It is
noteworthy that we do not depend on the zero set of $C$ to be a
regular surface. Thus, even for a singular surface (e.g., in the
transition from sphere to torus) the non-commutative analogue is still
well defined.

The main result of this paper is an explicit construction for the
the two-torus and a transition region for a two-torus emerging out of a
no longer round two-sphere following the above programme.
Encouraged by the explicit construction and by the fact that for the
two-torus our results almost coincide with the older results of
\cite{BHSS91},
we are quite optimistic that for the case of genus $g\geq 2$ this
embedding approach may give more explicit constructions than the existence
proof in \cite{KL92} and \cite{BMS94}.

\vspace{0.3cm}

\noindent The paper is organized as follows:\\
\noindent In Section \ref{sec:surface_construction} we describe
Riemann surfaces of genus $g$ embedded in $\reals^3$ as inverse images
of polynomial constraint-functions, $C(\xv)$. The above-mentioned
Poisson bracket $\{~,~\}_C$ on $\mathbb{R}^3$ is treated in Section
\ref{sec:pois_brack_r3} where the bracket restricts to a symplectic
bracket on the embedded Riemann surface $\Sigma$.

In Section \ref{sec:TorusAndSphere} step one and two of the above
programme is explicitly proven for a polynomial constraint $C$
describing the two-sphere, the two-torus, and a transition region: we
give a system of relations (eqs (\ref{eq:XY}), (\ref{eq:YZ}),
(\ref{eq:ZX})), and show that this system satisfies the hypothesis of
the Diamond lemma, thus proving that the non-commutative algebra carries
a multiplication which is a converging deformation of the pointwise
multiplication of polynomials in three
commuting variables (see Proposition \ref{prop:TorusDiamondLemma}).

In the central Section \ref{sec:RepTorusAndSphere} we completely
classify all the finite-dimensional representations of the algebras
constructed in the preceding section (two-sphere, two-torus, and
transition) which are hermitian in the sense that the variables $x$,
$y$, and $z$ are sent to hermitian $N\times N$-matrices. The main
technical tool is graph-theory describing the non-zero entries of the
matrices. Next, in Section \ref{sec:eigenvaluesequence}, we confirm
that the eigenvalue sequences of these representations reflect
topology in the sense suggested in \cite{Shi04}.

The final Section \ref{sec:GeoQuantTorus} compares the classification
results of Section \ref{sec:RepTorusAndSphere} in case of the
two-torus with the older ones obtained in \cite{BHSS91}: it is shown
that the two completely different methods give almost the same result.

%%%%%%%%%%%%%%%%%%%%%%%%%%%%%%%%%%%%%%%%%%%%%%
\section{Genus $g$ Riemann surfaces}\label{sec:surface_construction}
%%%%%%%%%%%%%%%%%%%%%%%%%%%%%%%%%%%%%%%%%%%%%%

\noindent The aim of this section is to present compact connected
Riemann surfaces of any genus embedded in $\reals^3$ by inverse images
of polynomials. For this purpose we use the regular value theorem and
Morse theory. Let $C$ be a polynomial in 3 variables and define
$\Sigma=C^{-1}(\{0\})$. What are the conditions on $C$, for $\Sigma$
to be a genus $g$ Riemann surface? If the restriction of $C$ to
$\Sigma$ is a submersion, then $\Sigma$ is an orientable submanifold
of $\reals^3$. $\Sigma$ has to be compact and of the desired genus.
For further details see \cite{Hir76,Hof02}.

The classification of 2 dimensional compact (connected) manifolds is
well known. In this case, there is a one to one correspondence between
topological and diffeomorphism classes. The result is that any compact
orientable surfaces is homeomorphic (hence diffeomorphic) to a sphere
or to a surface obtained by gluing tori together (connected sum). The
number $g$ of tori is called the genus and is related to the
Euler-Poincar\'{e} characteristic by the formula $\chi=2-2g$.

%\textbf{Morse theory.} 
To compute $\chi(\Sigma)$ we apply Morse theory to a specific
function. A point $p$ of a (smooth) function $f$ on $\Sigma$ is a
singular point if $Df_p=0$, in which case $f(p)$ is a singular
value. At any singular point $p$ one can consider the second
derivative $D^2 f_p$ of $f$ and $p$ is said to be non-degenerate if
$\det(D^2 f_p)\neq 0$. Moreover one can attach an index to each such
point depending on the signature of $D^2 f$: 0 if positive, 1 if
hyperbolic and 2 if negative. A Morse function is a function such that
every singular point is non-degenerate and singular values all
distinct. Then $\chi(\Sigma)$ is given by the formula:
$$ \chi(\Sigma)=n(0)-n(1)+n(2), $$ where $n(i)$ is the number of
singular points which have an index $i$.

The $\cote$ function is defined as the restriction of the first
projection on the surface. It is not necessarily a Morse function (one
has to choose a ``good'' embedding for that), but the singular points
are those for which the gradient $\grad C$ is parallel to the $Ox$
axis. Moreover the Hessian matrix of $\cote$ at such a point $p$ is:
$$ -\frac{1}{\frac{\partial C}{\partial x}(p)}\left(
   \begin{matrix}
      \frac{\partial^2 C}{\partial y^2}(p) & \frac{\partial^2 C}{\partial y\partial z}(p)\\
      \frac{\partial^2 C}{\partial y\partial z}(p) & \frac{\partial^2 C}{\partial z^2}(p)
   \end{matrix}\right). $$
%\textbf{Polynomial model.} 
Take $$ C(\vec{x})=(P(x)+y^2)^2+z^2-\mu^2, $$ where $\mu>0$,
$P(x)=a_{2k} x^{2k}+a_{2k-1} x^{2k-1}+\cdots+a_1 x+a_0$ with
$a_{2k}>0$ and $k>0$. Obviously $\Sigma$ is closed and bounded (even
degree of $P$) hence compact. $\Sigma$ is a submanifold of $\reals^3$ if,
and only if for each $p\in \Sigma$, $DC_p\neq 0$ which is equivalent
to requiring that the polynomials $P-\mu$ and $P+\mu$ have only simple
roots. The singular points of the $\cote$ function on $\Sigma$ are the
points $(x,0,0)$ such that $P(x)^2=\mu^2$ and the Hessian matrix is:
$$ -\frac{1}{\frac{\partial C}{\partial
    x}(x,0,0)}\left(\begin{matrix}4P(x) & 0 \\ 0 &
    2\end{matrix}\right). $$ Hence it is positive or negative if, and
only if $P(x)=\mu$ and hyperbolic if, and only if $P(x)=-\mu$.  Thanks
to the fact that $P(x)$ never vanishes at a singular point, this also
shows that $\cote$ is a genuine Morse function.  Finally,
$$ n(0)+n(2)=\#\{ P = \mu \}\qquad\text{and}\qquad n(1)=\#\{ P = -\mu
\}. $$ If the polynomial $P-\mu$ has exactly 2 simple roots and the
polynomial $P+\mu$ has exactly $2g$ simple roots, then
$\chi(\Sigma)=2-2g$ and $\Sigma$ is a surface of genus $g$.
%\textbf{Explicit construction of $P$.}
Let $g>0$. Set:
\begin{tabbing}
   \hspace{5mm}\=(i)\hspace{8mm}\=$G(t)=(t-1)(t-2^2)\ldots(t-g^2)$\hspace{1cm}\=and\hspace{1cm}\=$M=\underset{0\leq
   t\leq g^2+1}{\max} G(t),\quad \alpha\in\ \left(0,\frac{2\mu}{M}\right)$\\[2mm]
 \>(ii)\>$Q(x)=\alpha
   G(x)-\mu$\>and\>$P(x)=Q(x^2)$
\end{tabbing}
One can directly see that $Q+\mu$ has exactly $g$ simple roots, hence
$P+\mu$ has exactly $2g$ simple roots. For $t\in [0;g^2+1]$, the
function $Q(t)-\mu$ has no zero. On the other hand, for $t\geq g^2+1$,
$Q(t)-\mu$ is strictly growing and has exactly one zero. Consequently
the polynomial $P-\mu$ has exactly 2 simple roots and the surface
$\Sigma$ defined above is a genus $g$ compact Riemann surface.
Non-compact, respectively non-polynomial, higher genus Riemann surfaces
have been considered in \cite{BKL05}.

%%%%%%%%%%%%%%%%%%%%%%%%%%%%%%%%%%%%%%%%%%%%%%%%%%%%%%%%%%%%%%%%%
\section{The construction for general Riemann surfaces}\label{sec:pois_brack_r3}
%%%%%%%%%%%%%%%%%%%%%%%%%%%%%%%%%%%%%%%%%%%%%%%%%%%%%%%%%%%%%%%%%

\noindent For arbitrary smooth $C:\reals^3\longrightarrow\reals$
\begin{equation}
  \left\lbrace f,g \right\rbrace_{\reals^3} := \vec\nabla C\cdot(\vec\nabla f\times \vec\nabla g)\label{eq:pois_1}
\end{equation}
defines a Poisson bracket for functions on $\reals^3$ (see e.g. Nowak
\cite{Now97} who studied the formal deformability of
(\ref{eq:pois_1}))\footnote{While we did not (yet find a way to) use
  his results, we are very grateful for his ``New Year's Eve''
  explanations, as well as providing us with his Ph.D. Thesis.}.
Clearly, $C$ is a Casimir function of the bracket, i.e. $C$ commutes
with every function. Let now, as in Section
\ref{sec:surface_construction}, $\Sigma_g\subset\reals^3$ be described
as $C^{-1}(0)$ with
\begin{equation}
  C(\xv) = \half\paraa{P(x)+y^2}^2 + \half z^2 - c,
  %\frac{1}{2}C(\vec x):=\psi(x,y)+\frac{z^2-1}{2}\overset{!}{=}0.\label{eq:pois_2}
\end{equation}
and $c>0$. For this choice of $C$, the bracket $\pb{\cdot,\cdot}_{\reals^3}$ defines a Poisson bracket on
$\Sigma_g$ through restriction.  
The Poisson brackets between $x$,$y$ and $z$ read:
\begin{equation}\label{eq:cont_brackets}
  \begin{split}
    &\{ x,y \}_{\reals^3} = \partial_z C = z\\
    &\{ y,z \}_{\reals^3} = \partial_x C = P'(x)(P(x)+y^2)\\
    &\{ z,x \}_{\reals^3} =  \partial_y C = 2y(P(x)+y^2).
  \end{split}
\end{equation}

\noindent We claim that fuzzy analogues of $\Sigma_g$ can be obtained
via matrix analogues of \eqref{eq:cont_brackets}. Apart from possible
``explicit $1/N$ corrections'', direct ordering questions arise on the
r.h.s. of \eqref{eq:cont_brackets}, while on the l.h.s. one replaces
Poisson brackets by commutators, i.e. $\pb{\cdot,\cdot}\rightarrow
\frac{1}{i\hbar}\com{\cdot,\cdot}$.  We present the following Ansatz
for the \emph{$C$-algebra} of $\Sigma_g$, given as three relations in the
free algebra generated by the letters $X,Y,Z$:
\begin{align}
  &\com{X,Y}=i\hbar Z\label{eq:gen_XYcom}\\
  &\com{Y,Z}=i\hbar\sum_{r=1}^{2g} a_r \sum_{i=0}^{r-1} X^i\left(P(X)+Y^2\right)X^{r-1-i}\equiv\phih_X\label{eq:gen_YZcom}\\
  &\com{Z,X}=i\hbar\bracket{2Y^3+YP(X)+P(X)Y}\equiv\phih_Y\label{eq:gen_ZXcom}
\end{align}
where $\hbar$ is a positive real number and $P(X) = \sum_{r=0}^{2g}
a_r X^r$. The particular ordering in \eqref{eq:gen_YZcom} and
\eqref{eq:gen_ZXcom} is chosen such that the three equations are
consistent, in the sense of the Diamond Lemma \cite{Ber78}.
\begin{proposition}
  Let $S=\{\sigma_X,\sigma_Y,\sigma_Z\}$ be a reduction system with
  \begin{align*}
    &\sigma_X = (W_X,f_X) =\paraa{ZY,YZ-\phih_X}\\
    &\sigma_Y = (W_Y,f_Y) =\paraa{ZX,XZ+\phih_Y}\\
    &\sigma_Z = (W_Z,f_Z) =\paraa{YX,XY-i\hbar Z}.
  \end{align*}
  Then the ambiguity $(ZY)X=Z(YX)$ is resolvable if and only if
  $[X,\phih_X]+[Y,\phih_Y]=0$, and this relation is satisfied for the
  choice in \eqref{eq:gen_YZcom} and \eqref{eq:gen_ZXcom}.
\end{proposition}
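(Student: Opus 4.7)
\emph{Proof proposal.} The plan is a direct verification following Bergman's Diamond Lemma. First I would check that $ZYX$ is the \emph{unique} length-$3$ overlap of the three rules: no left-hand side begins with $X$, and the only way to chain two of the words $ZY$, $ZX$, $YX$ via a shared letter is $ZY\cdot YX$. Hence this single ambiguity is the only Diamond-Lemma condition that needs to be checked.

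Next I would reduce the two sides of $(ZY)X = Z(YX)$ in parallel. Applying $\sigma_X$ gives $(ZY)X \mapsto YZX - \phih_X X$; then $\sigma_Y$ on $ZX$ yields $YXZ + Y\phih_Y - \phih_X X$; and finally $\sigma_Z$ on $YXZ$ produces
\[
  (ZY)X \;\longrightarrow\; XYZ - i\hbar Z^2 + Y\phih_Y - \phih_X X.
\]
Symmetrically, $\sigma_Z$ applied to $Z(YX)$ gives $ZXY - i\hbar Z^2$; then $\sigma_Y$ on $ZX$, followed by $\sigma_X$ on the resulting $XZY$, leads to
\[
  Z(YX) \;\longrightarrow\; XYZ + \phih_Y Y - X\phih_X - i\hbar Z^2.
\]
The $XYZ$ and $-i\hbar Z^2$ terms coincide, and the difference of the two reductions collapses to
\[
  (X\phih_X - \phih_X X) + (Y\phih_Y - \phih_Y Y) \;=\; [X,\phih_X] + [Y,\phih_Y],
\]
which lies in the subalgebra generated by $X$ and $Y$. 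Since $\phih_X$ and $\phih_Y$ contain no $Z$, no further reductions can modify this difference, so the ambiguity is resolvable precisely when this commutator sum vanishes in the free algebra.

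Finally I would verify the identity for the specific Ansatz. For $[X,\phih_X]$, $X$ commutes with the $P(X)$-factors, so only the $Y^2$-pieces contribute. Setting $U_j := X^j\!\paraa{P(X)+Y^2}X^{r-j}$, one has $[X,\,X^i(P(X)+Y^2)X^{r-1-i}] = U_{i+1} - U_i$, which on summing over $i=0,\ldots,r-1$ telescopes to $U_r - U_0 = [X^r, Y^2]$. Reassembling the $a_r$-coefficients yields $[X,\phih_X] = i\hbar\,[P(X), Y^2]$. For $[Y,\phih_Y]$ the mixed term $YP(X)Y$ cancels between $[Y,YP(X)] = Y^2 P(X) - YP(X)Y$ and $[Y,P(X)Y] = YP(X)Y - P(X)Y^2$, leaving $[Y,\phih_Y] = i\hbar[Y^2,P(X)] = -i\hbar[P(X),Y^2]$, and the two contributions cancel. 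The step deserving the most care is the chain of reductions in the middle paragraph: applying the rules in a different order would leave residual $Z$-terms that cannot be eliminated later, and only the stated sequence exposes the clean commutator difference from which the identity follows.
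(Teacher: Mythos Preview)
Your approach is essentially the paper's: reduce the overlap $ZYX$ from each side, identify the difference as $[X,\phih_X]+[Y,\phih_Y]$, and verify that this vanishes for the chosen ordering. Your telescoping computation of $[X,\phih_X]=i\hbar\,[P(X),Y^2]$ spells out what the paper leaves as ``straightforward to check.''

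One slip worth flagging: the sentence ``since $\phih_X$ and $\phih_Y$ contain no $Z$, no further reductions can modify this difference'' is not correct. The rule $\sigma_Z\colon YX\to XY-i\hbar Z$ \emph{does} apply to the $YX$-subwords present in $[X,\phih_X]+[Y,\phih_Y]$, and applying it introduces $Z$'s; so this expression is not irreducible, and the ``only if'' direction of the biconditional needs a further word (the paper's proof is equally silent on this point). This does not affect the actual use of the proposition, which only requires the ``if'' direction together with the explicit verification that $[X,\phih_X]+[Y,\phih_Y]=0$. Your closing remark that a different reduction order ``would leave residual $Z$-terms that cannot be eliminated later'' is likewise an overstatement: the order of reductions affects only the bookkeeping along the way, not whether the ambiguity ultimately resolves.
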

\begin{proof}
  By definition, the ambiguity is resolvable if we can show that
  $A:=(YZ-\phih_X)X-Z(XY-i\hbar Z)=0$ only using the possibility to
  replace any occurrence of $W_i$ with $f_i$, for $i=X,Y,Z$. We get
  \begin{align*}
    A &= YZX-ZXY-\phih_XX+i\hbar Z^2
    = Y(XZ+\phih_Y)-(XZ+\phih_Y)Y-\phih_XX+i\hbar Z^2\\
    &=YXZ-XZY+[Y,\phih_Y] - \phih_XX + i\hbar Z^2\\
    &= (XY-i\hbar Z)Z-X(YZ-\phih_X)+[Y,\phih_Y]-\phih_XX+i\hbar Z^2
    = [X,\phih_X]+[Y,\phih_Y].
  \end{align*}
  It is then straightforward to check that
  $[Y,\phih_Y]=-[X,\phih_X]$ for the choice in \eqref{eq:gen_YZcom}
  and \eqref{eq:gen_ZXcom}.
\end{proof}

\noindent Finding explicit representations of
\eqref{eq:gen_XYcom}--\eqref{eq:gen_ZXcom}, let alone classifying
them, is of course a very complicated task. We succeeded in doing so
for $P(x)=x^2-\mu$, which corresponds to a torus when
$\mu/\sqrt{c}>1$, and deformed spheres, when $-1<\mu/\sqrt{c}<1$. In
this particular case, we were also able to construct a basis for the
quotient algebra.

%%%%%%%%%%%%%%%%%%%%%%%%%%%%%%%%%%%%%%%%%%%%%%%%%%%%%%%%%%%%%%%%%
\section{The torus and sphere $C$-algebras}\label{sec:TorusAndSphere}
%%%%%%%%%%%%%%%%%%%%%%%%%%%%%%%%%%%%%%%%%%%%%%%%%%%%%%%%%%%%%%%%%

\noindent Let us now take $P(x)=x^2-\mu$, in which case $C^{-1}(0)$, with
\begin{equation}
  C(x,y,z) = (x^2+y^2-\mu)^2+z^2-c\qquad (c>0),\label{eq:Cts}
\end{equation}
describes a torus for $\mu>\sqrt{c}$ and a sphere for
$-\sqrt{c}<\mu<\sqrt{c}$. The corresponding $C$-algebra is defined as
the quotient of the free algebra $\complex\angles{X,Y,Z}$ with the
two-sided ideal generated by the relations
\begin{align}
  &\Com{X}{Y}=i\h Z\label{eq:XY}\\
  &\Com{Y}{Z}=i\h\bracketb{2X^3+XY^2+Y^2X-2\mu X}\label{eq:YZ}\\
  &\Com{Z}{X}=i\h\bracketb{2Y^3+YX^2+X^2Y-2\mu Y}.\label{eq:ZX}
\end{align}
By introducing $W=X+iY$ and $V=X-iY$ one can rewrite \eqref{eq:YZ} and \eqref{eq:ZX} as
\begin{align}
  &\para{W^2V+VW^2}(1+\h^2)=4\mu\h^2W+2(1-\h^2)WVW\label{eq:W}\\
  &\para{V^2W+WV^2}(1+\h^2)=4\mu\h^2V+2(1-\h^2)VWV\label{eq:V}
\end{align}
and we denote by $I(\mu,\h)$ the ideal generated by these relations.
Through the ''Diamond lemma'' \cite{Ber78} one can explicitly
construct a basis of this algebra.
\begin{proposition}\label{prop:TorusDiamondLemma}
Let $\Cmuh=\complex\langle W,V\rangle\slash I(\mu,\h)$. Then a basis of $\Cmuh$ is given by
\begin{align*}
  \{V^i(WV)^j W^k\st i,j,k=0,1,2,\ldots\}.
\end{align*}
As a vector space, $\Cmuh$ is therefore isomorphic to the space of
commutative polynomials $\complex[X,Y,Z]$.
\end{proposition}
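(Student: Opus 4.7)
The plan is to apply the Diamond Lemma \cite{Ber78} to the reduction system obtained by solving \eqref{eq:W} for $W^2V$ and \eqref{eq:V} for $WV^2$:
\begin{align*}
\sigma_W &: W^2V \longmapsto \alpha W + \beta WVW - VW^2, \\
\sigma_V &: WV^2 \longmapsto \alpha V + \beta VWV - V^2W,
\end{align*}
where $\alpha = 4\mu\h^2/(1+\h^2)$ and $\beta = 2(1-\h^2)/(1+\h^2)$. First I would fix a compatible semigroup well-ordering on words in $V$ and $W$: the length-lexicographic order with $V<W$ works, since every monomial appearing on the right-hand side of $\sigma_W$ (resp.\ $\sigma_V$) is either strictly shorter than $W^2V$ (resp.\ $WV^2$) or of the same length but beginning with the strictly smaller letter $V$.

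Next I would characterize the irreducible monomials, i.e.\ those containing neither $W^2V$ nor $WV^2$ as a subword. A short combinatorial argument shows these are exactly the words of the form $V^i(WV)^j W^k$: writing any such irreducible word as alternating blocks of $V$'s and $W$'s, the forbidden subwords force every interior block of $V$'s to have length exactly one and every interior block of $W$'s to have length exactly one, leaving the unique structure indicated. The ambiguities of the reduction system are easy to list: both left-hand sides have length $3$ and are distinct, so there are no inclusion ambiguities, and the only nontrivial overlap ambiguity is $W^2V^2$, arising from the suffix $WV$ of $W^2V$ matching the prefix $WV$ of $WV^2$.

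The heart of the proof is verifying that this single ambiguity is resolvable. Reducing $W^2V^2=(W^2V)V$ first by $\sigma_W$ yields $\alpha WV+\beta WVWV-VW^2V$, and the term $VW^2V$ still contains the pattern $W^2V$, so one further application of $\sigma_W$ is required. Reducing $W^2V^2=W(WV^2)$ first by $\sigma_V$ yields $\alpha WV+\beta WVWV-WV^2W$, and the term $WV^2W$ still contains $WV^2$, so one further application of $\sigma_V$ is required. A direct computation -- the only genuine calculation in the proof -- shows that both cascades produce the common fully-reduced expression
\[
\alpha WV + \beta WVWV - \alpha VW - \beta VWVW + V^2W^2,
\]
each summand of which is already an irreducible monomial of the form $V^i(WV)^jW^k$. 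The Diamond Lemma then guarantees that the irreducible monomials descend to a $\complex$-basis of $\Cmuh$, and since this basis is indexed by $(i,j,k)\in\naturals^3$, as a vector space $\Cmuh\cong\complex[X,Y,Z]$. The main obstacle is essentially bookkeeping: confirming that the chosen ordering is compatible with every monomial appearing on the right-hand sides, and that no further overlap or inclusion ambiguities have been missed; once this is in place, the single explicit resolution above does all the work.
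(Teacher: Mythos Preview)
Your proof is correct and follows essentially the same Diamond Lemma approach as the paper: the same reduction system $\{W^2V,WV^2\}$, the same unique overlap ambiguity $W^2V^2$, and the same identification of the irreducible words as $V^i(WV)^jW^k$. The only cosmetic difference is that you use the length-lexicographic order with $V<W$, whereas the paper orders first by degree and then by a ``misordering index'' counting $W$--$V$ inversions; both choices are compatible with the reduction system and satisfy the descending chain condition, and you in fact spell out the explicit resolution of the ambiguity that the paper leaves as ``straightforward to check''.
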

\begin{proof}
In the notation of the Diamond Lemma, let $S=\{\sigma_1,\sigma_2\}$ be a reduction system with
\begin{align*}
  \sigma_1 &= (w_{\sigma_1},f_{\sigma_1})=\para{W^2V,\frac{4\mu\h^2}{1+\h^2}W+\frac{2(1-\h^2)}{1+\h^2}WVW-VW^2}\\
  \sigma_2 &= (w_{\sigma_2},f_{\sigma_2})=\para{WV^2,\frac{4\mu\h^2}{1+\h^2}V+\frac{2(1-\h^2)}{1+\h^2}VWV-V^2W},
\end{align*}
and let $\leq$ be a partial ordering on $\langle W,V\rangle$ such that
$p<q$ if either the total degree (in $W$ and $V$) of $p$ is less than
the total degree of $q$ or if $p$ is a permutation of the letters in
$q$ and the \emph{misordering index} of $p$ is less than the
misordering index of $q$. The misordering index of a word
$a_1a_2\ldots a_k$ is defined to be the number of pairs $(a_k,a_{k'})$
with $k<k'$ such that $a_k=W$ and $a_k'=V$. This partial ordering is
\emph{compatible} with $S$ in the sense that every word in $f_{\sigma_i}$ is
less than $w_{\sigma_i}$.

We will now argue that the partial ordering fulfills the descending
chain condition, i.e. that every sequence of words such that
$w_1\geq w_2\geq\cdots$ eventually becomes constant. Assume that $w_1$
has degree $d$ and misordering index $i$. If $w_1>w_k$, then $d$ or
$i$ must decrease by at least 1. Since both the degree and the
misordering index are non-negative integers, an infinite sequence of
strictly decreasing words can not exist.

The reduction system $S$ has one \emph{overlap ambiguity}, namely,
there are two ways to reduce the word $W^2V^2$; either you write it as
$(W^2V)V$ and use $\sigma_1$, or you write it as $W(WV^2)$ and use
$\sigma_2$. In an associative algebra, these must clearly be the same,
and if they do reduce to the same expression, we call the ambiguity
\emph{resolvable}. It is now straightforward to check that the
indicated ambiguity is in fact resolvable.

The above observations allow for the use of the Diamond lemma, which
in particular states that a basis for $\Cmuh$ is given by the set of
irreducible words. In this particular case, it is clear that the words
$V^i(WV)^kW^j$ are irreducible (since they do not contain $W^2V$ or
$WV^2$) and that there are no other irreducible words.
\end{proof}
\noindent By a straightforward calculation, using \eqref{eq:W} and \eqref{eq:V}, one proves the following result.
\begin{proposition}\label{prop:Casimir}
  Define $D=WV$, $\Dt=VW$ and $\Ch=(D+\Dt-2\mu)^2+(D-\Dt)^2/\h^2$. Then it holds that
  \begin{enumerate}[(i)]
  \item $[D,\Dt]=0$,
    \item $[W,\Ch]=[V,\Ch]=0$.
  \end{enumerate}
\end{proposition}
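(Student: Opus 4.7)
The plan is to exploit the observation that the combination $A := X^2+Y^2-\mu$ commutes with $Z$ inside $\Cmuh$, which reduces both claims to a single lemma. First I would rewrite the relations \eqref{eq:YZ} and \eqref{eq:ZX} in the compact anticommutator form
\[
[Y,Z] = i\h(XA+AX), \qquad [Z,X] = i\h(YA+AY),
\]
which is merely a repackaging, since $XA + AX = 2X^3 + XY^2 + Y^2X - 2\mu X$ and analogously for the other bracket. In this notation $D + \Dt = 2(A+\mu)$ and $D - \Dt = 2\h Z$, so $D = A + \mu + \h Z$ and $\Dt = A + \mu - \h Z$; in particular $\Ch = 4(A^2 + Z^2)$.

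The key lemma is that $[A,Z] = 0$ in $\Cmuh$. Expanding $[A,Z] = [X^2,Z] + [Y^2,Z]$ by the Leibniz rule and inserting the anticommutator form above produces eight terms each containing an $A$ in the middle; four of them cancel pairwise, and the remaining four regroup as
\[
[A,Z] = i\h\bigl\{A[X,Y] - [X,Y]A\bigr\} = i\h\,[A,[X,Y]] = -\h^{2}[A,Z],
\]
after substituting $[X,Y] = i\h Z$. Hence $(1+\h^2)[A,Z] = 0$, and since $\h\in\reals$ this forces $[A,Z] = 0$.

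Granting this lemma, part (i) is immediate:
\[
[D,\Dt] = [A+\h Z,\,A-\h Z] = -2\h\,[A,Z] = 0.
\]
For part (ii), the vanishing of $[W,\Ch]$ and $[V,\Ch]$ is equivalent to the vanishing of $[X,\Ch]$ and $[Y,\Ch]$, since $W = X+iY$ and $V = X-iY$. A Leibniz expansion of $[A^2+Z^2,X]$ using $[A,X] = -i\h(YZ+ZY)$ (obtained from $[A,X] = [Y^2,X]$) together with $[Z,X] = i\h(YA+AY)$ collapses after analogous cancellations to $i\h\,[Y,[A,Z]]$, which vanishes by the lemma; the $[A^2+Z^2,Y]$ case is symmetric and reduces to $-i\h\,[X,[A,Z]] = 0$.

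The only real obstacle is the bookkeeping of the eight non-commutative terms in the expansion of $[A,Z]$, but the symmetric anticommutator form of the relations has been chosen precisely so that the unwanted cross-terms cancel, and the rest of the argument is essentially mechanical.
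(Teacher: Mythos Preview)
Your argument is correct. The identification $D=A+\mu+\h Z$, $\Dt=A+\mu-\h Z$ (hence $\Ch=4(A^2+Z^2)$) is right, and the central computation $(1+\h^2)[A,Z]=0$ goes through exactly as you indicate; the eight-term expansion collapses to $[A,[X,Y]]$, and both (i) and (ii) then follow by the short Leibniz calculations you describe. One tiny quibble: not all eight terms in the expansion of $[A,Z]$ literally have ``$A$ in the middle''---the four that cancel pairwise do, but the surviving four have $A$ on the outside---though this is only a phrasing issue and the algebra is correct.

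The paper itself gives no detailed proof here: it simply asserts that a straightforward calculation with the $W,V$ relations \eqref{eq:W}--\eqref{eq:V} yields the result. Your route is genuinely different in that you work back in the $X,Y,Z$ generators and isolate the single lemma $[X^2+Y^2,Z]=0$, from which everything follows. This is arguably more illuminating: it explains \emph{why} $\Ch$ is central by exhibiting it as $4(A^2+Z^2)$ with $A$ and $Z$ commuting, and it makes the role of the symmetric (anticommutator) ordering in \eqref{eq:YZ}--\eqref{eq:ZX} transparent. The paper's approach, by contrast, would presumably verify $[W,\Ch]=0$ directly from the cubic relations in $W,V$, which is mechanically equivalent but hides the commuting pair $(A,Z)$.
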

\noindent In particular, this means that the direct non-commutative analogue of the constraint \eqref{eq:Cts} is a Casimir of $\Cmuh$.

\vspace{0.5cm}

\noindent Let us make a remark on the possibility of choosing a
different ordering when constructing a non-commutative analogue of the
Poisson algebra. Assume we choose to completely symmetrize the r.h.s
of equations \eqref{eq:cont_brackets}. Then, the defining relations of
the algebra become
\begin{align*}
  &\Com{X}{Y}=i\h Z\\
  &\Com{Y}{Z}=2i\h\bracketc{X^3+\frac{1}{3}\paraa{XY^2+Y^2X+YXY}-\mu X}\\
  &\Com{Z}{X}=2i\h\bracketc{Y^3+\frac{1}{3}\paraa{YX^2+X^2Y+XYX}-\mu Y}.
\end{align*}
Again, defining $W=X+iY$ and $V=X-iY$, gives
\begin{align*}
  &\para{W^2V+VW^2}(1+4\h^2/3)=4\mu\h^2W+2(1-2\h^2/3)WVW\\
  &\para{V^2W+WV^2}(1+4\h^2/3)=4\mu\h^2V+2(1-2\h^2/3)VWV,
\end{align*}
which, by rescaling $\h^2=\frac{3\h'^2}{3-h'^2}$, can be brought to
the form of equations \eqref{eq:W} and \eqref{eq:V}, with $\h'$ as the
new parameter.

%%%%%%%%%%%%%%%%%%%%%%%%%%%%%%%%%%%%%%%%%%%%%%%%%%%%%%%%%%%%%%%%%
\section{Representations of the torus and sphere algebras}\label{sec:RepTorusAndSphere}
%%%%%%%%%%%%%%%%%%%%%%%%%%%%%%%%%%%%%%%%%%%%%%%%%%%%%%%%%%%%%%%%%

Let us now turn to the task of finding representations $\phi$, of the
algebra $\Cmuh$, with $0<\hbar< 1$, for which $\phi(X),\phi(Y),\phi(Z)$ are hermitian
matrices, i.e. $\phi(W)^\dagger = \phi(V)$. First, we observe that any
such representation is completely reducible; hence, in the following, we need only
consider irreducible representations.

\begin{proposition}\label{prop:completely_reducible}
  Any representation $\phi$ of $\Cmuh$ such that $\phi(W)^\dagger=\phi(V)$
  is completely reducible.
\end{proposition}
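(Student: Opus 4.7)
The plan is to invoke the standard fact that any finite-dimensional $*$-representation of an associative algebra is completely reducible: one shows that the orthogonal complement of an invariant subspace is itself invariant, and then inducts on the dimension. The hypothesis $\phi(W)^\dagger=\phi(V)$ is precisely what promotes $\phi$ to a $*$-representation on $\mathbb{C}^N$ equipped with its standard Hermitian inner product.

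More concretely, I would proceed as follows. Since $W$ and $V$ generate $\Cmuh$ as an algebra, a subspace $U\subseteq\mathbb{C}^N$ is $\phi$-invariant if and only if both $\phi(W)U\subseteq U$ and $\phi(V)U\subseteq U$. Given such a $U$, I claim that $U^\perp$ is invariant as well. Indeed, for any $v\in U^\perp$ and $u\in U$,
\[
  \langle \phi(W)v,u\rangle=\langle v,\phi(W)^\dagger u\rangle=\langle v,\phi(V)u\rangle=0,
\]
the last equality because $\phi(V)u\in U$; the symmetric computation shows $\langle \phi(V)v,u\rangle=\langle v,\phi(W)u\rangle=0$. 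Hence $\phi(W)$ and $\phi(V)$ both preserve $U^\perp$, so $\mathbb{C}^N=U\oplus U^\perp$ is a decomposition into subrepresentations. Inducting on $N$ (or just on $\dim U$, starting from any nontrivial invariant subspace of minimal dimension, which is necessarily irreducible) finishes the argument.

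I do not expect any genuine obstacle here. The only ingredients used are that $W$ and $V$ generate $\Cmuh$, the finite dimensionality of the representation space (implicit in the section's setting of $N\times N$ matrix representations), and the adjoint relation $\phi(W)^\dagger=\phi(V)$; the defining relations \eqref{eq:XY}--\eqref{eq:ZX} themselves play no role in this particular statement, which is purely a structural observation about $*$-representations. The proof should occupy only a few lines in the paper, and its main function is to license the restriction to irreducible $\phi$ in the subsequent classification.
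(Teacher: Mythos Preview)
Your argument is correct and is the classical, elementary route: exploit the Hermitian inner product on $\mathbb{C}^N$ to show that the orthogonal complement of an invariant subspace is again invariant, then induct. The paper's proof reaches the same conclusion by a different, more algebraic path: it considers the subalgebra $\mathcal{A}\subseteq M_N(\mathbb{C})$ generated by $\phi(W)$ and $\phi(V)$, observes that $\mathcal{A}$ is closed under hermitian conjugation, and then shows $\operatorname{Rad}(\mathcal{A})=0$ by noting that for any $M$ in the radical one has $M^\dagger M\in\operatorname{Rad}(\mathcal{A})$, hence $(M^\dagger M)^m=0$ for some $m$, forcing $M=0$; complete reducibility then follows from the Wedderburn--Artin theorem.

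Your approach is shorter and self-contained, requiring no structure theory beyond the inner product; the paper's approach is more in the spirit of ring theory and makes the semisimplicity of the image algebra explicit, which is a slightly stronger statement than mere complete reducibility of the module. Either proof is adequate here, and you are right that the defining relations of $\Cmuh$ play no role.
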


\begin{proof}
  Let $\phi$ be a representation of $\Cmuh$ fulfilling the conditions
  in the proposition. Moreover, let $\A$ be the subalgebra, of the full
  matrix-algebra, generated by $\phi(W)$ and $\phi(V)$. First we note
  that since $\phi(V)=\phi(W)^\dagger$, the algebra $\A$ is invariant
  under hermitian conjugation, thus given $M\in\A$ we know that
  $M^\dagger\in\A$.

  We prove that $\Rad(\A)$ (the radical of $\A$), i.e. the
  largest nilpotent ideal of $\A$, vanishes, which implies, by the
  Wedderburn-Artin theorem, see e.g. \cite{ASS06}, that $\phi$ is completely reducible.  Let
  $M\in\Rad(\A)$. Since $\Rad(\A)$ is an ideal it follows that
  $M^\dagger M\in\Rad(\A)$. For a finite-dimensional algebra,
  $\Rad(\A)$ is nilpotent, which in particular implies that there
  exists a positive integer $m$ such that $\paraa{M^\dagger M}^m=0$.
  It follows that $M=0$, hence $\Rad(\A)=0$.
\end{proof}

\noindent In the following, we shall always assume that $\phi$ is an
\emph{hermitian} irreducible representation of $\Cmuh$. For these
representations, $\phi(D)$ and $\phi(\Dt)$ (as defined in Proposition
\ref{prop:Casimir}) will be two commuting hermitian matrices and
therefore one can always choose a basis such that they are both
diagonal. We then conclude that the value of the Casimir $\Ch$ will
always be a non-negative real number, which we will denote by $4c$.
Finding hermitian representations of $\Cmuh$ with $\phi(\Ch)=4c\mid$
thus amounts to solving the matrix equations
\begin{align}
    &(WD+\Dt W)(1+\h^2)=4\mu\h^2W+(1-\h^2)(W\Dt+DW)\label{eq:WWd}\\
    &\para{D+\Dt-2\mu\mid}^2+\frac{1}{\h^2}\para{D-\Dt}^2=4c\,\mid,\label{eq:MatrixCasimir}
\end{align}
with $D=W\Wd=\diag(d_1,d_2,\ldots,d_N)$ and $\Dt=\Wd
W=\diag(\dt_1,\dt_2,\ldots,\dt_N)$ being diagonal matrices with
non-negative eigenvalues.  The ``constraint'' \eqref{eq:MatrixCasimir}
constrains the pairs $\xv_i=(d_i,\dt_i)$ to lie on the ellipse
$(x+y-2\mu)^2+(x-y)^2/\h^2=4c$, e.g. as in Figure \ref{fig:ellipseex}.

\begin{figure}[h]
\begin{center}
  \includegraphics[height=5cm]{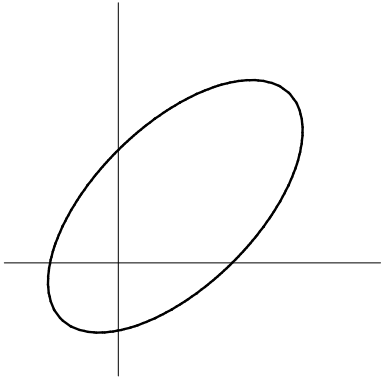}
\end{center}
\caption{The constraint ellipse.}\label{fig:ellipseex}
\end{figure}

\noindent Representations with $c=0$, which we shall call \emph{degenerate}, are
particularly simple, and can be directly characterized.
\begin{proposition}
  Let $\phi$ be an hermitian representation of $\Cmuh$
  such that $\phi(\Ch)=0$. Then $\mu\geq 0$ and there exists a unitary matrix $U$
  such that $\phi(W)=\sqrt{\mu}\,U$.
\end{proposition}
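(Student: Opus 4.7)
The plan is to exploit the fact that, after applying $\phi$, the Casimir $\hat C$ becomes a sum of two squares of hermitian matrices, which forces each summand to vanish individually. Writing $W$ for $\phi(W)$, we have $\phi(V)=W^\dagger$ by hermiticity, and consequently $\phi(D)=WW^\dagger$ and $\phi(\widetilde{D})=W^\dagger W$ are both hermitian (in fact positive semidefinite). Therefore $A:=WW^\dagger+W^\dagger W-2\mu\,\mathds{1}$ and $B:=WW^\dagger-W^\dagger W$ are hermitian, so $A^2$ and $B^2$ are positive semidefinite. The hypothesis $\phi(\hat C)=A^2+B^2/\hbar^2=0$ combined with taking traces forces $\mathrm{tr}(A^2)=\mathrm{tr}(B^2)=0$, hence $A=B=0$ (a hermitian matrix $H$ with $\mathrm{tr}(H^2)=0$ has all eigenvalues zero and thus vanishes).

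This yields the key identity $WW^\dagger=W^\dagger W=\mu\,\mathds{1}$. Since the left-hand side is positive semidefinite, this immediately gives $\mu\geq 0$. In the non-degenerate case $\mu>0$, I would then set $U:=\mu^{-1/2}W$; the two identities give $UU^\dagger=U^\dagger U=\mathds{1}$, so $U$ is unitary, and by construction $\phi(W)=\sqrt{\mu}\,U$ as claimed.

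The only remaining edge case is $\mu=0$. Here $WW^\dagger=0$ forces $W=0$, so one may take $U$ to be any unitary (for instance $U=\mathds{1}$) and the identity $\phi(W)=\sqrt{\mu}\,U=0$ holds trivially. Observe that this argument uses only the Casimir constraint \eqref{eq:MatrixCasimir}; the compatibility relation \eqref{eq:WWd} does not enter, nor does one need to assume irreducibility. I do not foresee any serious obstacle: the only non-trivial ingredient is the elementary fact that a sum of squares of hermitian operators vanishes iff each square does, which is standard.
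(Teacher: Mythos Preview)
Your proof is correct and follows essentially the same approach as the paper: both arguments use the Casimir constraint $\phi(\hat C)=0$ to force $WW^\dagger=W^\dagger W=\mu\,\mathds{1}$, from which $\mu\ge 0$ and the existence of $U$ follow immediately. The only cosmetic difference is that the paper first diagonalizes the commuting hermitian matrices $D$ and $\tilde D$ and reads off $d_i=\tilde d_i=\mu$ from the ellipse equation componentwise, whereas you use the trace to split the sum of squares; both are standard and equally short.
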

\begin{proof}
  When $D$ and $\Dt$ are non-negative diagonal matrices, $c=0$ implies
  $D=\Dt=\mu\mid$ via \eqref{eq:MatrixCasimir}, which necessarily gives $\mu\geq 0$. In this case,
  equation \eqref{eq:WWd} is identically satisfied, and we are left
  with solving the equations $W\Wd=\Wd W=\mu\mid$. Hence, there exists
  a unitary matrix $U$ such that $W=\sqrt{\mu}\,U$.
\end{proof}
\noindent Assume in the following that $c>0$. We note that any
representation $\phi'$ of $C(\mu',\hbar)$, with $\phi'(\Ch)=4c'\mid$, can be
obtained from a representation $\phi$ of $\Cmuh$ with
$\phi(\Ch)=4c\mid$, if $\mu/\sqrt{c}=\mu'/\sqrt{c'}$. Namely, one
simply defines $\phi'(W):=\sqrt[4]{c'/c}\,\,\phi(W)$.
\begin{proposition}
  Let $\phi$ be an hermitian representation of $\Cmuh$ with
  $\phi(\Ch)=4c\mid$. Then it holds that $-\sqrt{c}\leq\mu$.
\end{proposition}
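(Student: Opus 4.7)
The plan is to exploit the fact that, for any hermitian representation, both $D=WW^\dagger$ and $\Dt=W^\dagger W$ are positive semi-definite matrices, so after simultaneous diagonalization (which is possible by Proposition~\ref{prop:Casimir}, since $[D,\Dt]=0$ and both are hermitian) every eigenvalue pair $(d_i,\dt_i)$ lies in the closed first quadrant of $\mathbb{R}^2$. One then compares this positivity constraint with the constraint ellipse \eqref{eq:MatrixCasimir}, which forces each $(d_i,\dt_i)$ to lie on a set whose intersection with the first quadrant is empty unless $\mu\geq -\sqrt{c}$.

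Concretely, I would pick any index $i\in\{1,\dots,N\}$ (the representation being non-trivial, so $N\geq 1$) and read off two facts about $s_i:=d_i+\dt_i$. From positivity of $D$ and $\Dt$ one has $s_i\geq 0$. From equation \eqref{eq:MatrixCasimir} one has
\[
   (s_i-2\mu)^2 \;=\; 4c-\frac{(d_i-\dt_i)^2}{\h^2}\;\leq\;4c,
\]
hence $|s_i-2\mu|\leq 2\sqrt{c}$, which gives the upper bound $s_i\leq 2\mu+2\sqrt{c}$. Combining the two bounds yields $0\leq 2\mu+2\sqrt{c}$, i.e.\ $\mu\geq -\sqrt{c}$, as desired.

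The argument is essentially a one-line geometric observation: the constraint curve in the $(d,\dt)$-plane is an ellipse centred at $(\mu,\mu)$, whose maximum value of $d+\dt$ is $2\mu+2\sqrt{c}$, attained at the vertex $(\mu+\sqrt{c},\mu+\sqrt{c})$; if $\mu<-\sqrt{c}$ this maximum is negative, so the ellipse misses the first quadrant entirely and no positive-semi-definite $D,\Dt$ can satisfy the constraint. There is really no hard step here once the positivity of $D$ and $\Dt$ is invoked; the mild subtlety is simply to note that equation \eqref{eq:WWd} plays no role in this bound --- the obstruction already appears at the level of the Casimir and the hermiticity assumption, so one does not need to unpack any structure of the relations \eqref{eq:XY}--\eqref{eq:ZX} beyond $W^\dagger = V$.
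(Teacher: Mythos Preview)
Your proof is correct and follows essentially the same approach as the paper: both arguments observe that the non-negativity of the eigenvalues of $D=WW^\dagger$ and $\Dt=W^\dagger W$ forces the constraint ellipse \eqref{eq:MatrixCasimir} to meet the closed first quadrant, which fails precisely when $\mu<-\sqrt{c}$. Your version makes the geometry explicit by bounding $d_i+\dt_i$, whereas the paper simply asserts that for $\mu<-\sqrt{c}$ every point on the ellipse has a strictly negative coordinate; these are the same observation.
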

\begin{proof}
  Assume that there exists a representation of $\Cmuh$ with
  $-\sqrt{c}>\mu$. Then the diagonal components of equation
  \eqref{eq:MatrixCasimir} describes an ellipse in the
  $(d,\dt)$-plane, for which all points $(d,\dt)$ satisfy that either
  $d$ or $\dt$ is strictly negative. This contradicts the fact that $D$ and
  $\Dt$ have non-negative eigenvalues. Hence, $-\sqrt{c}\leq \mu$.
\end{proof}
\noindent Writing out \eqref{eq:WWd} in components gives 
\begin{equation}\label{eq:WDcomp}
  W_{ij}\parac{\paraa{\h^2+1}(\dt_i+d_j)+\paraa{\h^2-1}(d_i+\dt_j)-4\mu\h^2}=0,
\end{equation}
and we also note that $W\Dt=DW$ yields $W_{ij}\paraa{d_i-\dt_j}=0$. If
$W_{ij}\neq 0$, the two equations give a relation between the pairs
$\xv_i=(d_i,\dt_i)$ and $\xv_j=(d_j,\dt_j)$. Namely, $\xv_j=s\para{\xv_i}$ with
\begin{align}\label{eq:sdef}
  s\paraa{d,\dt\,} = \para{4\mu\sin^2\theta+2d\cos 2\theta-\dt,d}
%  s\paraa{d,\dt\,}= \parac{\frac{4\mu\h^2}{1+\h^2}+\frac{2(1-\h^2)}{1+\h^2}d-\dt,d}.
\end{align}
where $\h=\tan\theta$ for $0<\theta<\pi/4$.
The map $s$ is better understood if we introduce coordinates
$z(\xv)=(d-\dt)/\h$ and $\vphi(\xv)=d+\dt-2\mu$ in which case one
finds that
\begin{align}
  \begin{pmatrix}
    z\paraa{s(\xv)} \\ \vphi\paraa{s(\xv)}
  \end{pmatrix}
  =
  \begin{pmatrix}
    \cos2\theta & -\sin2\theta \\ \sin2\theta & \cos2\theta
  \end{pmatrix}
  \begin{pmatrix}
    z(\xv) \\ \vphi(\xv)
  \end{pmatrix}.
\end{align}
We conclude that $s$ amounts to a ``rotation'' on the ellipse
described by the constraint \eqref{eq:MatrixCasimir}. Let us
collect some basic facts about $s$ in the next proposition.
\begin{proposition}\label{prop:sprop}
  Let $s:\reals^2\to\reals^2$ be the map as defined above and let
  $q=e^{2i\theta}$. Then
  \begin{enumerate}[(i)]
  \item $s$ is a bijection,
  \item if $\xv(\beta_0)=\sqrt{c}\,\para{\frac{\mu}{\sqrt{c}}+\frac{\cos\beta_0}{\cos\theta},\,\frac{\mu}{\sqrt{c}}+
      \frac{\cos(\beta_0+2\theta)}{\cos\theta}}$ 
    then $s^l\paraa{\xv(\beta_0)}=\xv\para{\beta_0+2l\theta},$
  \item $s(\xv)=\xv$ if and only if $\xv=(\mu,\mu)$,
  \item if $\xv\neq (\mu,\mu)$, then $s^n(\xv)=\xv$ if and only if $q^n=1$.
  \end{enumerate}
\end{proposition}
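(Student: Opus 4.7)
The plan is to leverage the rotation interpretation derived just before the proposition: in the coordinates $(z,\vphi)=\para{(d-\dt)/\h,\,d+\dt-2\mu}$ the map $s$ corresponds to the linear rotation $R_{2\theta}$ by angle $2\theta$ about the origin, and the change of variables $(d,\dt)\mapsto(z,\vphi)$ is an invertible linear transformation since $\h=\tan\theta\neq 0$ for $\theta\in(0,\pi/4)$. Each of the four claims then reduces to a corresponding statement about a planar rotation.

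For (i), a rotation of $\reals^2$ is a bijection, and bijectivity is preserved under the invertible coordinate change, so $s$ is a bijection. (Alternatively, $s$ is affine with linear part of determinant $1$, as can be read off directly from \eqref{eq:sdef}.)

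For (ii), I would substitute $\xv(\beta_0)$ directly into the definition of $s$. The constant terms telescope via $4\mu\sin^2\theta+2\mu\cos 2\theta-\mu=\mu$, and the sum-to-product identity $2\cos 2\theta\cos\beta_0-\cos(\beta_0+2\theta)=\cos(\beta_0-2\theta)$ identifies the first coordinate of $s(\xv(\beta_0))$ with the first coordinate of $\xv(\beta_0\pm 2\theta)$ (the sign depending on the orientation convention implicit in the statement); the second coordinate of $s(\xv(\beta_0))$ equals $d(\beta_0)$, matching automatically. A one-line induction on $l$ then yields the general formula. Equivalently, one observes that $\xv(\beta)$ parametrizes the constraint ellipse as $(z,\vphi)=2\sqrt{c}\para{\sin(\beta+\theta),\cos(\beta+\theta)}$, so applying $R_{2\theta}$ shifts the argument by $\pm 2\theta$ in each step.

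For (iii), solving $s(\xv)=\xv$ componentwise, the second coordinate forces $\dt=d$, and the first coordinate then reduces to $4d\sin^2\theta=4\mu\sin^2\theta$. Since $\sin\theta\neq 0$ on $(0,\pi/4)$, this forces $d=\mu$, so $\xv=(\mu,\mu)$; the converse is immediate. Geometrically, $(\mu,\mu)$ is the unique preimage of the origin under the coordinate change, which is the unique fixed point of $R_{2\theta}$. For (iv), the iterate $s^n$ corresponds to $R_{2\theta}^n=R_{2n\theta}$; a nontrivial rotation about the origin fixes only its center, hence $R_{2n\theta}$ fixes a nonzero point iff $2n\theta\in 2\pi\integers$, i.e. $q^n=e^{2in\theta}=1$. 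Since the coordinate change carries $(\mu,\mu)$ to the origin, the hypothesis $\xv\neq(\mu,\mu)$ is precisely the nonzero-point condition in $(z,\vphi)$-coordinates, and the equivalence follows. The only technical care needed is in bookkeeping the sign convention in the parametrization for (ii) and checking the elementary trigonometric identities; once the rotation picture is in hand, the remaining assertions are essentially formal.
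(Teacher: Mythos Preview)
Your proposal is correct. The paper in fact states this proposition without proof (it is introduced as ``some basic facts about $s$''), so there is no argument in the paper to compare against; your approach via the rotation representation in $(z,\vphi)$-coordinates, which the paper sets up immediately before the proposition, is the intended and natural one. One small remark on part (ii): carrying out the computation explicitly gives $s\paraa{\xv(\beta_0)}=\xv(\beta_0-2\theta)$ rather than $\xv(\beta_0+2\theta)$ (equivalently, the rotation $R_{2\theta}$ sends $(\sin\alpha,\cos\alpha)$ to $(\sin(\alpha-2\theta),\cos(\alpha-2\theta))$), so the sign in the paper's formula is off by a convention; you already flagged this, and it does not affect parts (i), (iii), (iv) or any subsequent use of the proposition.
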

\noindent From these considerations one realizes that it will be
important to keep track of the pairs $(i,j)$ for which $W_{ij}\neq 0$.
This leads us to a graph representation of the matrix $W$.

\subsection{Graph representation of matrices}

\noindent In this section we will introduce the directed graph of the matrix
$W$. See, e.g., \cite{FH94} for the standard terminology
concerning directed graphs.

\begin{definition}
  Let $G=(V,E)$ be a directed graph on $N$ vertices with vertex set
  $V=\{1,2,\ldots,N\}$ and edge set $E\subseteq V\times V$. We say that
  an $N\times N$ matrix $W$ is \emph{associated} to $G$ (or $G$ is
  associated to $W$) if it holds that $(ij)\in E\Leftrightarrow
  W_{ij}\neq 0$.
\end{definition}
\noindent Given an equation for $W$, we say that a graph $G$ is a
\emph{solution} if $G$ is associated to a matrix $W$, solving the
equation.  Needless to say, for a given solution $G$ there might exist
many different (matrix) solutions associated to $G$.  A graph with
several disconnected components is clearly associated to a matrix that
is a direct sum of matrices; hence, it suffices to consider connected
graphs. In the following, a solution will always refer to a
solution of \eqref{eq:WWd}.

Given a connected solution $G$, we note that given
the value of $\xv_i=(d_i,\dt_i)$, for any $i$, we can compute
$\xv_k=(d_k,\dt_k)$, for all $k$, using \eqref{eq:sdef}. Namely,
since $G$ is connected, we can always find a sequence of numbers
$i=i_1,i_2,\ldots,i_l=k$, such that $W_{i_ji_{j+1}}\neq 0$ or
$W_{i_{j+1}i_j}\neq 0$, which will give us $\xv_k=s^{m}(\xv_i)$, where
$m$ is the difference between the number of edges (in the path) directed \emph{from} $i$
and the number of edges directed \emph{towards} $i$.
\begin{proposition}
  Let $G=(V,E)$ be a connected non-degenerate solution. Then
  \begin{enumerate}[(i)]
  \item $G$ has no self-loops (i.e. $(ii)\notin E$),
    \item there is at most one edge between any pair of vertices.
  \end{enumerate}
\end{proposition}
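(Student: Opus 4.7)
The plan is to exploit the single dynamical rule derived just before the proposition: whenever $W_{ij}\neq 0$, equations \eqref{eq:WDcomp} together with $W\Dt = DW$ force $\xv_j = s(\xv_i)$, where $s$ is the rotation of the constraint ellipse defined in \eqref{eq:sdef}. Under this translation both parts of the statement become questions about fixed points of small iterates of $s$, and Proposition \ref{prop:sprop} already settles them.

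For (i), I would assume a self-loop $(ii)\in E$ and specialize the rule to $j=i$, which immediately gives $s(\xv_i)=\xv_i$. Proposition \ref{prop:sprop}(iii) then forces $\xv_i=(\mu,\mu)$. Substituting this point into the Casimir constraint \eqref{eq:MatrixCasimir} yields $4c=0$, contradicting the non-degeneracy hypothesis $c>0$.

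For (ii), I would assume that for some distinct $i,j$ both directed edges $(ij)$ and $(ji)$ lie in $E$. Applying the rule in each direction produces $\xv_j=s(\xv_i)$ and $\xv_i=s(\xv_j)$, hence $s^2(\xv_i)=\xv_i$. Because we restrict to hermitian representations with $0<\hbar<1$, the angle $\theta$ defined by $\hbar=\tan\theta$ lies in $(0,\pi/4)$, so $4\theta\in(0,\pi)$ and $q^2=e^{4i\theta}\neq 1$. Proposition \ref{prop:sprop}(iv) then forces $\xv_i=(\mu,\mu)$, giving once more the contradiction $c=0$.

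I do not anticipate any serious obstacle; the argument is essentially bookkeeping on top of Proposition \ref{prop:sprop}. The one point to be slightly careful about is the orientation convention for the rule $\xv_j=s(\xv_i)$ when both $W_{ij}$ and $W_{ji}$ are nonzero in part (ii), but this is legitimate because $s$ is a bijection by Proposition \ref{prop:sprop}(i), so reversing the edge simply replaces $s$ by $s^{-1}$ and the two relations still compose to $s^2(\xv_i)=\xv_i$.
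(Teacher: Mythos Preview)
Your proof is correct and follows essentially the same route as the paper: in both cases one forces some vertex to satisfy $\xv_i=(\mu,\mu)$ and then reads off $c=0$ from the Casimir constraint. The only cosmetic difference is that the paper, after locating one vertex with $\xv_i=(\mu,\mu)$, invokes connectedness to propagate this to all vertices before concluding $c=0$; you correctly observe that a single such vertex already contradicts \eqref{eq:MatrixCasimir}, so the propagation step is not needed.
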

\begin{proof}
  In both cases, assuming the opposite, it follows from
  \eqref{eq:WDcomp} that there exists an $i$ such that
  $d_i=\dt_i=\mu$. Since the graph is connected we will have
  $d_i=\dt_i=\mu$ for all $i$ ($(\mu,\mu)$ is indeed the fix-point of
  $s$), giving $c=0$. Hence, a non-degenerate solution will satisfy
  the two conditions above.
\end{proof}

\noindent Any finite directed graph has a directed cycle, which we
shall call \emph{loop}, or a directed path from a transmitter (i.e.
a vertex having no incoming edges) to a receiver (i.e. a vertex having
no outgoing edges), which we shall call \emph{string}.  The
existence of a loop or a string imposes restrictions on the
corresponding representations. From Proposition \ref{prop:sprop} we
immediately get:
\begin{proposition}\label{prop:graph_loop}
  Let $G$ be a non-degenerate solution containing a loop on $n$ vertices. Then $q^n=1$.
\end{proposition}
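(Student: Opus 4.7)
The plan is to exploit the map $s$ as a tracking device along the directed cycle. First, recall from the derivation of \eqref{eq:sdef} that whenever $W_{ij}\neq 0$, the pairs $\xv_i=(d_i,\dt_i)$ and $\xv_j=(d_j,\dt_j)$ satisfy $\xv_j=s(\xv_i)$. So the directedness of the graph encodes precisely one application of $s$ per edge.

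Next, unwind the loop. Let the cycle be $i_1\to i_2\to\cdots\to i_n\to i_1$ with all the matrix entries $W_{i_k i_{k+1}}$ (indices mod $n$) nonzero. Applying the observation above to each edge in turn gives $\xv_{i_{k+1}}=s(\xv_{i_k})$, and iterating around the full loop yields
\begin{equation*}
   \xv_{i_1}=s^n(\xv_{i_1}).
\end{equation*}
By Proposition \ref{prop:sprop}(iv), this forces either $\xv_{i_1}=(\mu,\mu)$ or $q^n=1$.

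The remaining — and really the only — point to check is that the non-degenerate hypothesis rules out the fixed point. If $\xv_{i_1}=(\mu,\mu)$, then substitution into the diagonal entry of the constraint \eqref{eq:MatrixCasimir} gives $0=4c$, contradicting $c>0$. Hence $\xv_{i_1}\neq(\mu,\mu)$, so $q^n=1$, as required.

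I do not anticipate any real obstacle: the proof is essentially a one-line consequence of the geometric interpretation of $s$ as a rotation on the constraint ellipse and of the fact that $(\mu,\mu)$ is the center of that ellipse (hence off of it whenever $c>0$). The only mild care needed is in tracking edge orientation correctly around the cycle so that one really applies $s$ (and not $s^{-1}$) at each step — which is guaranteed here because a loop is by definition a \emph{directed} cycle.
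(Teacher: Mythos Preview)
Your proof is correct and is precisely the argument the paper has in mind: the paper states that the result follows immediately from Proposition~\ref{prop:sprop} and gives no further details, so you have simply spelled out what that immediacy consists of. Your care in ruling out the fixed point $(\mu,\mu)$ via the non-degeneracy assumption $c>0$ is exactly right.
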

\begin{lemma}\label{lemma:tranrec_cond}
  Let $G$ be a solution. The vertex $i$ is a
  transmitter if and only if $\dt_i=0$. The vertex $i$ is a receiver
  if and only if $d_i=0$.
\end{lemma}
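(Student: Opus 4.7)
The plan is to unwind the definitions of $D=W\Wd$ and $\Dt=\Wd W$ at the level of diagonal matrix entries and then match the zero pattern with the graph-theoretic notions of transmitter and receiver.

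First I would compute, for a matrix $W$ associated to $G$, the diagonal entries
\[
d_i = (W\Wd)_{ii} = \sum_{j} W_{ij}\overline{W_{ij}} = \sum_{j} |W_{ij}|^2
\and
\dt_i = (\Wd W)_{ii} = \sum_{j} \overline{W_{ji}} W_{ji} = \sum_{j} |W_{ji}|^2.
\]
Each of these is a sum of non-negative real numbers, so $d_i=0$ holds if and only if $W_{ij}=0$ for every $j$, and $\dt_i=0$ holds if and only if $W_{ji}=0$ for every $j$.

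Next I would translate the zero conditions via the definition of the associated graph, where $(ij)\in E$ iff $W_{ij}\neq 0$. The condition $W_{ij}=0$ for all $j$ means $(ij)\notin E$ for all $j$, i.e.\ $i$ has no outgoing edges, which is exactly the definition of a receiver. Similarly, $W_{ji}=0$ for all $j$ means $(ji)\notin E$ for all $j$, i.e.\ $i$ has no incoming edges, which is exactly the definition of a transmitter. Combining with the first step yields both equivalences.

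There is essentially no obstacle here, since the lemma is a direct consequence of the hermitian structure $\Wd=V$ together with the convention for associating a graph to $W$; the only thing to be slightly careful about is keeping straight which of $d_i$, $\dt_i$ corresponds to outgoing versus incoming edges, given that $d_i$ is a row-norm of $W$ while $\dt_i$ is a column-norm of $W$.
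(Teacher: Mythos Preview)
Your proof is correct and follows essentially the same argument as the paper: compute the diagonal entries of $D$ and $\Dt$ as row- and column-norms of $W$, then use non-negativity to equate their vanishing with the absence of outgoing (resp.\ incoming) edges.
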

\begin{proof}
  Since $D=W\Wd$ and $\Dt=\Wd W$, we have
  \begin{align*}
    d_i &= \sum_k W_{ik}\Wb_{ik} = \sum_k |W_{ik}|^2\\
    \dt_i &= \sum_k \Wb_{ki}W_{ki} = \sum_k |W_{ki}|^2
  \end{align*}
  and it follows that $d_i=0$ if and only if $W_{ik}=0$ for all $k$,
  i.e. $i$ is a receiver. In the same way $\dt_i=0$ if and only if
  $W_{ki}=0$ for all $k$, i.e. $i$ is a transmitter.
\end{proof}
\noindent Next we prove that if $G$ is a solution, then $G$ can not
contain both a string and a loop.
\begin{lemma}\label{lemma:loop_or_string}
  Let $G$ be a non-degenerate connected solution and assume that $G$
  has a transmitter or a receiver. Then $G$ has no loop and therefore
  there exists a string.
\end{lemma}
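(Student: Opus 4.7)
The plan is a proof by contradiction. Assume $G$ is non-degenerate, connected, contains a transmitter $t$ (the receiver case being symmetric), and also contains a loop $C$ of length $n$. By Proposition \ref{prop:graph_loop}, the existence of $C$ forces $q^n = 1$; let $n_0$ be the smallest positive integer with $q^{n_0} = 1$, so $n_0 \mid n$. Since $G$ is connected, every vertex is linked to a loop vertex by some undirected path in $G$, and traversing this path amounts to applying an integer power of $s$ (forward edges contributing $+1$, backward edges $-1$); together with Proposition \ref{prop:sprop}, this shows that all $\xv_v$ lie on a single $s$-orbit $\{p_0, \ldots, p_{n_0-1}\}$ of size $n_0$.

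The geometric observation I would then exploit is visible from the formula $s(d,\dt) = (4\mu\sin^2\theta + 2d\cos 2\theta - \dt,\, d)$: the second component of $s(p)$ equals the first component of $p$. In terms of the orbit, and with indices read modulo $n_0$, this reads $\dt_l = d_{l-1}$. Writing $\xv_t = p_{l_t}$, the transmitter condition $\dt_t = 0$ from Lemma \ref{lemma:tranrec_cond} therefore forces $d_{l_t-1} = 0$, i.e., the orbit point one $s$-step before $p_{l_t}$ has vanishing first coordinate.

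To reach the contradiction I use that traversing $C$ applies $s$ once per edge, and since $n$ is a multiple of $n_0$, the loop visits every orbit point exactly $n/n_0 \geq 1$ times. Picking some $v^* \in C$ with $\xv_{v^*} = p_{l_t-1}$, we get $d_{v^*} = 0$, so Lemma \ref{lemma:tranrec_cond} identifies $v^*$ as a receiver with no out-edges -- contradicting the fact that $v^*$, as a vertex of the directed cycle $C$, must point to its successor. Hence $G$ has no loop; the receiver case is handled symmetrically using that the second coordinate of $s(p)$ equals the first of $p$, which forces a phantom transmitter on $C$. Once loops are excluded, $G$ is a finite directed acyclic graph, so greedily following out-edges from any transmitter must terminate at a vertex with no out-edges, producing the desired string.

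The only conceptually non-trivial step I foresee is pinning down the correct geometric property of $s$ that converts the algebraic condition at the transmitter ($\dt_t = 0$) into an algebraic condition forced at some loop vertex ($d_{v^*} = 0$). Once one observes that $s$ simply shifts the $d$-coordinate into the $\dt$-slot, the remaining ingredients -- single orbit from connectedness, full orbit coverage from $n_0 \mid n$, and the transmitter/receiver dichotomy of Lemma \ref{lemma:tranrec_cond} -- combine routinely.
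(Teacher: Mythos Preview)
Your argument is correct and follows the same strategy as the paper: the loop forces $q^n=1$, connectedness places all $\xv_v$ on a single finite $s$-orbit which the loop then covers completely, and Lemma~\ref{lemma:tranrec_cond} yields the contradiction. The paper is marginally more direct: since the loop already visits every orbit point, some loop vertex $k$ has $\xv_k=\xv_t$ and hence $\dt_k=0$, making $k$ itself a transmitter sitting on a loop; your detour through the shift identity $\dt_l=d_{l-1}$ to locate a \emph{receiver} at $p_{l_t-1}$ is therefore unnecessary, though not wrong.
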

\begin{proof}
  Let us prove the case when a transmitter exists. Let us denote the
  transmitter by $1\in V$, and by Lemma \ref{lemma:tranrec_cond} we
  have $\xv_1=(a,0)$, for some $a>0$. Assume that there exists a loop
  and let $i$ be a vertex in the loop. Since $G$ is connected
  there exists an integer $i$ such that $\xv_i = s^i(\xv_1)$. Let $l$
  be the number of vertices in the loop. From Proposition
  \ref{prop:graph_loop} we know that $q^l=1$, which means that there
  is at most $l$ different values of $\xv_k$ in the graph, and all
  values are assumed by vertices in the loop. In particular this means
  that there exists a vertex $k$ in the loop, such that $\xv_k=\xv_1$.
  But this implies, by Lemma \ref{lemma:tranrec_cond}, that $k$ is a
  transmitter, which contradicts the fact that $k$ is part of a loop.
  Hence, if a transmitter exists, there exists no loop and therefore
  there must exist a string.
\end{proof}
\noindent The above result suggests to introduce the concept of
\emph{loop representations} and \emph{string representations}, since
all representations are associated to graphs that have either a loop
or a string.

Let us now prove a theorem providing the general structure of the
representations.

\begin{theorem}\label{thm:solution}
  Let $\phi$ be an $N$-dimensional non-degenerate connected hermitian
  representation of $\Cmuh$ with $\phi(\Ch)=4c\mid$.  Then there
  exists a positive integer $k$ dividing $N$, a unitary $N\times N$
  matrix $T$, unitary $N/k\times N/k$ matrices $U_0,\ldots,U_{k-1}$
  and $\beta,\et_0,\ldots,\et_{k-1}\in\reals$ with
  $\et_1,\ldots,\et_{k-1}>0$, such that
  \begin{align}
    T\phi(W)T^\dagger &= 
    \begin{pmatrix}
      0               &  \sqrt{\et_1}\,U_1  & 0                & \cdots & 0 \\
      0               &  0                & \sqrt{\et_2}\,U_2  & \cdots & 0 \\
      \vdots          & \vdots            & \ddots           & \ddots & \vdots\\
      0               & 0                 & \cdots           & 0 &  \sqrt{\et_{k-1}}\,U_{k-1} \\
      \sqrt{\et_0}\,U_0 & 0                 & \cdots           & 0 & 0
    \end{pmatrix}\\
    \et_l &= \sqrt{c}\,\bracket{\frac{\mu}{\sqrt{c}} 
      + \frac{\cos(2l\theta+\beta)}{\cos\theta}}.
  \end{align}
\end{theorem}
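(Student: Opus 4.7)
The plan is to exploit the commuting pair $(D,\Dt)$ provided by Proposition \ref{prop:Casimir} together with the loop/string dichotomy of Lemma \ref{lemma:loop_or_string} to pin down the block structure of $\phi(W)$, and then to identify the scalar on each block through the parametrization given by Proposition \ref{prop:sprop}(ii).

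First, I would simultaneously diagonalize $D$ and $\Dt$, which is possible because they are hermitian and commute. In this basis the pairs $\xv_i=(d_i,\dt_i)$ lie on the ellipse \eqref{eq:MatrixCasimir}, and whenever $W_{ij}\neq 0$ equations \eqref{eq:WDcomp} together with $W\Dt=DW$ force $\xv_j=s(\xv_i)$. I then partition the vertex set into classes $V_l:=\{i:\xv_i=\xv_l\}$ indexed by the distinct values that arise, reordering the basis so that $V_0,V_1,\ldots$ appear consecutively. Lemma \ref{lemma:loop_or_string} together with Proposition \ref{prop:graph_loop} shows that these distinct values form a single $s$-orbit of some finite length $k$: either a genuine cycle (with $q^k=1$) or an open chain running from a transmitter ($\dt_0=0$) to a receiver ($d_{k-1}=0$). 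With a suitable choice of reference angle $\beta$, Proposition \ref{prop:sprop}(ii) then gives $\dt_l=\sqrt{c}\bigl[\mu/\sqrt{c}+\cos(2l\theta+\beta)/\cos\theta\bigr]$, which is exactly the target expression for $\et_l$.

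In the reordered basis, the implication $W_{ij}\neq 0\Rightarrow\xv_j=s(\xv_i)$ forces $W$ to have only the blocks $A_l:=W_{V_l,V_{l+1\bmod k}}$ possibly nonzero, with $A_{k-1}=0$ in the string case; this is already the announced sparsity pattern. A direct blockwise computation yields $A_l A_l^\dagger=d_l$ times the identity on $V_l$ and $A_l^\dagger A_l=\dt_{l+1\bmod k}$ times the identity on $V_{l+1\bmod k}$. Whenever the scalar is positive, $A_l$ is a scaled left and right isometry, which both forces $|V_l|=|V_{l+1\bmod k}|$ and provides a polar-type decomposition $A_l=\sqrt{\et_{l+1\bmod k}}\,U_{l+1\bmod k}$ with $U_{l+1\bmod k}$ unitary and $\et_l:=\dt_l$. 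In the loop case every block has this form; in the string case only the wraparound block vanishes (corresponding to $\et_0=d_{k-1}=0$), and the remaining $k-1$ isometry relations still chain the cardinalities together, so $|V_l|=N/k$ in every case and $k$ divides $N$.

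The main obstacle I anticipate is unifying the loop and string pictures into the single cyclic statement of the theorem: in the string case the bottom-left block genuinely vanishes, so propagating the equality of cardinalities across that missing block relies on the connectedness hypothesis rather than on a scaled-isometry argument. Once the groupings, cardinalities and scalars are assembled, the unitary $T$ is just the composition of the simultaneous diagonalization of $(D,\Dt)$ with the permutation that stacks the classes $V_0,\ldots,V_{k-1}$, and reading off $T\phi(W)T^\dagger$ in the announced form is then immediate.
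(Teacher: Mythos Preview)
Your proposal is correct and follows essentially the same route as the paper's proof: simultaneously diagonalize $(D,\Dt)$, group the basis vectors by their common value $\xv_i$, use connectedness and the relation $W_{ij}\neq 0\Rightarrow\xv_j=s(\xv_i)$ to arrange the distinct values as a single $s$-chain $\xh_0,\ldots,\xh_{k-1}$, read off the block-cyclic shape, and extract the block sizes and unitaries from $A_lA_l^\dagger=A_l^\dagger A_l=\et_{l+1}\mid$. Your ``anticipated obstacle'' is not one: in the string case the $k-1$ positive blocks already yield the linear chain of equalities $|V_0|=|V_1|=\cdots=|V_{k-1}|$, so nothing has to be carried across the vanishing wraparound block --- connectedness is used only earlier, to guarantee that the distinct $\xv$-values form a single $s$-chain (this, incidentally, is the actual content you need there, rather than Lemma~\ref{lemma:loop_or_string} or Proposition~\ref{prop:graph_loop}).
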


\begin{proof}
  Let $U$ be a unitary $N\times N$ matrix such that $UDU^\dagger$ and $U\Dt
  U^\dagger$ are diagonal, set $\Wh=U\phi(W)U^\dagger$ and let $G$ be
  the graph associated to $\Wh$. Define $\{\xh_0,\ldots,\xh_{k-1}\}$ to
  be the set of pairwise different vectors out of the set
  $\{\xv_1,\xv_2,\ldots,\xv_N\}$, such that $\xh_{i+1}=s\para{\xh_i}$ for
  $i=0,\ldots,k-2$ (which is always possible since $G$ is connected),
  and write $\xh_i=(e_i,\et_i)$. We note that if $G$ has a
  transmitter, it must necessarily correspond to the vector $\xh_0$,
  in which case $\et_0=0$. In particular this means that no vertex
  corresponding to $\xh_i$, for $i>0$, can be a transmitter and hence,
  by Lemma \ref{lemma:tranrec_cond}, $\et_1,\ldots,\et_{k-1}>0$. Now,
  define
  \begin{align*}
    V_i = \{j\in V\st \xv_j=\xh_i\}\qquad i=0,\ldots,k-1,
  \end{align*}
  and set $l_i=|V_i|$. Since $\xh_{i+1}=s(\xh_i)$, a necessary
  condition for $(ij)\in E$ is that $j=i+1$. This implies that there
  exists a permutation $\sigma\in S_N$ (permuting vertices to give the order $V_0,\ldots,V_{k-1}$) such that
  \begin{align*}
    W' := \sigma \Wh\sigma^\dagger = 
    \begin{pmatrix}
      0               &  W_1  & 0                & \cdots & 0 \\
      0               &  0                & W_2  & \cdots & 0 \\
      \vdots          & \vdots            & \ddots           & \ddots & \vdots\\
      0               & 0                 & \cdots           & 0 &  W_{k-1} \\
      W_0 & 0                 & \cdots           & 0 & 0
    \end{pmatrix}    
  \end{align*}
  where $W_i$ is a $l_{i-1}\times l_i$ matrix (counting indices
  modulo $k$).  In this basis we get
  \begin{align*}
    D &= \diag(\underbrace{e_0,\ldots,e_0}_{l_0},\ldots,\underbrace{e_{k-1},\ldots,e_{k-1}}_{l_{k-1}})=
    W'W'^\dagger=\diag(W_1W_1^\dagger,\ldots,W_{k-1}W_{k-1}^\dagger,W_0W_0^\dagger)\\
    \Dt &= \diag(\underbrace{\et_0,\ldots,\et_0}_{l_0},\ldots,\underbrace{\et_{k-1},\ldots,\et_{k-1}}_{l_{k-1}})= 
    W'^\dagger W'=\diag(W_0^\dagger W_0,W_1^\dagger W_1,\ldots,W_{k-1}^\dagger W_{k-1}),
  \end{align*}
  which gives $W_i\Wd_i=e_{i-1}\mid_{l_{i-1}}$ and $\Wd_i W_i =
  \et_i\mid_{l_i}$.  Since $\xh_{i+1}=s(\xh_i)$ we know that
  $\et_{i+1}=e_i$, which implies that $W_i\Wd_i=\et_i\mid_{i-1}$ for
  $i=1,\ldots,k-1$. Any matrix satisfying such conditions must be a
  square matrix, i.e.  $l_i=l_{i-1}$ for $i=1,\ldots,k-1$. Hence,
  $W_i$ is a square matrix of dimension $N/k$, and there exists a
  unitary matrix $U_i$ such that $W_i=\sqrt{\et_i}U_i$. Moreover, we
  take $T$ to be the unitary $N\times N$ matrix $\sigma U$.  Finally,
  since every point $\xh_i=(e_i,\et_i)$ lies on the ellipse, there
  exists a $\beta_0$ such that $\xh_0$ corresponds to the point
  $\sqrt{c}\,(\cos(\beta_0+\theta),\sin(\beta_0+\theta))$ in the
  $(z,\vphi)$-plane, as in Proposition \ref{prop:sprop}. By defining
  $\beta=\beta_0+2\theta$, we get, since $\xh_{l+1}=s(\xh_l)$, that
  $\et_l =
  \sqrt{c}\,\bracket{\frac{\mu}{\sqrt{c}}+\frac{\cos(2l\theta+\beta)}{\cos\theta}}$.
\end{proof}

\noindent The above theorem proves the structure of any connected
representation, but the question of irreducibility still remains. We
will now prove that any representation is in fact equivalent to a
direct sum of representations where the $U_i$'s are $1\times1$-matrices.

\begin{lemma}\label{lemma:toral_equiv}
  Let $W_1$ and $W_2$ be matrices such that
  \begin{align*}
    W_1 = 
    \begin{pmatrix}
      0               &  w_1U_1  & 0                & \cdots & 0 \\
      0               &  0                & w_2U_2  & \cdots & 0 \\
      \vdots          & \vdots            & \ddots           & \ddots & \vdots\\
      0               & 0                 & \cdots           & 0 &  w_{n-1}U_{n-1} \\
      w_0U_0          & 0                 & \cdots           & 0 & 0
    \end{pmatrix}
    ;\,\,W_2 = 
    \begin{pmatrix}
      0               &  w_1\mid          & 0                & \cdots & 0 \\
      0               &  0                & w_2\mid          & \cdots & 0 \\
      \vdots          & \vdots            & \ddots           & \ddots & \vdots\\
      0               & 0                 & \cdots           & 0 &  w_{n-1}\mid \\
      w_0V            & 0                 & \cdots           & 0 & 0
    \end{pmatrix}
  \end{align*}
  where $U_0,\ldots,U_{n-1}$ are unitary matrices,
  $w_0,\ldots,w_{n-1}\in\complex$ and $V$ a diagonal
  matrix such that
  \begin{align*}
    SVS^\dagger = U_1U_2\cdots U_{n-1}U_0
  \end{align*}
  for some unitary matrix $S$. Then there exists a unitary matrix $P$
  such that
  \begin{align*}
    &W_1 = PW_2P^\dagger\quad\text{ and }\quad
    W_1^\dagger = PW_2^\dagger P^\dagger.
    %&W_1W_1^\dagger = W_2W_2^\dagger\quad\quad
    %W_1^\dagger W_1 = W_2^\dagger W_2.    
  \end{align*}
\end{lemma}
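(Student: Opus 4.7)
The plan is to look for $P$ in block-diagonal form with blocks of size $N/n\times N/n$ (where $N$ is the size of $W_1$ and $W_2$), so that the conjugation $W_2\mapsto PW_2P^\dagger$ permutes nothing but simply rotates each off-diagonal block independently. Write
\[
   P=\diag(P_1,P_2,\ldots,P_n)
\]
with each $P_i$ to be determined as a unitary $(N/n)\times(N/n)$ matrix. Then a block-by-block computation shows that $(PW_2P^\dagger)_{i,i+1}=w_iP_iP_{i+1}^\dagger$ for $i=1,\ldots,n-1$, and $(PW_2P^\dagger)_{n,1}=w_0P_nVP_1^\dagger$, while all other blocks remain zero. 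Demanding equality with $W_1$ therefore yields the conditions
\[
   P_iP_{i+1}^\dagger=U_i \quad (i=1,\ldots,n-1) \and P_nVP_1^\dagger=U_0.
\]

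The first batch of conditions amounts to the recursion $P_{i+1}=U_i^\dagger P_i$, which, once $P_1$ is chosen, uniquely determines $P_2,\ldots,P_n$ and automatically produces unitaries (products of unitaries). Unrolling gives
\[
   P_n=U_{n-1}^\dagger U_{n-2}^\dagger\cdots U_1^\dagger\, P_1,
\]
so substituting into the last condition converts it into
\[
   V=P_1^\dagger\paraa{U_1U_2\cdots U_{n-1}U_0}P_1.
\]
This is precisely the diagonalization hypothesis $SVS^\dagger=U_1U_2\cdots U_{n-1}U_0$ rewritten with $P_1=S$. Hence setting $P_1:=S$ and defining $P_2,\ldots,P_n$ by the above recursion produces a unitary $P$ with $PW_2P^\dagger=W_1$.

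Taking the adjoint of the identity $W_1=PW_2P^\dagger$ immediately gives $W_1^\dagger=PW_2^\dagger P^\dagger$, yielding the second claim. The construction is essentially forced: the only place where the hypothesis on $V$ enters is the ``cyclic'' compatibility condition coming from the corner block, and the main (only) obstacle is keeping the cyclic indexing consistent so that the product $U_1U_2\cdots U_{n-1}U_0$ appears in the correct order to match the given diagonalization of $V$.
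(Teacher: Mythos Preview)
Your proof is correct and follows essentially the same approach as the paper: you take $P$ block-diagonal with $P_1=S$ and $P_{l+1}=(U_1\cdots U_l)^\dagger S$, which is exactly the paper's choice up to an index shift (the paper labels the blocks $S,P_1,\ldots,P_{n-1}$). The only difference is that you derive the formula for the blocks from the recursion $P_{i+1}=U_i^\dagger P_i$ and explain why $P_1=S$ is forced by the corner condition, whereas the paper simply writes down the answer and leaves the verification to the reader.
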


\begin{proof}
  Let us define $P$ as $P=\diag(S,P_1,\ldots,P_{n-1})$ with
  \begin{align*}
    P_l = (U_1U_2\ldots U_l)^\dagger S
  \end{align*}
  for $l=1,\ldots,n-1$. Then one easily checks that $W_1 =
  PW_2P^\dagger$ and $W_1^\dagger = PW_2^\dagger P^\dagger$.
\end{proof}

\noindent Note that a graph associated to a matrix such as $W_2$,
consists of $n$ components, each being either a string ($\et_0=0$) or
a loop ($\et_0>0$).  Therefore, we have the following result.

\begin{theorem}\label{thm:equiv_rep}
  Let $\phi$ be a non-degenerate hermitian representation of $\Cmuh$.
  Then $\phi$ is unitarily equivalent to a representation whose
  associated graph is such that every connected component is either a
  string or a loop.
\end{theorem}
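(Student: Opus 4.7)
The plan is to chain together the three structural results already in place: Proposition \ref{prop:completely_reducible} yields complete reducibility, Theorem \ref{thm:solution} brings each connected summand into block-circulant normal form with unitary blocks $U_0,\ldots,U_{k-1}$, and Lemma \ref{lemma:toral_equiv} absorbs the unitary data of $U_1,\ldots,U_{k-1}$ into a single diagonal block. A final bookkeeping step then exhibits the resulting matrix as a disjoint union of elementary loops or strings.

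First, by Proposition \ref{prop:completely_reducible} it suffices to treat a single connected summand, since the graph of a direct sum of representations is the disjoint union of their graphs. For such a connected summand, Theorem \ref{thm:solution} conjugates $\phi(W)$ into the block-circulant form with $k\times k$ blocks of size $N/k$, the non-zero blocks being $\sqrt{\et_1}U_1,\ldots,\sqrt{\et_{k-1}}U_{k-1}$ above the diagonal and $\sqrt{\et_0}U_0$ in the lower-left corner. Next, diagonalise the unitary product $A:=U_1U_2\cdots U_{k-1}U_0$ as $A=SVS^\dagger$ with $V=\diag(v_1,\ldots,v_{N/k})$; when $\et_0=0$ the choice of $U_0$ is immaterial and we set $U_0=\mid$. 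Lemma \ref{lemma:toral_equiv} then produces a unitary $P$ such that the conjugate of $\phi(W)$ by $P$ has all above-diagonal blocks equal to $\sqrt{\et_i}\mid$ and lower-left block equal to the diagonal matrix $\sqrt{\et_0}V$.

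Finally, I read off the graph of this conjugated matrix directly. Labelling rows and columns by pairs $(b,r)$ with $b\in\{0,\ldots,k-1\}$ the block index and $r\in\{1,\ldots,N/k\}$ the position within a block, a non-zero entry at position $\paraa{(b,r),(b',r')}$ forces $r=r'$ together with $b'\equiv b+1\pmod k$; the wrap-around pair $(b,b')=(k-1,0)$ contributes an edge precisely when $\et_0>0$. Consequently the graph decomposes into $N/k$ disjoint connected components indexed by $r$, each being a cycle on $k$ vertices when $\et_0>0$ (a loop) or a directed path on $k$ vertices when $\et_0=0$ (a string), which is exactly the claimed decomposition.

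The main obstacle is purely bookkeeping: one has to check that the hypotheses of Lemma \ref{lemma:toral_equiv} are met even in the degenerate-at-$\et_0$ case where $\sqrt{\et_0}U_0$ vanishes (here any diagonal unitary $V$ trivially satisfies $SVS^\dagger=U_1\cdots U_{k-1}U_0$ multiplied by zero), and that the final relabelling identifying the $N/k$ elementary components is realised by a permutation matrix, hence by a genuine unitary. No new representation-theoretic input is required beyond the three results already cited.
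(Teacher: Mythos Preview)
Your proof is correct and follows essentially the same route as the paper: reduce to a connected piece, apply Theorem~\ref{thm:solution} to obtain the block-circulant form, then invoke Lemma~\ref{lemma:toral_equiv} to replace the off-diagonal unitary blocks by identities and the corner block by a diagonal unitary, after which the graph visibly splits into $N/k$ copies of a single loop ($\et_0>0$) or string ($\et_0=0$). The paper's own argument is just the one-line observation preceding the theorem that the graph of $W_2$ in Lemma~\ref{lemma:toral_equiv} already has this form; you have simply spelled out the bookkeeping. One cosmetic point: invoking Proposition~\ref{prop:completely_reducible} for the reduction step is heavier than needed---the paper reduces to the connected case directly by noting that a disconnected graph corresponds to a block-diagonal $W$---but your route is not incorrect, since an irreducible summand necessarily has a connected graph.
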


\noindent The existence of strings or loops will depend on the ratio
$\mu/\sqrt{c}$, and therefore we split all connected representations
of $\Cmuh$ into three subsets, in correspondence with the original
surface described by the polynomial $C(x,y,z)$:
\begin{flushleft}
\begin{tabular}{lll}
  (a) & $-1<\mu/\sqrt{c}\leq 1\quad$ & -- Spherical representations\\
  (b) & $1<\mu/\sqrt{c}\leq 1/\cos\theta\quad$ & -- Critical toral representations\\
  (c) & $1/\cos\theta<\mu/\sqrt{c}\quad$ & -- Toral representations. 
\end{tabular}
\end{flushleft}

\subsection{Toral representations}

\begin{figure}[h]
  \centering
  \includegraphics[height=4cm]{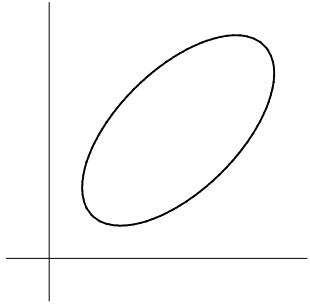}
  \caption{The constraint ellipse of a Toral representation.}
  \label{fig:toral_ellipse}
\end{figure}

\noindent For $\mu/\sqrt{c}>1/\cos\theta$ the constraint ellipse lies
entirely in the region where both $d$ and $\dt$ are strictly positive,
e.g. as in Figure \ref{fig:toral_ellipse}.  In particular this
implies, by Lemma \ref{lemma:tranrec_cond}, that a graph associated to
a toral representation can not have any transmitters or receivers.
Hence, it must have a loop, and by Proposition \ref{prop:graph_loop},
there exists an integer $k$ such that $q^k=1$. We note that the
restriction $0<\theta<\pi/4$ necessarily gives $k\geq 5$.
\begin{theorem}\label{thm:toral_rep}
  Assume that $\mu/\sqrt{c}>1/\cos\theta$ and let $k$ be a
  positive integer such that $q^k=1$. Furthermore, let
  $U_0,\ldots,U_{k-1}$ be unitary matrices of dimension $N$ and let
  $\beta\in\reals$. Then $\phi$ is an $N\cdot k$ dimensional hermitian
  toral representation of $\Cmuh$, with $\phi(\Ch)=4c\mid$, if
  \begin{equation}
    \phi(W) = 
    \begin{pmatrix}
      0               &  \sqrt{\et_1}\,U_1  & 0                & \cdots & 0 \\
      0               &  0                & \sqrt{\et_2}\,U_2  & \cdots & 0 \\
      \vdots          & \vdots            & \ddots           & \ddots & \vdots\\
      0               & 0                 & \cdots           & 0 &  \sqrt{\et_{k-1}}\,U_{k-1} \\
      \sqrt{\et_0}\,U_0 & 0                 & \cdots           & 0 & 0
    \end{pmatrix}
  \end{equation} 
  and
  \begin{equation}
    \et_l = \sqrt{c}\bracket{\frac{\mu}{\sqrt{c}}+\frac{\cos(2l\theta+\beta)}{\cos\theta}}.
  \end{equation}
\end{theorem}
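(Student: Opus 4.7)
The plan is to verify directly that the proposed block matrix $\phi(W)$, together with $\phi(V)=\phi(W)^\dagger$, satisfies the two defining relations \eqref{eq:W}, \eqref{eq:V} and the Casimir normalisation $\phi(\Ch)=4c\mid$. The strategy is to let the unitarity of the $U_i$'s and the cyclic block structure reduce every matrix identity to a scalar trigonometric identity in the $\et_l$'s.

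First I would compute $D=\phi(W)\phi(V)$ and $\Dt=\phi(V)\phi(W)$. Since each $U_i$ is unitary, the products $WW^\dagger$ and $W^\dagger W$ collapse to scalar blocks; reading off the cyclic shift one obtains block-diagonal matrices whose diagonal entries are $(\et_1,\ldots,\et_{k-1},\et_0)$ and $(\et_0,\et_1,\ldots,\et_{k-1})$, respectively. The hypothesis $q^k=1$, i.e.\ $2k\theta\in 2\pi\Zb$, guarantees that $\et_{l+k}=\et_l$ so the labels close up mod $k$. The identity $\phi(\Ch)=4c\mid$ then reduces, block by block, to the scalar equation
\[
(\et_{i+1}+\et_i-2\mu)^2+\frac{1}{\h^2}(\et_{i+1}-\et_i)^2=4c,
\]
which follows immediately from sum-to-product identities applied to the explicit formula for $\et_l$, together with $\h=\tan\theta$.

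For \eqref{eq:W} the key observation is that
\[
W^2V=WD,\qquad VW^2=\Dt W,\qquad WVW=DW,
\]
are all block-cyclic with exactly the pattern of $W$ itself. Evaluating at block position $(i,i+1)$, every term is a scalar multiple of $\sqrt{\et_{i+1}}\,U_{i+1}$, so the matrix equation collapses to the scalar three-term recursion
\[
(\et_{i+2}+\et_i)(1+\h^2)=4\mu\h^2+2(1-\h^2)\et_{i+1}.
\]
After the substitution $\et_l=\mu+f_l$ this becomes the homogeneous recursion $f_{i+2}-2\cos(2\theta)\,f_{i+1}+f_i=0$, whose characteristic roots are $e^{\pm2i\theta}$; the explicit form for $\et_l$ given in the theorem is its general solution, so the recursion holds. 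The closure of the recursion around the cycle through block $(k-1,0)$ is guaranteed precisely by $q^k=1$.

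Finally, \eqref{eq:V} is the Hermitian adjoint of \eqref{eq:W} and the parameters $\h,\mu$ are real, so it is automatic once \eqref{eq:W} has been established. The main technical nuisance is the bookkeeping of the unitary factors: in expressions such as $W^2V$ one must see that the accumulated products of $U_i$'s telescope back to leave just a scalar multiple of the single $U_{i+1}$ sitting in $W$ at block $(i,i+1)$. This is exactly what unitarity of each $U_i$ ensures, and it is also the reason why no compatibility constraints on the individual $U_i$'s are imposed by the defining relations.
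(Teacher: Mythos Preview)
Your proposal is correct. The paper does not give an explicit proof of this theorem; it is stated as a direct consequence of the preceding structural analysis. Your verification is exactly the computation the paper tacitly relies on, and it mirrors the short proof the paper does write out for the analogous spherical case (Theorem~\ref{thm:sphere_rep}), where the check is phrased as ``$s(\et_l,\et_{l-1})=(\et_{l+1},\et_l)$'' via the rotation map~$s$. Your three-term recursion $(\et_{i+2}+\et_i)(1+\h^2)=4\mu\h^2+2(1-\h^2)\et_{i+1}$ is precisely the component form of that statement, rewritten using $\cos 2\theta=(1-\h^2)/(1+\h^2)$; the two formulations are equivalent. The only small point you leave implicit is that the hypothesis $\mu/\sqrt{c}>1/\cos\theta$ forces every $\et_l>0$, so the square roots are well defined and the representation is genuinely toral---you might state this explicitly, as the paper does in the spherical analogue.
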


\begin{definition}
  We define a \emph{single loop representation} $\phi_L$ of $\Cmuh$ to
  be a toral representation, as in Theorem \ref{thm:toral_rep}, with
  $U_i$ chosen to be $1\times 1$ matrices and $k$ to be the smallest
  positive integer such that $q^k=1$.
\end{definition}

\noindent As a simple corollary to Theorem \ref{thm:equiv_rep} we obtain

\begin{corollary}
  Let $\phi$ be a toral representation of $\Cmuh$. Then
  $\phi$ is unitarily equivalent to a direct sum of single loop representations.
\end{corollary}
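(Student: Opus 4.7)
The plan is to successively reduce an arbitrary toral representation $\phi$, via unitary equivalence, until its associated graph consists of disjoint $k$-cycles, each realizing a single loop representation.

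First, I would invoke Theorem \ref{thm:equiv_rep} to assume the graph of $\phi$ is a disjoint union of strings and loops. The toral hypothesis $\mu/\sqrt{c}>1/\cos\theta$ places the entire constraint ellipse in the strictly positive quadrant, so by Lemma \ref{lemma:tranrec_cond} no vertex is a transmitter or a receiver, and thus no strings can occur. This decomposes $\phi$ as a direct sum of \emph{connected} toral representations and reduces the statement to the connected case.

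Next, for a connected summand $\phi_0$ on $N_0$ dimensions I would apply Theorem \ref{thm:solution} to put $\phi_0(W)$ into the block-cyclic form with integer $k\mid N_0$, unitaries $U_0,\ldots,U_{k-1}$ of size $N_0/k$, and eigenvalue parameters $\et_l$. The important point here is that the vectors $\xh_0,\ldots,\xh_{k-1}$ appearing in the proof of that theorem are the distinct points of a single $s$-orbit on the constraint ellipse; Proposition \ref{prop:sprop}(iv) therefore forces $k$ to be the \emph{smallest} positive integer with $q^k=1$, which is precisely the $k$ required by the definition of a single loop representation.

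The main step is then Lemma \ref{lemma:toral_equiv}. The product $U_1U_2\cdots U_{k-1}U_0$ is unitary, so it admits a spectral decomposition $SVS^{\dagger}$ with $V=\diag(\lambda_1,\ldots,\lambda_{N_0/k})$ diagonal and unitary. The lemma then supplies a unitary $P$ conjugating $\phi_0(W)$ to the normal form in which $U_1,\ldots,U_{k-1}$ are replaced by identity matrices and $U_0$ by $V$. A single permutation that interlaces the basis --- grouping, for each $j=1,\ldots,N_0/k$, the $j$th coordinate of each of the $k$ blocks of size $N_0/k$ --- turns this matrix into a block-diagonal matrix with $N_0/k$ blocks of size $k$, the $j$th block being exactly the single loop representation with $U_0=\lambda_j$, $U_1=\cdots=U_{k-1}=1$, and the same $\et_l$'s. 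Summing over the original connected components yields the corollary. I expect the only mildly delicate point to be this final interlacing permutation, together with the explicit verification that each $k\times k$ summand matches the definition of a single loop representation; everything else is a direct application of the structural results already established.
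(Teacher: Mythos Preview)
Your argument is correct and follows the same route as the paper, which simply records the result as an immediate corollary of Theorem~\ref{thm:equiv_rep} together with the observation that the toral hypothesis rules out strings.

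One remark on economy: your first three steps already invoke Theorem~\ref{thm:equiv_rep}, whose proof is precisely Theorem~\ref{thm:solution} followed by Lemma~\ref{lemma:toral_equiv}. The loops produced there automatically have length equal to the number $k$ of distinct $\xh$-values, and by your own observation via Proposition~\ref{prop:sprop}(iv) this $k$ is the minimal positive integer with $q^k=1$. So the connected summands $\phi_0$ you obtain after step~3 are \emph{already} single loop representations, and re-running Theorem~\ref{thm:solution}, Lemma~\ref{lemma:toral_equiv}, and the interlacing permutation on each of them is redundant (the second pass would find $N_0=k$, $N_0/k=1$, and all $U_i$ one-dimensional). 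If instead you want to avoid looking inside the proof of Theorem~\ref{thm:equiv_rep}, you may drop steps~1--3 entirely: complete reducibility (Proposition~\ref{prop:completely_reducible}) reduces to the connected case, and then your steps~4--7 alone already do everything needed. Either way the key point the paper leaves implicit---that the $k$ of Theorem~\ref{thm:solution} is the minimal one required by the definition of a single loop representation---is exactly what you supply.
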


\begin{proposition}
  A single loop representation of $\Cmuh$ is irreducible.
\end{proposition}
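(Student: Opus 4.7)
The plan is to exhibit the cyclic-shift structure of a single loop representation and then show, via the invariant-subspace criterion, that no nontrivial subspace is stable under both $\phi(W)$ and $\phi(W)^\dagger$. In the single loop case, each $U_j$ in Theorem~\ref{thm:toral_rep} is a unimodular scalar $u_j$, and reading off the matrix one finds $\phi(W)\psi_j=\sqrt{\tilde e_{j-1}}\,u_{j-1}\psi_{j-1}$ for $j\geq 2$ and $\phi(W)\psi_1=\sqrt{\tilde e_0}\,u_0\psi_k$, where $\psi_1,\ldots,\psi_k$ denotes the standard basis. Since we are in the toral regime every $\tilde e_l$ is strictly positive, so $\phi(W)$ is a nonzero cyclic weighted shift on $\{\psi_1,\ldots,\psi_k\}$; in particular $\phi(W)^k$ is a nonzero scalar multiple of the identity.

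Next I would show that every elementary diagonal projector $E_{jj}$ lies in the algebra $\mathcal{A}$ generated by $\phi(W)$ and $\phi(W)^\dagger$. Note that $\phi(D)=\phi(W)\phi(W)^\dagger$ and $\phi(\tilde D)=\phi(W)^\dagger\phi(W)$ are simultaneously diagonal, and the $j$-th diagonal pair $\bigl(\phi(D)_{jj},\phi(\tilde D)_{jj}\bigr)$ agrees with one of the orbit points $\hat x_i=(e_i,\tilde e_i)$ on the constraint ellipse. Because $k$ is the \emph{minimal} positive integer with $q^k=1$, and the fixed point $(\mu,\mu)$ of $s$ lies strictly inside the ellipse for $c>0$ (so no $\hat x_i$ equals $(\mu,\mu)$), Proposition~\ref{prop:sprop}(iv) guarantees that the orbit $\hat x_0,\ldots,\hat x_{k-1}$ consists of $k$ pairwise distinct points. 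A two-variable Lagrange interpolation then produces polynomials $p_j(x,y)$ with $p_j(\hat x_i)=\delta_{ij}$, so $E_{jj}=p_j\bigl(\phi(D),\phi(\tilde D)\bigr)\in\mathcal{A}$.

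With the $E_{jj}$ in $\mathcal{A}$, let $V\subseteq\mathbb{C}^k$ be any nonzero $\mathcal{A}$-invariant subspace. Invariance under each $E_{jj}$ forces $V$ to be the span of a subset $S\subseteq\{\psi_1,\ldots,\psi_k\}$; invariance under $\phi(W)$ then forces $S$ to be closed under the cyclic shift $\psi_j\mapsto\psi_{j-1}$ (with indices mod $k$). The only such subsets are $\emptyset$ and the whole basis, so $V=\{0\}$ or $V=\mathbb{C}^k$, proving irreducibility. The main subtlety I anticipate is the interpolation step: for exceptional values of $\beta$ the individual coordinates $\tilde e_i$ may coincide, so one cannot separate the projectors using polynomials in $\phi(D)$ alone; however, it is precisely the distinctness of the \emph{pairs} $\hat x_i$ that makes two-variable interpolation succeed.
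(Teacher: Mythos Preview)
Your proof is correct, but it takes a genuinely different route from the paper's.

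The paper argues indirectly via complete reducibility (Proposition~\ref{prop:completely_reducible}): if a single loop representation $\phi_L$ of dimension $k$ were reducible, it would split as a direct sum, yielding a nontrivial hermitian toral representation of dimension $m<k$ with the same Casimir value. In the toral regime every connected solution has a loop (no transmitters or receivers), so Proposition~\ref{prop:graph_loop} would give $q^{n'}=1$ for some $n'\leq m<k$, contradicting the minimality of $k$ in the definition of a single loop representation.

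Your approach is a direct invariant-subspace argument: you use the minimality of $k$ only to guarantee that the orbit points $\hat x_0,\ldots,\hat x_{k-1}$ on the ellipse are pairwise distinct, then interpolate in $\phi(D),\phi(\tilde D)$ to place every diagonal projector $E_{jj}$ in the algebra, and finally let the weighted cyclic shift $\phi(W)$ force any coordinate subspace to be all of $\mathbb{C}^k$. This is a bit longer but more self-contained --- it does not appeal to complete reducibility or to the graph-theoretic analysis of smaller representations --- and it in fact shows that the algebra generated by $\phi(W),\phi(W)^\dagger$ is the full matrix algebra. Your closing remark about needing \emph{two}-variable interpolation (since individual $\tilde e_l$'s may coincide for special $\beta$) is well taken and is exactly what makes the argument work in all cases.
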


\begin{proof}
  Given a single loop representation $\phi_L$ of dimension $n$, it
  holds that $q^n=1$, and there exists no $n'<n$ such that $q^{n'}=1$,
  by definition. Now, assume that $\phi_L$ is reducible. Then, by
  Proposition \ref{prop:completely_reducible}, $\phi_L$ is equivalent
  to a direct sum of at least two representations. In particular, this
  means that there exists a toral representation of $\Cmuh$ of
  dimension $m<n$ which implies, by Proposition \ref{prop:graph_loop},
  that there exists an integer $n'<n$ such that $q^{n'}=1$. But this
  is impossible by the above argument. Hence, $\phi_L$ is irreducible.
\end{proof}

\noindent For two loop representations of the same dimension, it is not only the
value of the Casimir $\Ch$ that distinguishes them, but there is in
fact a whole set of inequivalent representations - parametrized by a
complex number.

\begin{definition}
  Let $\phi_L$ be a single loop representation in the notation of
  Theorem \ref{thm:toral_rep} with $U_l=e^{i\alpha_l}$. We define the
  \emph{index} $z(\phi_L)$ as the complex number
  \begin{align*}
    z(\phi_L) = \sqrt{\et_0\et_1\cdots\et_{k-1}}\,\,e^{i\gamma}
  \end{align*}
  with $\gamma=\alpha_0+\alpha_1+\cdots+\alpha_{k-1}$.
\end{definition}

\begin{lemma}\label{lemma:Alpermutation}
  Let $k,n$ be integers such that $\gcd(k,n)=1$ and define
  \begin{align*}
    A_l(\beta) = \cos\para{\beta+\frac{2\pi kl}{n}}
  \end{align*}
  for $l=0,1,\ldots,n-1$. Then there exists permutations
  $\sigma_+,\sigma_-\in S_n$ such that
  \begin{align*}
    A_{\sigma_+(l)}(\beta)=A_l(\beta+2\pi/n)\quad\text{ and }\quad
    A_{\sigma_-(l)}(\beta)=A_l(2\pi/n-\beta)
  \end{align*}
  for $l=0,1,\ldots,n-1$.
\end{lemma}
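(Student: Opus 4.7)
The plan is to construct both permutations explicitly by reducing the problem to modular arithmetic. The only input I need is that since $\gcd(k,n)=1$, the class of $k$ is invertible in $\mathbb{Z}/n\mathbb{Z}$; write $\bar{k}$ for a chosen integer with $k\bar{k}\equiv 1\pmod{n}$. Everything then reduces to matching angles modulo $2\pi$.

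For $\sigma_+$, the target identity $A_l(\beta+2\pi/n)=\cos\!\para{\beta+\tfrac{2\pi kl}{n}+\tfrac{2\pi}{n}}$ suggests I look for $\sigma_+(l)$ with $k\sigma_+(l)\equiv kl+1\pmod{n}$. Multiplying by $\bar{k}$ gives the explicit definition $\sigma_+(l):=(l+\bar{k})\bmod n$. This is a translation on $\mathbb{Z}/n\mathbb{Z}$, hence a bijection on $\{0,1,\ldots,n-1\}$, and substituting back (using $2\pi k\bar{k}/n\equiv 2\pi/n\pmod{2\pi}$) yields $A_{\sigma_+(l)}(\beta)=A_l(\beta+2\pi/n)$ directly.

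For $\sigma_-$ I would use the evenness of cosine to rewrite the target as $A_l(2\pi/n-\beta)=\cos\!\para{-\beta+\tfrac{2\pi}{n}+\tfrac{2\pi kl}{n}}=\cos\!\para{\beta-\tfrac{2\pi}{n}-\tfrac{2\pi kl}{n}}$, so I need $k\sigma_-(l)\equiv -1-kl\pmod{n}$. This gives the natural candidate $\sigma_-(l):=(-l-\bar{k})\bmod n$. Being an affine map $l\mapsto -l+c$ on $\mathbb{Z}/n\mathbb{Z}$ with invertible leading coefficient, it is again a bijection, and a one-line substitution verifies the required identity.

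There is really no obstacle here beyond the initial observation that $\bar{k}$ exists; once the two candidates are written down, the verifications are each a single calculation and the bijectivity of each map is visible from its formula. I would present the lemma in that order: invoke $\gcd(k,n)=1$ to produce $\bar{k}$, define $\sigma_+$ and $\sigma_-$ by the formulas above, note bijectivity, and close with the two direct computations.
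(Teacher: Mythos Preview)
Your proposal is correct and follows essentially the same approach as the paper: both produce $\sigma_+$ as the shift $l\mapsto l+\delta\pmod n$ where $\delta$ solves $k\delta\equiv 1\pmod n$ (your $\bar k$), invoking $\gcd(k,n)=1$ for solvability. The paper merely says the case of $\sigma_-$ is analogous, whereas you spell out the affine map $l\mapsto -l-\bar k\pmod n$ explicitly; this is a harmless elaboration rather than a different method.
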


\begin{proof}
  Let us prove the existence of $\sigma_+$; the proof that $\sigma_-$
  exists is analogous. We want to show that there exists a permutation
  $\sigma_+$ such that $A_{\sigma_+(l)}(\beta)=A_l(\beta+2\pi/n)$. Let
  us make an Ansatz for the permutation; namely, we take it to be a
  shift with $\sigma_+(l)=l+\delta\pmod{n}$ for some
  $\delta\in\integers$. We then have to show that there exists a
  $\delta$ such that
  \begin{align*}
    \cos\para{\beta+\frac{2\pi k(l+\delta)}{n}}=
    \cos\para{\beta+\frac{2\pi(kl+1)}{n}}.
  \end{align*}
  This holds if for some $m\in\integers$
  \begin{align*}
    &\beta+\frac{2\pi k(l+\delta)}{n}=
    \beta+\frac{2\pi(kl+1)}{n}+2\pi m\equivalent\\
    &k\delta-nm=1.
  \end{align*}
  Now, can we find $\delta$ such that this holds for some $m$? It is
  an elementary fact in number theory that such an equation has integer
  solutions for $\delta$ and $m$ if $\gcd(k,n)=1$. Hence, if we set
  $\sigma_+(l)=l+\delta\pmod{n}$, where $\delta$ is such a solution,
  then the argument above shows that
  $A_{\sigma_+(l)}(\beta)=A_l(\beta+2\pi/n)$.
\end{proof}

\begin{lemma}\label{lemma:fmono}
  Let $\theta=\pi k/n$ with $\gcd(k,n)=1$, and set
  \begin{align*}
    f(\beta) = \prod_{l=0}^{n-1}
    \bracket{\mu+\frac{\sqrt{c}\cos(2l\theta+\beta)}{\cos\theta}}.
  \end{align*}
  Then $f(\beta)=f(\beta+2\pi/n)$, $f(\beta)=f(2\pi/n-\beta)$ and if
  $\beta,\beta'\in[0,\pi/n]$ then $\beta\neq\beta'$ implies that
  $f(\beta)\neq f(\beta')$.
\end{lemma}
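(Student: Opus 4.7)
My plan is to reduce the entire lemma to a single closed-form expression for $f$: I will show that $f(\beta)$ is a nonconstant affine function of $\cos(n\beta)$, after which all three claims follow immediately. A parallel, more structural route would be to prove (1) and (2) first via Lemma \ref{lemma:Alpermutation} and deduce (3) from Fourier-degree considerations, but the explicit formula gives everything in one go.

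To derive the closed form, I use $\gcd(k,n) = 1$ to reindex the product: the map $l \mapsto kl \pmod n$ is a bijection on $\{0,\ldots,n-1\}$, so
$$f(\beta) = \prod_{m=0}^{n-1}\left[\mu + \frac{\sqrt{c}}{\cos\theta}\cos\left(\beta + \frac{2\pi m}{n}\right)\right].$$
Setting $z = e^{i\beta}$, $\omega = e^{2\pi i/n}$ and $B = \sqrt{c}/\cos\theta$, each factor can be rewritten as $\frac{B}{2z\omega^m}(z\omega^m - u_+)(z\omega^m - u_-)$, where $u_\pm$ are the two roots of $u^2 + (2\mu/B)u + 1 = 0$ and therefore satisfy $u_+ u_- = 1$. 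Taking the product over $m$ and applying the identity $\prod_{m=0}^{n-1}(w - z\omega^m) = w^n - z^n$ on each root-factor, together with $\prod_m \omega^m = (-1)^{n-1}$ and $u_+^n u_-^n = 1$, I expect the computation to collapse to
$$f(\beta) = \frac{(-1)^{n-1}B^n}{2^n}\bigl[\,2\cos(n\beta) - (u_+^n + u_-^n)\,\bigr].$$

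With this formula in hand all three claims drop out. For (1), $\cos\bigl(n(\beta+2\pi/n)\bigr) = \cos(n\beta + 2\pi) = \cos(n\beta)$. For (2), $\cos\bigl(n(2\pi/n - \beta)\bigr) = \cos(2\pi - n\beta) = \cos(n\beta)$. For (3), on $[0,\pi/n]$ the map $\beta \mapsto n\beta$ is a bijection onto $[0,\pi]$, on which $\cos$ is strictly monotone (hence injective); and the coefficient $(-1)^{n-1}B^n/2^{n-1}$ multiplying $\cos(n\beta)$ is nonzero because $c>0$ forces $B \neq 0$. Composing with an affine map of nonzero slope preserves injectivity, so $f$ is strictly monotone on $[0,\pi/n]$.

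The main obstacle is the sign bookkeeping in the closed-form derivation — tracking the factor $\prod_m \omega^m = (-1)^{n-1}$, the sign from $\prod_m(z\omega^m - u_\pm) = (-1)^n(u_\pm^n - z^n)$, and verifying that the coefficient of $\cos(n\beta)$ is genuinely nonzero. None of these are deep, but the algebra is easy to get wrong. As a safety net I would also verify the formula at a few special values of $\beta$ (for instance, at $\beta = 0$ and at the symmetry point $\beta = \pi/n$, where $\cos(n\beta) = \pm 1$) and cross-check against the symmetries (1) and (2), which act trivially on the $\cos(n\beta)$ factor.
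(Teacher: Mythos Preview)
Your proof is correct and arrives at the same closed form the paper obtains, but by a different route. The paper proceeds exactly along the ``parallel, more structural'' path you mention in passing: it first invokes Lemma~\ref{lemma:Alpermutation} to get the two symmetries, then uses the resulting $2\pi/n$-periodicity to write $f$ as a Fourier series $\sum_l a_l e^{iln\beta}$, and finally reads off from the product form (expanded in $q^le^{i\beta}$ and $q^{-l}e^{-i\beta}$) that only $a_{-1},a_0,a_1$ survive, with $a_{\pm1}=q^{\pm n(n-1)/2}/2^n$, yielding $(\sqrt{c}/\cos\theta)^{-n}f(\beta)=a_0+(-1/2)^{n-1}\cos n\beta$. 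Your argument instead reindexes via $l\mapsto kl\pmod n$, factors each bracket over $\complex$ as a quadratic in $z\omega^m$, and applies the roots-of-unity product identity $\prod_m(X-z\omega^m)=X^n-z^n$ directly. This is more elementary and self-contained---it needs neither Lemma~\ref{lemma:Alpermutation} nor the Fourier step---at the small cost of the sign bookkeeping you correctly flag. Both methods land on the identical coefficient $(-1)^{n-1}B^n/2^{n-1}$ in front of $\cos(n\beta)$, so the injectivity conclusion is the same.
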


\begin{proof}
  It follows directly from Lemma \ref{lemma:Alpermutation} that
  $f(\beta)=f(\beta+2\pi/n)=f(2\pi/n-\beta)$.

  Since $f$ is periodic, with period $2\pi/n$, it can be expanded in a Fourier series as
  \begin{align*}
    f(\beta)=\sum_{l=-\infty}^\infty a_le^{2\pi il\beta/(2\pi/n)}
    =\sum_{l=-\infty}^\infty a_le^{iln\beta}.
  \end{align*}
  Comparing the Fourier series with the original expression for $f$,
  and introducing $q=e^{2i\theta}$, we get
  \begin{align*}
    f(\beta)=\para{\frac{\sqrt{c}}{\cos\theta}}^n
    \prod_{l=0}^{n-1}\bracket{\frac{\mu\cos\theta}{\sqrt{c}}+
      \frac{1}{2}\para{q^le^{i\beta}+q^{-l}e^{-i\beta}}}
    =\sum_{l=-\infty}^\infty a_le^{inl\beta}.
  \end{align*}
  From this equality we deduce that there are only three non-zero
  coefficients in the Fourier series, namely $a_{-1},a_0,a_{1}$.
  Comparing both sides, we obtain
  \begin{align*}
    &a_{-1}=\frac{1}{2^n}q^{-n(n-1)/2}\\
    &a_{1}=\frac{1}{2^n}q^{n(n-1)/2},
  \end{align*}
  which implies that 
  \begin{align*}
    \para{\frac{\sqrt{c}}{\cos\theta}}^{-n}f(\beta)&=a_0+\frac{1}{2^n}q^{-n(n-1)/2}e^{-in\beta}
    +\frac{1}{2^n}q^{n(n-1)/2}e^{in\beta}\\
    &= a_0+\para{-\frac{1}{2}}^{n-1}\cos n\beta.
  \end{align*}
  From this it is clear that $f(\beta)\neq f(\beta')$ when
  $\beta\neq\beta'$ and $\beta,\beta'\in[0,\pi/n]$.
\end{proof}

\begin{proposition}
  Let $\phi_L$ and $\phi_L'$ be single loop representations of
  dimension $n$, such that $\phi_L(\Ch)=\phi_L'(\Ch)$. Then $\phi_L$
  and $\phi_L'$ are equivalent if and only if $z(\phi_L)=z(\phi_L')$.
\end{proposition}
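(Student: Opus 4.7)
The proof splits along the two implications of the ``iff''.

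For the forward direction (equivalence implies equal indices), I would compute $\det\phi_L(W)$ directly from the shape given in Theorem \ref{thm:toral_rep}. With $U_l=e^{i\alpha_l}$ the first column of $\phi_L(W)$ has a single nonzero entry $\sqrt{\et_0}\,e^{i\alpha_0}$ at row $n$; expansion along this column leaves a diagonal minor with entries $\sqrt{\et_l}\,e^{i\alpha_l}$ for $l=1,\ldots,n-1$, so
\[
  \det\phi_L(W)=(-1)^{n+1}\sqrt{\et_0\et_1\cdots\et_{n-1}}\,e^{i\gamma}=(-1)^{n+1}z(\phi_L).
\]
Since the determinant is invariant under conjugation, any unitary equivalence $\phi_L\sim\phi_L'$ forces $z(\phi_L)=z(\phi_L')$.

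For the converse, I would place both representations in canonical form via Lemma \ref{lemma:toral_equiv}, so that each is described by a pair $(\beta,\gamma)$: superdiagonal weights $\sqrt{\et_l(\beta)}$ together with a corner entry $\sqrt{\et_0(\beta)}\,e^{i\gamma}$. Equal Casimir value fixes $c$; equal $|z|^{2}$ gives $f(\beta)=f(\beta')$ with $f$ as in Lemma \ref{lemma:fmono}; and $\arg z=\arg z'$ gives $\gamma\equiv\gamma'\pmod{2\pi}$. Lemma \ref{lemma:fmono} then forces $\beta'$ to arise from $\beta$ either by a cyclic shift $\beta\mapsto\beta+2\pi j/n$ or by a reflection $\beta\mapsto 2\pi j/n-\beta$.

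The shift case is explicit: conjugation by a cyclic permutation matrix $P_s$ cyclically permutes the weights, and re-canonicalising via Lemma \ref{lemma:toral_equiv} leaves the total phase $\gamma$ unchanged. After such a shift one may assume $\beta,\beta'\in[0,\pi/n]$, on which $f$ is injective; then $\beta=\beta'$, the canonical matrices literally coincide, and the equivalence follows.

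The hardest step, in my view, is handling the reflection $\beta\mapsto 2\pi/n-\beta$: it preserves $f(\beta)$ and hence the index, but reverses the cyclic order of the weights around the loop, so no cyclic permutation alone can realise the equivalence. I expect it to be handled by exhibiting a unitary combining the basis reversal $e_i\mapsto e_{n+1-i}$ with a diagonal phase compensation, tuned so that the total corner phase $\gamma$ is preserved and the algebraic relations \eqref{eq:W}--\eqref{eq:V} are intertwined; this is where the bulk of the technical work lies.
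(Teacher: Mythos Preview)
Your forward direction via $\det\phi_L(W)$, and your handling of the cyclic--shift branch of the converse, are essentially the paper's argument: the paper uses the characteristic polynomial $\lambda^n-z(\phi_L)$ rather than just its constant term, and for the shift $\beta'=\beta+2\pi m/n$ it invokes the cyclic permutation $\sigma_+$ of Lemma~\ref{lemma:Alpermutation} followed by a diagonal unitary---exactly your cyclic conjugation plus re-canonicalisation via Lemma~\ref{lemma:toral_equiv}.

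The reflection branch $\beta'=2\pi m/n-\beta$, however, is a genuine gap, in your proposal and in the paper's proof alike. Your expected unitary cannot exist: conjugation by the reversal $e_i\mapsto e_{n+1-i}$ moves the nonzero entries of the loop matrix from positions $(l,l+1)$ to positions $(l,l-1)$, and diagonal conjugation preserves both the positions and the moduli of all entries, so the forward-cyclic pattern of $\phi_L'(W)$ is never recovered. The paper's proof has precisely this defect: the permutation $\sigma_-$ of Lemma~\ref{lemma:Alpermutation} is a reflection $l\mapsto -l+\delta'\pmod n$, so $\sigma_-\,\phi_L'(W)\,\sigma_-^{\dagger}$ again carries the reversed cycle and the promised ``easy'' diagonal fix-up is impossible. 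More is true: \emph{no} unitary does the job. The joint spectrum of the commuting hermitian pair $\paraa{D,\Dt}=\paraa{\phi(W)\phi(W)^\dagger,\,\phi(W)^\dagger\phi(W)}$ is a unitary invariant; for $\phi_L(\beta)$ it is $\{(\et_{l+1},\et_l):l\in\integers/n\integers\}$, whereas for $\phi_L(-\beta)$ with the same $\gamma$ it is $\{(\et_l,\et_{l+1}):l\}$. For generic $\beta$ these two sets of \emph{ordered} pairs differ (concretely, for $n=5$ one computes that $\operatorname{tr}\paraa{D^4\Dt}$ differs between the two representations by a nonzero multiple of $\sin 5\beta$). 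Thus $\phi_L(\beta)$ and $\phi_L(-\beta)$ share the same index $z$ yet are not equivalent. Your instinct that the reflection is where the real work lies was correct; it is in fact where the stated equivalence fails.
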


\begin{proof}
  Then characteristic equation of $\phi_L(W)$ is
  $\lambda^n-z(\phi_L)$. Therefore, a necessary condition for $\phi_L$
  and $\phi_L'$ to be equivalent is that $z(\phi_L)=z(\phi_L')$. Now,
  to prove the opposite implication, assume that
  $z(\phi_L)=z(\phi_L')$. Let us denote the $\beta$ in Theorem
  \ref{thm:toral_rep} by $\beta$ and $\beta'$ for $\phi_L$ and
  $\phi_L'$ respectively. The fact that $z(\phi_L)=z(\phi_L')$ gives
  directly $\gamma=\gamma'$, and in the notation of Lemma
  \ref{lemma:fmono}, we must have $f(\beta)=f(\beta')$. By the same
  Lemma, writing $\theta=\pi k/n$, this leaves us with three
  possibilities: Either $\beta'=\beta$, $\beta'=\beta+2\pi m/n$ or
  $\beta'=2\pi m/n-\beta$ for some $m\in\integers$.  In all three
  cases, by Lemma \ref{lemma:Alpermutation}, there exists a
  permutation $\sigma$ such that for
  $W''=\sigma\phi_L'(W)\sigma^\dagger$ it holds that $\et_l''=\et_l$.
  Then it is easy to construct a diagonal unitary matrix $P$ such that
  $\phi_L(W)=P\sigma\phi_L'(W)\sigma^\dagger P^\dagger$.
\end{proof}

\noindent Hence, for a given dimension $n$ and for a given value of the
Casimir, such that toral representations exist, the set of
inequivalent irreducible representations is parametrized by a complex
number $w$ such that $\pi/n\leq|w|\leq 2\pi/n$. We relate $w$ to a
single loop representation by setting $w=\beta e^{i\gamma}$.

\subsection{Spherical representations}

\noindent In contrast to the case of toral representations, we will
show that, in a spherical representation, there can not exist any
loops. The intuitive picture is that the part of the ellipse lying in
the region where either $d$ or $\dt$ is negative, is too large to skip
by a rotation through the map $s\,$; see, e.g. Figure
\ref{fig:ellipseex}.

By Lemma \ref{lemma:tranrec_cond}, we know that the $\xv$
corresponding to a transmitter or a receiver must lie on the $d$-axis
or the $\dt$-axis respectively. For this reason, let us calculate the
points where the ellipse crosses the axes.
\begin{lemma}\label{lemma:ellipsecrossing}
  Consider the ellipse $(x+y-2\mu)^2+(x-y)/\hbar^2=4c$. Then $x=0$
  implies $y=a_\pm$ and $y=0$ implies $x=a_\pm$ with
  \begin{equation}
    a_\pm 
    =2\sin\theta\bracketb{\mu\sin\theta\pm\sqrt{c-\mu^2\cos^2\theta}}
    = 2\sin^2\theta\bracket{\mu\pm\sqrt{\mu^2+\frac{c-\mu^2}{\sin^2\theta}}\,\,}
  \end{equation}
\end{lemma}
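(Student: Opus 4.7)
The plan is to treat this as a direct algebraic calculation: substitute $x=0$ into the ellipse equation to obtain a quadratic in $y$, solve it via the usual formula, and then exploit the $x\leftrightarrow y$ symmetry of the ellipse to deduce the case $y=0$ for free.

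First I would substitute $x=0$ into $(x+y-2\mu)^2 + (x-y)^2/\hbar^2 = 4c$ to get
\begin{equation*}
    (y-2\mu)^2 + \frac{y^2}{\hbar^2} = 4c.
\end{equation*}
Expanding and recalling $\hbar=\tan\theta$, so that $1 + 1/\hbar^2 = 1 + \cot^2\theta = 1/\sin^2\theta$, the quadratic becomes
\begin{equation*}
    y^2 - 4\mu\sin^2\theta\, y + 4(\mu^2 - c)\sin^2\theta = 0.
\end{equation*}
Applying the quadratic formula and simplifying the discriminant via $\mu^2\sin^2\theta - (\mu^2-c) = c - \mu^2\cos^2\theta$ yields
\begin{equation*}
    y = 2\mu\sin^2\theta \pm 2\sin\theta\,\sqrt{c - \mu^2\cos^2\theta} = a_\pm,
\end{equation*}
which is the first of the two claimed expressions.

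Next I would observe that the defining equation of the ellipse is invariant under the exchange $x\leftrightarrow y$, since both $(x+y-2\mu)^2$ and $(x-y)^2$ are symmetric in $x,y$. Hence the case $y=0$ produces exactly the same values for $x$, giving $x = a_\pm$.

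Finally, to obtain the second equivalent form of $a_\pm$, I would factor $2\sin\theta$ as $2\sin^2\theta \cdot (1/\sin\theta)$ inside the square root, using
\begin{equation*}
    \frac{c - \mu^2\cos^2\theta}{\sin^2\theta} = \mu^2 + \frac{c - \mu^2}{\sin^2\theta},
\end{equation*}
which turns $2\sin\theta\sqrt{c-\mu^2\cos^2\theta}$ into $2\sin^2\theta\sqrt{\mu^2 + (c-\mu^2)/\sin^2\theta}$ and matches the linear term $2\mu\sin^2\theta$ after pulling out the common factor $2\sin^2\theta$. There is no real obstacle here — the only point to watch is bookkeeping with $\sin\theta > 0$ (which holds since $0 < \theta < \pi/4$) so that moving $\sin\theta$ inside and outside the radical is valid.
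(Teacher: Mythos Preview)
Your proof is correct and complete. The paper actually omits the proof of this lemma entirely, treating it as a routine computation; your direct substitution, use of $\hbar=\tan\theta$, and the $x\leftrightarrow y$ symmetry argument are exactly the natural way to fill in the details.
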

\begin{lemma}\label{lemma:finalpoint}
  Let $\xv=(0,a_+)$, with $a_+$ as in Lemma \ref{lemma:ellipsecrossing}.
  Then $s(\xv)=(a_-,0)$.
\end{lemma}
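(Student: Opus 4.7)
The plan is a direct computation: apply the definition of $s$ to the point $(0, a_+)$ and verify both coordinates match $(a_-, 0)$.

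First, I recall from equation \eqref{eq:sdef} that $s(d, \dt\,) = (4\mu\sin^2\theta + 2d\cos 2\theta - \dt,\,d)$. Plugging in $d=0$ and $\dt = a_+$, the second coordinate is immediately $0$, so only the first coordinate requires work: I must show that $4\mu\sin^2\theta - a_+ = a_-$, i.e. that $a_+ + a_- = 4\mu\sin^2\theta$.

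For this, I would take either of two routes. The quick route is to read off $a_+ + a_-$ from the explicit formula in Lemma \ref{lemma:ellipsecrossing}: using the form $a_\pm = 2\sin^2\theta[\mu \pm \sqrt{\mu^2 + (c-\mu^2)/\sin^2\theta}]$, the square-root terms cancel and one gets $a_+ + a_- = 4\mu\sin^2\theta$ directly. A conceptually cleaner route is to note that $a_+$ and $a_-$ are, by definition, the two roots of the quadratic in $y$ obtained by setting $x=0$ in $(x+y-2\mu)^2 + (x-y)^2/\h^2 = 4c$, namely $(1+\h^{-2})y^2 - 4\mu y + (4\mu^2 - 4c) = 0$. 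By Vieta's formula $a_+ + a_- = 4\mu\h^2/(1+\h^2)$, and with $\h = \tan\theta$ this equals $4\mu\sin^2\theta$, as required.

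There is essentially no obstacle: the statement reduces to one instance of Vieta's formula once the definition of $s$ is unpacked. The only thing to be careful about is checking the sign/assignment convention (which of $a_+$, $a_-$ maps to which), but this is settled by the observation that after applying $s$ the roles of $d$ and $\dt$ swap, so starting from $\dt = a_+$ gives $d = 4\mu\sin^2\theta - a_+ = a_-$ in the next step.
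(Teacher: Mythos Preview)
Your proof is correct. The paper states this lemma without proof, presumably because it is exactly the direct computation you carry out: applying the formula for $s$ to $(0,a_+)$ and using $a_+ + a_- = 4\mu\sin^2\theta$, which is immediate from the explicit expression for $a_\pm$ in Lemma~\ref{lemma:ellipsecrossing}.
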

\begin{lemma}\label{lemma:theta_n_interval}
  If $\phi$ is a spherical representation of $\Cmuh$, that contains a
  string on $n$ vertices, then
  \begin{align}
    0<(n+1)\theta\leq\pi.
  \end{align}
\end{lemma}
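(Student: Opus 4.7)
The plan is to analyze the string in the $(z,\vphi)$-coordinate system, where the constraint ellipse becomes the circle of radius $2\sqrt{c}$ centred at the origin and the map $s$ acts as a rotation by $2\theta$. By Lemma \ref{lemma:tranrec_cond} the transmitter $\xv_1$ of the string lies on the half-line $\dt=0$, $d>0$, and the receiver $\xv_n$ lies on the half-line $d=0$, $\dt>0$. In $(z,\vphi)$-coordinates these axes correspond to the two lines $\vphi=\h z-2\mu$ and $\vphi=-\h z-2\mu$, which pass through $(0,-2\mu)$ with slopes $\pm\tan\theta$; both cross the circle transversally in the spherical range since $|\mu|/\sqrt{c}\le 1<1/\cos\theta$.

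I would parametrise the circle by $(z,\vphi)=2\sqrt{c}(\cos\alpha,\sin\alpha)$ and compute the two relevant intersection angles. Substituting $\h=\tan\theta$ into $\vphi=\h z-2\mu$ gives $\sin(\alpha-\theta)=-\mu\cos\theta/\sqrt{c}$, so writing $\zeta:=\arcsin(\mu\cos\theta/\sqrt{c})$, the root with $d=\h z>0$ is $\alpha_1=\theta-\zeta$. Applying the same argument to $\vphi=-\h z-2\mu$ yields $\alpha_n=\pi-\theta+\zeta$, exhibiting $\xv_1$ and $\xv_n$ as mirror images across the $\vphi$-axis and showing that the ``first-quadrant arc'' between them (the one containing the apex $(0,2\sqrt{c})$, on which both $d,\dt>0$) has angular measure $\pi-2\theta+2\zeta$.

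Since any intermediate vertex $\xv_i$ with $1<i<n$ is neither a transmitter nor a receiver, Lemma \ref{lemma:tranrec_cond} gives $d_i,\dt_i>0$ strictly, so $\xv_i$ lies on the \emph{open} first-quadrant arc. The orbit $\xv_1,s(\xv_1),\ldots,s^{n-1}(\xv_1)=\xv_n$ therefore cannot wrap around the circle, which forces $2(n-1)\theta=\pi-2\theta+2\zeta$. The spherical hypothesis $|\mu|/\sqrt{c}\le 1$ implies $|\sin\zeta|\le\cos\theta$ and hence $\zeta\le\pi/2-\theta$, giving $2(n-1)\theta\le 2\pi-4\theta$, i.e.\ $(n+1)\theta\le\pi$. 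The strict lower bound $(n+1)\theta>0$ is immediate, and the degenerate case $n=1$ (which can only occur at $\mu=\sqrt{c}$, with $\xv_1=(0,0)$) satisfies $2\theta<\pi/2<\pi$ trivially.

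The main technical content is the trigonometric identification of $\alpha_1$ and $\alpha_n$; conceptually the lemma reflects that $n$ cannot be so large that the required angular displacement $2(n-1)\theta$ of the rotation $s$ would exceed the largest possible first-quadrant arc length $2\pi-4\theta$, which is attained precisely at the boundary $\mu=\sqrt{c}$ of the spherical regime.
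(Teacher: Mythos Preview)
Your argument is correct and takes a more explicitly computational route than the paper's. The paper never computes the angular positions of the transmitter and receiver; instead it leans on Lemma~\ref{lemma:finalpoint} (that $s(0,a_+)=(a_-,0)$) together with the spherical inequality $a_-\le 0$ to rule out wrapping, and then observes that both $s(\xv_n)$ and $s^2(\xv_n)$ have a non-positive component, which forces $(n+1)2\theta\le 2\pi$. You instead parametrise the circle directly, pin down the angles $\alpha_1=\theta-\zeta$ and $\alpha_n=\pi-\theta+\zeta$, and obtain the exact identity $2(n-1)\theta=\pi-2\theta+2\zeta$ (equivalently $n\theta=\pi/2+\zeta$), after which the bound $\zeta\le\pi/2-\theta$ finishes the job. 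Your route has the pleasant side effect of yielding $\sqrt{c}\cos n\theta+\mu\cos\theta=0$ essentially for free, which the paper proves separately in the proposition that follows. One small point is worth making explicit in your ``cannot wrap around'' step: the complementary arc has angular measure $\pi+2\theta-2\zeta>2\theta$ (since $\zeta<\pi/2$), so a single rotation step of $2\theta$ can never jump across it; without this observation, the fact that every $\xv_i$ lies in the closed first-quadrant arc does not by itself exclude the orbit from going once around and landing back inside.
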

\begin{proof}
  Let us denote the vectors corresponding to the vertices in the
  string by $\xv_1,\ldots,\xv_n$ and we define $0<\beta,\theta_0<2\pi$
  through $\xv_1=\xv(\beta)$ and $\xv_n=\xv(\beta+\theta_0)$ in the
  notation of Proposition \ref{prop:sprop}. Since
  $\xv_n=s^{n-1}\paraa{\xv(\beta)}$ we must have that
  $(n-1)2\theta=\theta_0+2\pi k$ for some integer $k\geq 0$. Let us
  prove that $k=0$.  For a spherical representation, $a_-\leq 0$,
  which implies, by Lemma \ref{lemma:finalpoint}, that
  $s\paraa{\xv(\beta+\theta_0)}=(a_-,0)$ can not correspond to a
  vertex of a connected representation. Hence, for any
  $\alpha\in(0,2\theta)$, $s\paraa{\xv(\beta+\theta_0-\alpha)}$ can
  not correspond to a vertex of a connected representation. This
  implies that $k=0$, i.e. the string never crosses the $\dt$-axis.
  Therefore $0<(n-1)2\theta=\theta_0<2\pi$.  Again, by Lemma
  \ref{lemma:finalpoint}, both vectors $s(0,a_+)$ and $s^2(0,a_+)$
  have non-positive components which implies that
  $0<(n+1)2\theta\leq2\pi$. In fact, equality is attained when $a_-=0$.
\end{proof}
\begin{proposition}
  Let $\phi$ be a spherical representation of $\Cmuh$. Then the
  associated graph has no loops.
\end{proposition}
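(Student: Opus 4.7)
The plan is to derive a contradiction from the assumption that a spherical representation $\phi$ has an associated graph $G$ containing a loop on $n$ vertices. First, since $G$ has a loop, Lemma \ref{lemma:loop_or_string} forbids any transmitter or receiver in $G$, so Lemma \ref{lemma:tranrec_cond} forces $d_i>0$ and $\tilde{d}_i>0$ at every vertex, in particular at every vertex of the loop. Proposition \ref{prop:graph_loop} gives $q^n=1$, so $2n\theta=2\pi m$ for some positive integer $m$; the requirement that the $n$ loop vertices be pairwise distinct then forces $\gcd(m,n)=1$. Consequently, in the parametrization $\xv(\beta)$ of Proposition \ref{prop:sprop}, the loop vertices correspond to $n$ distinct $\beta$-values $\beta_0+2l\theta$, $l=0,\ldots,n-1$, which after reducing modulo $2\pi$ form an equally spaced configuration on the circle with spacing $2\pi/n$.

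Second, I would translate positivity of $d_i$ into an arc condition. Setting $r=\mu\cos\theta/\sqrt{c}$, one computes $d(\beta)=\sqrt{c}(r+\cos\beta)/\cos\theta$, so $d(\beta)>0$ is equivalent to $\cos\beta>-r$. Hence the $n$ equally spaced $\beta$-values must all lie in the open arc $A=\{\beta:\cos\beta>-r\}$, whose complement is the closed arc $B$ of angular length $|B|=2\pi-2\arccos(-r)$. The spherical hypothesis $\mu/\sqrt{c}\leq 1$ gives $r\leq\cos\theta$, hence $\arccos(-r)\leq\pi-\theta$, and therefore $|B|\geq 2\theta$.

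The contradiction then comes from a short combinatorial observation: for $n$ equally spaced points on the circle (with spacing $2\pi/n$) to all miss the closed arc $B$, the arc $B$ must fit inside one of the $n$ open gaps between consecutive points, which forces $|B|<2\pi/n$. But $\theta=\pi m/n\geq\pi/n$ implies $|B|\geq 2\theta\geq 2\pi/n$, contradicting $|B|<2\pi/n$.

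The main subtlety I expect is the boundary situation $\mu/\sqrt{c}=1$ with $m=1$, where $|B|=2\theta=2\pi/n$ exactly: there the arc $B$ could in principle span an entire closed gap, but its endpoints would then coincide with two of the equally spaced points, giving $\cos\beta=-r$ and hence $d_i=0$ at those vertices, which contradicts the strict positivity $d_i>0$ already established. Keeping track of the strict versus nonstrict inequalities throughout this limiting case is the only slightly delicate point in the argument.
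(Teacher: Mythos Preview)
Your strategy is sound and close in spirit to the paper's: the key point in both is that the ``forbidden'' arc on the constraint ellipse (where $d\le 0$) has angular length at least $2\theta$, which is incompatible with a loop. There is, however, one genuine gap. You infer $\gcd(m,n)=1$ from the fact that the $n$ loop vertices are pairwise distinct, but pairwise distinct \emph{graph vertices} need not carry pairwise distinct $\xv$-values (and hence $\beta$-values); this is exactly the situation in Theorem~\ref{thm:solution} when the blocks $U_i$ have dimension larger than one, where a loop of length $n$ may cycle through only $k<n$ distinct $\beta$-values. The fix is immediate: let $k$ be the order of $q$, so that $k\mid n$ and $\theta=\pi m'/k$ with $\gcd(m',k)=1$. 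The distinct $\beta$-values along the loop are then exactly $k$ in number and equally spaced with spacing $2\pi/k$, and your inequality $|B|\ge 2\theta=2\pi m'/k\ge 2\pi/k$ together with the pigeonhole argument (including your careful treatment of the boundary case) goes through unchanged with $k$ in place of $n$.

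For comparison, the paper's own proof is terser and sidesteps the equally-spaced reformulation altogether: invoking Lemma~\ref{lemma:finalpoint} and the reasoning of Lemma~\ref{lemma:theta_n_interval}, it observes that the closed arc of angular length $2\theta$ starting at the angle of $(0,a_+)$ lies entirely in the forbidden region, and since each application of $s$ advances the angle by exactly $2\theta$, no sequence of such steps can jump over this closed arc while remaining in its open complement. This avoids any mention of $\gcd(m,n)$, at the cost of relying more directly on the geometric content of the preceding lemmas; your version, once repaired, has the virtue of making the combinatorics fully explicit.
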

\begin{proof}
  In the same way as in the proof of Lemma
  \ref{lemma:theta_n_interval}, we can argue that for
  $\alpha\in(0,2\theta)$, $s\paraa{\xv(\beta+\theta_0-\alpha)}$ has a
  negative component (or equals $(0,0)$), which implies that it is
  impossible to have loops.
\end{proof}

\noindent Hence, we have excluded the possibility of loop
representations and can conclude that all spherical representations
are string representations. We therefore get the following corollary to
Theorem \ref{thm:equiv_rep}.

\begin{corollary}
  Let $\phi$ be a spherical representation of $\Cmuh$. Then
  $\phi$ is unitarily equivalent to a direct sum of string representations.  
\end{corollary}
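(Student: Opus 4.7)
The plan is to combine Theorem \ref{thm:equiv_rep} with the preceding Proposition in the straightforward way. Since a spherical representation is defined by the condition $-1<\mu/\sqrt{c}\leq 1$, which in particular forces $c>0$, any such $\phi$ is non-degenerate, so Theorem \ref{thm:equiv_rep} applies and yields a representation $\phi'$ unitarily equivalent to $\phi$ whose associated graph $G$ has every connected component equal to a string or a loop.

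Next I would observe that each connected component $C_i$ of $G$ corresponds to an invariant subspace on which $\phi'$ restricts to a subrepresentation $\phi_i$ of $\Cmuh$ (with the same $\mu$ and $\hbar$). Crucially, the Casimir value is preserved: $\phi_i(\Ch)=4c\,\mid$, because $\Ch$ is central and the restriction of $4c\,\mid$ to any invariant subspace is again $4c\,\mid$. Hence $\mu/\sqrt{c}$ is the same for each $\phi_i$, so each $\phi_i$ is itself a spherical representation. The preceding Proposition then forbids loops in the graph of any spherical representation, so each $C_i$ must be a string, and $\phi$ is unitarily equivalent to $\bigoplus_i \phi_i$, a direct sum of string representations.

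There is essentially no obstacle in this step; the only point deserving a line of justification is that ``spherical'' is inherited by every connected sub-piece, which is a one-line consequence of the centrality of the Casimir and the fixed value of $\mu$.
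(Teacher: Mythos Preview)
Your proposal is correct and follows essentially the same route the paper intends: the corollary is stated as an immediate consequence of Theorem \ref{thm:equiv_rep} combined with the preceding Proposition ruling out loops in spherical representations. Your extra line justifying that each connected component inherits the spherical condition (via the shared Casimir value and fixed $\mu$) is a reasonable elaboration, though one could equally apply the Proposition directly to the equivalent representation $\phi'$ as a whole.
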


\noindent Let us now investigate the conditions for
the existence of strings.

\begin{lemma}\label{lemma:stringcond}
  Let $\xv_1=(a,0)$ and $\xv_n=(0,b)$. Then $s^{n-1}(\xv_1)=\xv_n$ if and only if
  \begin{enumerate}[(i)]
  \item $q^n=-1$, $\mu=0$ and $a=b$
  \item $q^n=1$ and $b=-a+4\mu\sin^2\theta$ 
  \item $q^n\neq\pm 1$ and 
    \begin{equation}
      a=b=-\frac{2\mu\sin\theta\sin(n-1)\theta}{\cos n\theta}
    \end{equation}
  \end{enumerate}
  In particular, if $a=a_+$ and $q^n=1$, then $b=a_-$.
\end{lemma}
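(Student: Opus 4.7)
The plan is to exploit the rotational structure of $s$: as observed in Proposition \ref{prop:sprop}, in the coordinates $z = (d - \tilde d)/\h$ and $\vphi = d + \tilde d - 2\mu$ the constraint ellipse becomes the circle $z^2 + \vphi^2 = 4c$ and $s$ acts as a rotation by $2\theta$. Writing $w = z + i\vphi$, this rotation is just multiplication by $q = e^{2i\theta}$, so the condition $s^{n-1}(\xv_1) = \xv_n$ becomes the single complex equation $w_n = q^{n-1} w_1$, where
\[
  w_1 = \frac{a}{\h} + i(a - 2\mu), \qquad w_n = -\frac{b}{\h} + i(b - 2\mu).
\]

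Using the identities $\cot\theta + i = e^{i\theta}/\sin\theta$, $-\cot\theta + i = -e^{-i\theta}/\sin\theta$, and $2\mu i\sin\theta = \mu(e^{i\theta} - e^{-i\theta})$, a direct manipulation of $w_n = q^{n-1} w_1$ reduces it to
\[
  b = \mu(1 - q)(1 - q^{n-1}) - a\, q^n,
\]
and expanding $1 - q^k = -2i\sin(k\theta)\, e^{ik\theta}$ then puts this in the compact form
\[
  b = -4\mu\sin\theta\sin\bigl((n-1)\theta\bigr)\, e^{in\theta} - a\, e^{2in\theta}.
\]

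The last step is to enforce that $b$ be real. The imaginary part of the right-hand side is $-2\sin(n\theta)\bigl[2\mu\sin\theta\sin\bigl((n-1)\theta\bigr) + a\cos(n\theta)\bigr]$, and setting it to zero splits into precisely the three cases of the lemma. When $\sin(n\theta) \neq 0$ and $\cos(n\theta) \neq 0$ (i.e.\ $q^n \neq \pm 1$), the bracket must vanish, forcing $a = -2\mu\sin\theta\sin\bigl((n-1)\theta\bigr)/\cos(n\theta)$; substituting back into the real part (using $\cos(2n\theta) = 2\cos^2(n\theta) - 1$) yields $b = a$, giving case (iii). When $q^n = 1$, the identity $\sin\bigl((n-1)\theta\bigr) = -\cos(n\theta)\sin\theta$ collapses the real part to $b = 4\mu\sin^2\theta - a$, giving case (ii). When $q^n = -1$, the bracket equals $\pm 2\mu\sin\theta\cos\theta = \pm\mu\sin 2\theta$, which forces $\mu = 0$, after which $e^{2in\theta} = -1$ immediately gives $b = a$, i.e.\ case (i). The converse implications in each case follow by substituting the stated conditions back into the complex formula and checking that both sides agree. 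The closing remark that $a = a_+$ and $q^n = 1$ force $b = a_-$ is immediate from $a + b = 4\mu\sin^2\theta = a_+ + a_-$, the latter being the sum of roots of the quadratic in $d$ from Lemma \ref{lemma:ellipsecrossing}.

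The main obstacle is maintaining a clean separation of real and imaginary parts; passing to the complex variable $w$ reduces the entire case analysis to manipulations with $e^{ik\theta}$ and avoids the messy trigonometric computation one would otherwise face with the real rotation matrix.
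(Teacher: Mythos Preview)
The paper states this lemma without proof, so there is no argument to compare against; your proof is a complete and correct one. The passage to the complex coordinate $w=z+i\vphi$ is the natural choice here, and each of your reductions checks out: the identity $w_n=q^{n-1}w_1$ does reduce to $b=\mu(1-q)(1-q^{n-1})-aq^n$ after multiplying through by $-\sin\theta\,e^{i\theta}$, the imaginary part factors exactly as you claim, and the three cases fall out from $\sin(n\theta)=0\Leftrightarrow q^n=1$ and $\cos(n\theta)=0\Leftrightarrow q^n=-1$. The final remark on $a_++a_-=4\mu\sin^2\theta$ is likewise correct, being Vieta's formula for the quadratic $(1+\h^{-2})y^2-4\mu y+4\mu^2-4c=0$ obtained from Lemma~\ref{lemma:ellipsecrossing}.
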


\begin{proposition}
  Let $\phi$ be a spherical representation of $\Cmuh$ containing a string on $n$ vertices. Then
  \begin{align}\label{eq:cntheta}
    \sqrt{c}\cos n\theta + \mu\cos\theta = 0.
  \end{align}
\end{proposition}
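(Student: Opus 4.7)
The plan is to apply Lemma~\ref{lemma:stringcond} to the string and dispose of each of its three cases using the spherical condition together with Lemma~\ref{lemma:theta_n_interval}. First, by Lemma~\ref{lemma:tranrec_cond} the transmitter of the string has the form $\xv_1=(a,0)$ with $a\geq 0$ and the receiver has $\xv_n=(0,b)$ with $b\geq 0$; both points lie on the constraint ellipse, so by Lemma~\ref{lemma:ellipsecrossing} we have $a,b\in\{a_+,a_-\}$. In the spherical range, $c>\mu^2\geq\mu^2\cos^2\theta$ implies $a_-<0$, forcing $a=b=a_+>0$. Since the representation is connected, $\xv_n=s^{n-1}(\xv_1)$, so Lemma~\ref{lemma:stringcond} applies with $a=b=a_+$.

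The two rational cases are dispatched quickly. Case (i) gives $\mu=0$ and $q^n=-1$, i.e.\ $2n\theta\equiv\pi\pmod{2\pi}$; together with $0<n\theta<\pi$ coming from Lemma~\ref{lemma:theta_n_interval}, this forces $n\theta=\pi/2$, so $\cos n\theta=0$ and \eqref{eq:cntheta} holds trivially. Case (ii) demands $q^n=1$, i.e.\ $n\theta\in\pi\integers$, which is incompatible with $0<n\theta<\pi$ and is therefore ruled out.

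The substantive work is in case (iii). Setting the formula $-2\mu\sin\theta\sin(n-1)\theta/\cos n\theta$ from Lemma~\ref{lemma:stringcond} equal to the expression $a_+=2\sin\theta(\mu\sin\theta+\sqrt{c-\mu^2\cos^2\theta})$ from Lemma~\ref{lemma:ellipsecrossing}, cancelling $2\sin\theta$, isolating the square root, and invoking the identity $\sin(n-1)\theta+\sin\theta\cos n\theta=\sin n\theta\cos\theta$, I expect to arrive at
\[
   \sqrt{c-\mu^2\cos^2\theta}=-\mu\cos\theta\tan n\theta.
\]
Squaring gives $c\cos^2 n\theta=\mu^2\cos^2\theta$, so a priori $\sqrt{c}\cos n\theta=\pm\mu\cos\theta$. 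The main obstacle, and the crux of the argument, is pinning down which sign occurs. I would handle this by observing that the squared equation also rewrites the left-hand side of the displayed identity as $\sqrt{c}\,|\sin n\theta|$, after which Lemma~\ref{lemma:theta_n_interval} guarantees $0<n\theta\leq\pi-\theta<\pi$ and hence $\sin n\theta>0$. Dividing the displayed identity by this positive factor then yields $\sqrt{c}\cos n\theta=-\mu\cos\theta$, which is \eqref{eq:cntheta}.
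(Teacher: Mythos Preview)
Your argument is correct and follows the same strategy as the paper: invoke Lemma~\ref{lemma:stringcond}, dispose of cases (i) and (ii), and in case (iii) extract $c$ from the crossing-point condition and then pin down the sign. The execution differs only in details---you rule out case (ii) via $0<n\theta<\pi$ rather than via $b=a_-$, and you determine the sign in (iii) by dividing the pre-squaring identity by $\sin n\theta>0$ rather than by reading off $\sgn(\cos n\theta)=-\sgn\mu$ from $a>0$; both routes are equally direct.

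One small imprecision: the spherical range $-1<\mu/\sqrt{c}\leq 1$ includes the endpoint $\mu=\sqrt{c}$, where $c=\mu^2$ and hence $a_-=0$ rather than $a_-<0$. Your conclusion $a=b=a_+$ still survives there (for a string on $n\geq 2$ vertices the transmitter has an outgoing edge, forcing $d_1>0$, and the receiver has an incoming edge, forcing $\tilde d_n>0$, so neither can equal $a_-=0$), but you should write ``$c\geq\mu^2$'' and ``$a_-\leq 0$'' and add this one-line remark.
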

\begin{proof}
  Assume the existence of a string on $n$ vertices. From Lemma
  \ref{lemma:stringcond} we can exclude the possibility that $q^n=1$,
  since $a_-<0$. Hence, either $q^n=-1$ and $\mu=0$ or $q^n\neq\pm 1$.
  If $q^n=-1$ and $\mu=0$ then \eqref{eq:cntheta} is clearly
  satisfied. Now, assume $q^n\neq\pm 1$ and
  $a=b=\frac{2\mu\sin\theta\sin(n-1)\theta}{\cos n\theta}$.  Demanding
  that $(a,0)$ and $(0,b)$ lie on the ellipse determines $c$ as
  $c=\mu^2\cos^2\theta/\cos^2 n\theta$. Let us set
  $\varepsilon=\sgn{\mu}$. Recalling that $0<(n+1)\theta\leq\pi$, from
  Lemma \ref{lemma:theta_n_interval}, demanding $a>0$ makes it
  necessary that $\sgn(\cos n\theta)=-\varepsilon$, which determines
  the sign of the root in the statement.
\end{proof}

\noindent As we have seen, the existence of a loop puts a restriction
on $\hbar$ through the relation $q^n=1$. For the case of strings, the
restriction comes out as a restriction on the possible values
of the Casimir.

In the next theorem we show that the necessary conditions for the
existence of spherical representations are in fact sufficient.

\begin{theorem}\label{thm:sphere_rep}
  Let $n$ be a positive integer, $c$ a positive real number such that
  $\sqrt{c}\cos n\theta+\mu\cos\theta=0$ and $0<(n+1)\theta\leq\pi$.
  Furthermore, let $U_1,\ldots,U_{n-1}$ be $N\times N$ unitary
  matrices. Then $\phi$ is a $N\cdot n$-dimensional spherical
  representation of $\Cmuh$, with $\phi(\Ch)=4c\mid$, if
  \begin{align*}
    \phi(W) = 
    \begin{pmatrix}
      0               &  \sqrt{\et_1}\,U_1  & 0                & \cdots & 0 \\
      0               &  0                & \sqrt{\et_2}\,U_2  & \cdots & 0 \\
      \vdots          & \vdots            & \ddots           & \ddots & \vdots\\
      0               & 0                 & \cdots           & 0 &  \sqrt{\et_{n-1}}\,U_{n-1} \\
      0               & 0                 & \cdots           & 0 & 0
    \end{pmatrix}
  \end{align*}
  and
  \begin{align*}
    \et_l = \frac{2\sqrt{c}\sin{l\theta\sin(n-l)\theta}}{\cos\theta}.
  \end{align*}
\end{theorem}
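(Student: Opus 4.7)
The plan is to verify two claims for the $\phi(W)$ given in the statement: that the matrix equation \eqref{eq:WWd} holds, and that $\phi(\Ch)=4c\mid$ (irreducibility is not asserted here). Because $\phi(W)$ has a single nonzero block in each row and column, the unitarity of $U_1,\ldots,U_{n-1}$ immediately forces the block-diagonal form
\[
D = \diag(\et_1\mid,\et_2\mid,\ldots,\et_{n-1}\mid,0),\qquad
\Dt = \diag(0,\et_1\mid,\ldots,\et_{n-1}\mid),
\]
so, adopting the convention $\et_0=\et_n=0$ (which is consistent with the closed form since $\sin 0=0$), block $i$ carries the pair $(d_i,\dt_i)=(\et_i,\et_{i-1})$. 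Block $1$ is then a transmitter and block $n$ a receiver by Lemma \ref{lemma:tranrec_cond}, so the associated graph is indeed a string.

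Next I would check \eqref{eq:WWd} block-by-block. Applied to the $(i,i+1)$ block, the component equation \eqref{eq:WDcomp} collapses, after cancelling $\sqrt{\et_i}\,U_i$, to the scalar identity
\[
(\h^2+1)(\et_{i-1}+\et_{i+1}) + 2(\h^2-1)\et_i = 4\mu\h^2,\qquad i=1,\ldots,n-1.
\]
Using $\h=\tan\theta$ together with $(\h^2-1)/(\h^2+1)=-\cos 2\theta$ and $\h^2/(\h^2+1)=\sin^2\theta$, this rewrites as
\[
\et_{i-1}+\et_{i+1} - 2\cos 2\theta\,\et_i = 4\mu\sin^2\theta,
\]
i.e.\ precisely the condition $s(\et_i,\et_{i-1})=(\et_{i+1},\et_i)$ in the notation of \eqref{eq:sdef}. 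To establish it, I expand $\et_l=(\sqrt{c}/\cos\theta)\bracket{\cos((2l-n)\theta)-\cos n\theta}$ via $\sin A\sin B=\frac{1}{2}\bracket{\cos(A-B)-\cos(A+B)}$; the identity $\cos((2i-n-2)\theta)+\cos((2i-n+2)\theta)=2\cos 2\theta\cos((2i-n)\theta)$ annihilates the $i$-dependent part, and the remainder $(2\sqrt{c}\cos n\theta/\cos\theta)(\cos 2\theta-1)=-4\sqrt{c}\cos n\theta\sin^2\theta/\cos\theta$ matches $4\mu\sin^2\theta$ precisely under the hypothesis $\sqrt{c}\cos n\theta+\mu\cos\theta=0$.

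For the Casimir, the recursion just verified asserts $\xv_{i+1}=s(\xv_i)$ for the pairs $\xv_i=(\et_i,\et_{i-1})$; since $s$ preserves the constraint ellipse, it suffices to check that $\xv_1=(\et_1,0)$ lies on it. A direct substitution gives $\et_1-2\mu=2\sqrt{c}\cos((n-1)\theta)$ (using the hypothesis once more together with $\sin\theta\sin(n-1)\theta+\cos n\theta=\cos((n-1)\theta)\cos\theta$) and $\et_1/\h=2\sqrt{c}\sin((n-1)\theta)$, whence $(\et_1-2\mu)^2+\et_1^2/\h^2=4c$ as required. Finally, the assumption $0<(n+1)\theta\le\pi$ forces $\sin l\theta>0$ and $\sin(n-l)\theta>0$ for $l=1,\ldots,n-1$, so every $\et_l$ is strictly positive and the square roots in $\phi(W)$ make sense; this also ensures that no vertex coincides with the fixed point $(\mu,\mu)$, so the representation is non-degenerate and genuinely spherical. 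The main obstacle is purely the trigonometric bookkeeping; no deeper structural input is needed beyond the formulas and lemmas already established.
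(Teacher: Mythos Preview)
Your approach is essentially the paper's: verify that the pairs $(\et_l,\et_{l-1})$ satisfy the recursion $s(\et_l,\et_{l-1})=(\et_{l+1},\et_l)$ (whence \eqref{eq:WWd}), check positivity of the $\et_l$, and infer the Casimir value. You carry out the trigonometric bookkeeping in considerably more detail than the paper, which simply asserts the recursion and positivity without computation; your added check that $\xv_1=(\et_1,0)$ lies on the ellipse is a clean way to pin down $\phi(\Ch)=4c\mid$, a point the paper leaves implicit.

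There is, however, one genuine omission. In the paper's taxonomy, ``spherical'' is \emph{not} a statement about the graph being a string; it is the parameter condition $-1<\mu/\sqrt{c}\le 1$. Your final sentence conflates non-degeneracy (no vertex at the fixed point) with sphericality, but these are different notions, and indeed string representations also occur in the critical toral regime. The paper closes this gap by observing that the hypothesis $\sqrt{c}\cos n\theta+\mu\cos\theta=0$ gives $\mu/\sqrt{c}=-\cos n\theta/\cos\theta$, and that $0<(n+1)\theta\le\pi$ forces $0<n\theta\le\pi-\theta$, whence $|\cos n\theta|\le|\cos\theta|$ and thus $|\mu/\sqrt{c}|\le 1$. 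You should add this short argument; without it the assertion that $\phi$ is a \emph{spherical} representation is unproved.
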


\begin{proof}
  It is easy to check that the matrix $\phi(W)$ satisfy
  \eqref{eq:WWd}, since $s(\et_l,\et_{l-1})=(\et_{l+1},\et_l)$.
  Moreover, it is clear that $\et_l>0$ since $0<(n-1)\theta<\pi$. Let
  us show that it is indeed a spherical representation, i.e.
  $-1<\mu/\sqrt{c}\leq 1$. Since $\sqrt{c}\cos
  n\theta+\mu\cos\theta=0$, we get that
  \begin{align*}
    \frac{\mu}{\sqrt{c}} = -\frac{\cos n\theta}{\cos\theta}
  \end{align*}
  and from $0<(n+1)\theta\leq\pi$ we obtain $0<n\theta\leq\pi-\theta$. From
  this it follows that $|\cos n\theta|\leq|\cos\theta|$ which implies
  that $\phi$ is a spherical representation.
\end{proof}

\begin{definition}
  We define a \emph{single string representation} $\phi_S$ of $\Cmuh$ to be a
  spherical representation, as in Theorem \ref{thm:sphere_rep}, with
  $U_i$ chosen to be $1\times 1$ matrices. 
\end{definition}

\begin{proposition}
  Any single string representation of $\Cmuh$ is irreducible.
\end{proposition}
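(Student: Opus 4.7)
The plan is to verify irreducibility in the most direct way possible: show that $\complex^n$ admits no proper non-zero subspace simultaneously invariant under $\phi_S(W)$ and $\phi_S(V) = \phi_S(W)^\dagger$. The strategy is the familiar ``ladder'' argument for a finite weighted shift.

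First I would read off the action of $\phi_S(W)$ and $\phi_S(V)$ in the standard basis $e_1,\ldots,e_n$ of $\complex^n$. By Theorem \ref{thm:sphere_rep} with each $U_i$ equal to the scalar $1$, the only non-zero entries of $\phi_S(W)$ are $\phi_S(W)_{l,l+1} = \sqrt{\et_l}$ for $l = 1,\ldots,n-1$, giving
\[
    \phi_S(W)\, e_j = \sqrt{\et_{j-1}}\, e_{j-1} \quad (j \geq 2), \qquad \phi_S(W)\, e_1 = 0,
\]
and, by hermitian conjugation,
\[
    \phi_S(V)\, e_j = \sqrt{\et_j}\, e_{j+1} \quad (j \leq n-1), \qquad \phi_S(V)\, e_n = 0.
\]
The essential input is that $\et_l = 2\sqrt{c}\,\sin l\theta\,\sin(n-l)\theta/\cos\theta > 0$ for every $l \in \{1,\ldots,n-1\}$; this is immediate from the range $0 < (n+1)\theta \leq \pi$ supplied by Lemma \ref{lemma:theta_n_interval}, which forces each sine factor to be strictly positive.

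Given these formulas, the argument is short. Let $M \subseteq \complex^n$ be a non-zero invariant subspace, pick a non-zero $v = \sum_j c_j e_j \in M$, and let $j_\ast$ be the largest index with $c_{j_\ast} \neq 0$. Then $\phi_S(W)^{j_\ast - 1}$ annihilates every $e_j$ with $j < j_\ast$ and sends $e_{j_\ast}$ to a non-zero multiple of $e_1$, so $e_1 \in M$. Since each $\et_l$ is strictly positive, the iterates $\phi_S(V)^{j-1} e_1$ are non-zero multiples of $e_j$ for $j = 1,\ldots,n$, and therefore $M = \complex^n$. I do not expect any serious obstacle here: once the positivity of the $\et_l$ is noted, the argument is just the standard raising/lowering ladder reasoning, entirely analogous to the irreducibility proof for the standard $(n+1)$-dimensional representation of $\mathfrak{sl}_2$.
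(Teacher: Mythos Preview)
Your argument is correct. The ladder computation is clean: once you note that every $\et_l$ with $1\leq l\leq n-1$ is strictly positive (which follows from $0<(n+1)\theta\leq\pi$, a hypothesis built into the definition of a single string representation via Theorem~\ref{thm:sphere_rep}), the weighted shift $\phi_S(W)$ and its adjoint $\phi_S(V)$ generate all standard basis vectors from any non-zero vector, exactly as you describe.

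Your route is genuinely different from the paper's. The paper argues by contradiction through the structure theory: if $\phi_S$ were reducible, Proposition~\ref{prop:completely_reducible} would split it as a direct sum, producing a spherical representation of strictly smaller dimension $m<n$ with the same Casimir value $4c$; but the analysis of the constraint ellipse shows that for a given $c$ the string length is uniquely determined (there is at most one $l$ with $\xv(\beta+2l\theta)=\xv(\beta+\theta_0)$), so no such smaller representation exists. Your approach is more elementary and self-contained: it bypasses complete reducibility and the ellipse geometry entirely, relying only on the explicit matrix form and the positivity of the off-diagonal entries. The paper's argument, on the other hand, extracts irreducibility as a consequence of a rigidity statement (the dimension is forced by the Casimir), which is conceptually informative in its own right and reuses machinery already in place.
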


\begin{proof}
  Assume that $\phi_S$ is reducible and has dimension $n$ with
  $\phi_S(\Ch)=4c\mid$. Then, by Proposition
  \ref{prop:completely_reducible}, $\phi_S$ is equivalent to a direct
  sum of at least two representations of dimension $<n$. In
  particular, this implies that there exists a representation $\phi$
  of dimension $m<n$ with $\phi(\Ch)=4c\mid$. But this is false, since
  there is at most one integer $l$ such that
  $\xv(\beta+2l\theta)=\xv(\beta+\theta_0)$, for $0<(l+1)2\theta<2\pi$ and
  $0<\theta_0<2\pi$.
\end{proof}

\noindent We conclude that the single string representations are the
only irreducible spherical representations. Moreover, two single string
representations $\phi_S$ and $\phi_S'$, of the same dimension, are equivalent if and only if
$\phi_S(\Ch)=\phi_S'(\Ch)$.

\subsection{Critical toral representations}

\begin{figure}[h]
  \centering
  \includegraphics[height=5cm]{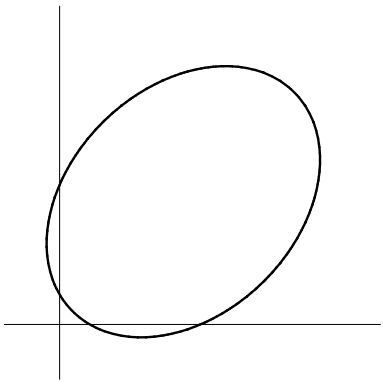}
  \caption{The constraint ellipse of a critical toral representation.}
  \label{fig:ellipse_crittoral}
\end{figure}

\noindent In the case of critical toral representations, the
constraint ellipse intersects the positive $d$ (resp. $\dt$) axis
twice, as in Figure \ref{fig:ellipse_crittoral}. As we will show,
there are both loop representations and string representations. String
representations can be obtained from Theorem \ref{thm:sphere_rep}, by
demanding that $1<\mu/\sqrt{c}\leq 1/\cos\theta$ instead of
$0<(n+1)\theta\leq\pi$. Let us as well give an example of a loop
representation.

\begin{proposition}
  Assume that $\theta=\pi/N$, $N\geq 5$ odd and $1<\mu/\sqrt{c}\leq
  1/\cos\theta$. If we define $\phi$ as in Theorem \ref{thm:toral_rep}
  with $\beta=0$, then $\phi$ is a critical toral representation of
  $\Cmuh$.
\end{proposition}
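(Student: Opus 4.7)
The plan is to verify that the formula of Theorem \ref{thm:toral_rep}, applied with $k = N$ and $\beta = 0$, still produces a well-defined hermitian representation when the hypothesis $\mu/\sqrt{c} > 1/\cos\theta$ is relaxed to the critical toral bound $1 < \mu/\sqrt{c} \leq 1/\cos\theta$; the additional condition that $N$ be odd is precisely what makes this relaxation harmless.

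Since $\theta = \pi/N$, one has $q = e^{2i\theta}$ with $q^N = 1$, so $k = N$ is admissible in the construction. The crucial point is to show that
\[
\et_l \;=\; \sqrt{c}\left[\frac{\mu}{\sqrt{c}} + \frac{\cos(2l\pi/N)}{\cos(\pi/N)}\right], \qquad l = 0, 1, \ldots, N-1,
\]
is strictly positive for every $l$, so that $\sqrt{\et_l}\,U_l$ is meaningful. Because $N$ is odd, the argument $2l\pi/N$ never equals $\pi$, and the minimum of $\cos(2l\pi/N)$ is attained at $l = (N \pm 1)/2$, yielding $\cos((N \pm 1)\pi/N) = -\cos(\pi/N)$. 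Hence $\min_l \et_l = \sqrt{c}\,(\mu/\sqrt{c} - 1)$, which is strictly positive by the hypothesis $\mu/\sqrt{c} > 1$. For even $N$ the index $l = N/2$ would instead force $\et_{N/2} = \sqrt{c}(\mu/\sqrt{c} - 1/\cos\theta) \leq 0$ throughout the critical range, obstructing the construction; this explains why the oddness of $N$ is essential.

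Once positivity of the $\et_l$ is in hand, the verification that $\phi(W)$ and $\phi(V) = \phi(W)^\dagger$ satisfy \eqref{eq:WWd} together with $\phi(\Ch) = 4c\,\mid$ proceeds exactly as in Theorem \ref{thm:toral_rep}. Written out entry by entry, \eqref{eq:WWd} reduces, on each nonzero block of $\phi(W)$, to the single requirement $\xv_{l+1} = s(\xv_l)$, which is automatic from the parametrization of Proposition \ref{prop:sprop}(ii); the cyclic closure $\xv_0 = s(\xv_{N-1})$ is supplied by $q^N = 1$ via Proposition \ref{prop:sprop}(iv); and the Casimir identity \eqref{eq:MatrixCasimir} is just the statement that each pair $(\et_{l-1}, \et_l)$ lies on the constraint ellipse, which holds by construction. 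None of these verifications actually invokes the strict bound $\mu/\sqrt{c} > 1/\cos\theta$, so they survive the passage to the critical range.

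The main technical obstacle is therefore concentrated in the positivity check, which is really an elementary observation about when $2l\pi/N$ can reach $\pi$. Once it is established, $\phi$ has no transmitters or receivers (Lemma \ref{lemma:tranrec_cond}) and its associated graph is a single cycle of length $N$, so it is a loop representation; since the hypothesis $1 < \mu/\sqrt{c} \leq 1/\cos\theta$ places it by definition in the critical toral class, the proposition follows.
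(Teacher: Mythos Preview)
Your proof is correct and follows essentially the same approach as the paper: the core of both arguments is the positivity check $\et_l>0$, established by observing that for odd $N$ the angles $2l\pi/N$ avoid $\pi$ so that $\cos(2l\pi/N)\geq -\cos(\pi/N)$, whence $\et_l\geq\sqrt{c}(\mu/\sqrt{c}-1)>0$. Your version is in fact somewhat cleaner, since you explicitly identify the minimizing indices $l=(N\pm1)/2$ and the minimum value, whereas the paper's phrasing ``$|\cos 2\theta l|<|\cos\theta|$'' is slightly imprecise (it fails at $l=0$, though that case is harmless).
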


\begin{proof}
  One simply has to check that 
  \begin{align*}
    \et_l=\sqrt{c}\bracket{\frac{\mu}{\sqrt{c}}+\frac{\cos(2l\frac{\pi}{N})}{\cos\frac{\pi}{N}}} > 0
  \end{align*}
  for $l=0,\ldots,N-1$. If $N$ is odd then $2\theta
  l\notin(\pi-\theta,\pi+\theta)$ and $2\theta
  l\notin(2\pi-\theta,2\pi)$, which implies that $|\cos2\theta
  l|<|\cos\theta|$. Since $\mu/\sqrt{c}>1$ we conclude that $\et_l>0$
  for $l=0,\ldots,N-1$.
\end{proof}

\noindent In contrast to the previous cases, it is, for a given value
of the Casimir, possible to have both string representations and loop
representations. Namely, if we assume that $q^n=1$ and let $\xv_1$
correspond to the largest intersection with the $d$-axis, then
$s^{n-1}(\xv_1)$ will be the smallest intersection with the $\dt$-axis
(cp. Lemma \ref{lemma:stringcond}), and one can check that all pairs
$\xv_i$, for $i=2,\ldots,n-1$ will be strictly positive.

%%%%%%%%%%%%%%%%%%%%%%%%%%%%%%%%%%%%%%%%%%
\section{Eigenvalue distribution and surface topology}\label{sec:eigenvaluesequence}
%%%%%%%%%%%%%%%%%%%%%%%%%%%%%%%%%%%%%%%%%%

In \cite{Shi04}, by using arguments similar to those in the WKB
approximation in quantum mechanics, a geometric approach is introduced
in the matrix regularization.  General matrix elements of a matrix are
related to entities computed from the corresponding function on the
surface.  In particular, it has been shown that Morse theoretic
information of topology manifests itself in certain branching
phenomena of eigenvalue distribution of a single matrix.

\begin{figure}
\centering
\psfrag{xlabel}[][]{x label}
\psfrag{i}[tc][tc]{$i$}
\psfrag{m1}[tc][tc]{$\mu=0.9$}
\psfrag{m2}[tc][tc]{$\mu=1.1$}
\psfrag{m3}[tc][tc]{$\mu=1.3$}
\psfrag{e}[bc][cr][1][90]{$\lambda_i$}
\psfrag{d}[bc][cr][1][90]{$\lambda_{i+1} -\lambda_i $}
\includegraphics[width=13cm]{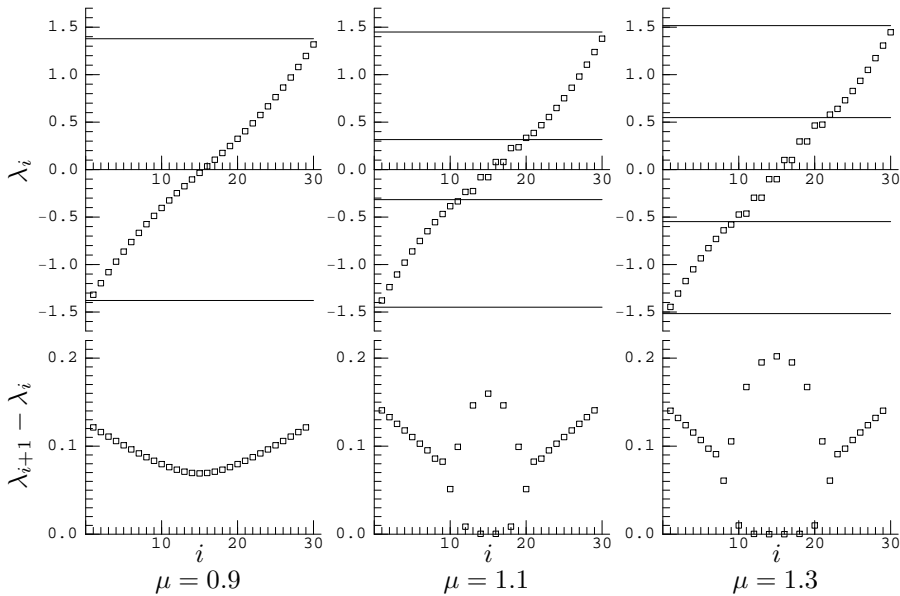}
\caption{
Plot of $\lambda_i$ and$\lambda_{i+1} -\lambda_i$ versus $i$,
where ${\lambda_1 < \lambda_2< \ldots <\lambda_N}$ are eigenvalues of $X$,
for $\mu=0.9, 1.1, 1.3$. The size of matrices is given by $N=30$.
Critical values of $x$ are also shown by the horizontal lines. }
\label{FigEigenvaluePlots}
\end{figure}

Eigenvalues of $X$ (whose continuum counterpart, $x$, is a Morse
function on the surface) in the representations obtained in Section
\ref{sec:RepTorusAndSphere}, do exhibit this branching phenomena, as
is consistent with the results in \cite{Shi04}.  In Figure
\ref{FigEigenvaluePlots}, eigenvalues of $X$, computed numerically,
for the case $\mu=0.9,1.1,1.3$ are shown. (We use the normalization
convention in which $c=1$, so that the transition between sphere and
torus occurs at $\mu=1$. The size of matrices is given by $N=30$.
For the toral representation, we have taken the
additional ``phase shift'' parameter $\beta$ to be zero. Using
different $\beta$'s does not change the plot qualitatively.)  The
horizontal lines correspond to the critical values of the function $x$
on the surface.

The plots directly reflect the Morse theoretic information of
topology, with $x$ as the Morse function, for each case
$\mu=0.9,1.1,1.3$.  For the case $\mu=0.9$, there are two critical
values which are connected by a single branch. Correspondingly,
the eigenvalue plot shows that there is only one ``sequence'' of
eigenvalues $\lambda_1 < \lambda_2 <\ldots <\lambda_N$ which increase
smoothly.  For the cases $\mu=1.1$ and $\mu=1.3$, there are four
critical values of $x$, say $x_A < x_B < x_C <x_D$. For $x_A<x<x_B$
and $x_C<x<x_D$ the surface consists of single branch, whereas for
$x_B<x<x_C$, the surface consists of two branches. Correspondingly, in
the plot of eigenvalues, one sees that eigenvalues $x_A< \lambda_i
<x_B$ and $x_C <\lambda_i < x_D$ each consists of single smoothly
increasing eigenvalue sequence, whereas eigenvalues $x_C < \lambda_i <
x_D$ is naturally divided into two sequences both of which increase
smoothly.  This branching phenomena of eigenvalues can be seen more
manifestly if one plots the difference between eigenvalues,
$\lambda_{i+1}-\lambda_{i}$, as is shown in the figure.  For more
detailed discussion about the eigenvalue sequences and its branching,
see \cite{Shi04}.  From the figure it can also be seen that by
decreasing the parameter $\mu$ from $1.3$ to $1.1$ the part of the
surface which have two branches shrinks, as is consistent with the
geometrical picture about the transition between torus and sphere.

%%%%%%%%%%%%%%%%%%%%%%%%%%%%%%%%%%%%%%%%%%
\section{Comparison with the Berezin-Toeplitz quantization}\label{sec:GeoQuantTorus}
%%%%%%%%%%%%%%%%%%%%%%%%%%%%%%%%%%%%%%%%%%

The purpose of this section is to compare matrix representations
obtained in Section \ref{sec:RepTorusAndSphere}, in the torus case,
with those one gets using Berezin-Toeplitz quantization. Full details
and proofs can be found in \cite{Hof07}. We shall use Theorem 5.1 from
the paper \cite{BHSS91} applied to $\Sb^1\times \Sb^1$. Namely $n=1$,
$\tau=1$ and we omit the Laplacian terms:
\begin{equation*}
\frac{\pi}{m}\sum_{k=1}^n \tau_k\left(r_k^2+\frac{r^2_{k+n}}{t^2_k}\right)
\quad\mathrm{and}\quad \prod_{s=1}^n\exp\left(-\frac{\pi\tau_s}{2m}\left(r^2_s+\frac{r^2_{s+n}}{\tau^2_s}\right)\right)
\end{equation*}
We reformulate it for simplicity and to fix notations.
\begin{thm}\label{thm:bhss}
  Let $r_1,r_2\in\Zb$ and $N\geq 5$ an integer. Then the $N\times
  N$-matrix corresponding to the face function $e^{2\pi
    i(r_1\theta+r_2\varphi)}$ is:
\begin{equation*}
 \mat\left(e^{2\pi i(r_1\theta+r_2\varphi)}\right) = \chi^{r_1 r_2} S^{-r_1} T^{r_2}\and \chi := e^{-\frac{\pi i}{N}}
\end{equation*}
where the $S$ and $T$ are matrices such that:
\begin{equation*}
 S = \smat{1}{1}{1}{1},\quad T=\diag(1,q,\ldots,q^{N-1})\ou q := \chi^2 = e^{-\frac{2\pi i}{N}}.
\end{equation*}
\end{thm}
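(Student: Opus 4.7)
The proof is essentially a direct specialization of Theorem 5.1 in \cite{BHSS91}, which associates an $N^n\times N^n$ matrix to every character $e^{2\pi i \sum_k(r_k\theta_k+r_{k+n}\varphi_k)}$ on the $2n$-torus through a Berezin--Toeplitz prescription. The plan is to: (i) set $n=1$, which eliminates the product over factors and reduces the matrix size from $N^n$ to $N$; (ii) set $\tau=\tau_1=1$, corresponding to the square torus $\Sb^1\times\Sb^1$; (iii) drop the Gaussian/Laplacian factors $\exp\!\left(-\tfrac{\pi\tau_s}{2m}(r_s^2+r_{s+n}^2/\tau_s^2)\right)$ and the prefactor $\tfrac{\pi}{m}\sum_k\tau_k(r_k^2+r_{k+n}^2/t_k^2)$, as indicated before the theorem. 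What remains of the BHSS recipe is the bare operatorial skeleton, which must reduce to a product of the shift matrix $S$ and the clock matrix $T$ with some overall phase.

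The key algebraic input is the Heisenberg--Weyl relation at an $N$-th root of unity. A direct computation on the standard basis $e_0,\ldots,e_{N-1}$, using the definitions $Se_i=e_{i-1\bmod N}$ and $Te_i=q^ie_i$, yields $ST=qTS$, and hence by induction
\begin{equation*}
   S^{-r_1}T^{r_2}\;=\;q^{-r_1r_2}\,T^{r_2}S^{-r_1}.
\end{equation*}
The Berezin--Toeplitz quantization is Weyl-symmetric in the two generators, so the correct combination assigned to the character $e^{2\pi i(r_1\theta+r_2\varphi)}$ is the geometric mean $q^{r_1r_2/2}S^{-r_1}T^{r_2}$ of the two possible orderings. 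Since $\chi^2=q$ by definition of $\chi$, this geometric mean is precisely $\chi^{r_1r_2}S^{-r_1}T^{r_2}$, matching the stated formula.

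The only real obstacle is book-keeping: one must track the sign conventions in the exponents of $\chi$ and $q$, the direction of the cyclic shift in $S$, and the ordering of $S^{-r_1}T^{r_2}$ versus $T^{r_2}S^{-r_1}$ so that they are consistent with \cite{BHSS91}. In particular the negative exponent $-r_1$ on $S$, rather than $+r_1$, comes from the convention choice for the Fourier mode versus the direction of the shift, while the factor $\chi^{r_1r_2}$ (as opposed to $\chi^{-r_1r_2}$) is fixed by the choice of $q=\chi^2$ together with the symmetric ordering. Once these conventions are aligned, the formula for $\mat\!\left(e^{2\pi i(r_1\theta+r_2\varphi)}\right)$ follows immediately from the specialization above, completing the proof.
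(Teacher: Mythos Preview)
Your proposal is essentially correct and matches the paper's approach: the paper does not give an independent proof of this theorem at all, but explicitly presents it as a reformulation of Theorem~5.1 from \cite{BHSS91} specialized to $n=1$, $\tau=1$, with the Laplacian/Gaussian factors omitted --- exactly the steps (i)--(iii) you outline. Your additional discussion of the Heisenberg--Weyl relation $ST=qTS$ and the symmetric-ordering origin of the phase $\chi^{r_1r_2}$ is a helpful elaboration, but it goes beyond what the paper itself supplies; in the paper this theorem is simply quoted and restated, not proved.
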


\begin{rmk}
  The $\mat$ map is not a morphism of algebras. However, $\mat$ is
  continuous for the topology of uniform convergence.
\end{rmk}

\noindent To apply this theorem to the torus case, i.e. the regular
values of the polynomial function $(x^2+y^2-\mu)^2+z^2-\nu^2$ (with
$\mu/\nu>1$), one has to choose the right embedding:

\begin{prop}
 Let $\mu,\nu\in\Rb$ such that $\mu/\nu>1$. By using the parametrization:
\begin{equation*}
\left\{
\begin{array}{l}
  x(\theta,\varphi) = \cos(2\pi\theta)\sqrt{\nu\cos(2\pi\varphi)+\mu}\\
  y(\theta,\varphi) = \sin(2\pi\theta)\sqrt{\nu\cos(2\pi\varphi)+\mu}\\
  z(\theta,\varphi) = \nu\sin(2\pi\varphi)
\end{array}
\right.
\end{equation*}
one gets:
\begin{eqnarray}
 \mat(x) & = & \frac{S}{2} \sqrt{\fatone\mu + \frac{\nu}{2}\left(\chi^{-1} T+ \chi T^{-1}\right)} + \frac{S^{-1}}{2} \sqrt{\fatone\mu + \frac{\nu}{2}\left(\chi T+\chi^{-1}T^{-1}\right)}\label{mat_x}\\
 \mat(y) & = & \frac{S}{2i} \sqrt{\fatone\mu + \frac{\nu}{2}\left(\chi^{-1} T+ \chi T^{-1}\right)} - \frac{S^{-1}}{2i} \sqrt{\fatone\mu + \frac{\nu}{2}\left(\chi T+\chi^{-1}T^{-1}\right)}\label{mat_y}\\
 \mat(z) & = & \frac{\nu}{2i}\left(T-T^{-1}\right)\label{mat_z}
\end{eqnarray}
\end{prop}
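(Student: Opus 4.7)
The plan is to reduce everything to Theorem \ref{thm:bhss} by first writing $x,y,z$ as (absolutely convergent) Fourier series on the torus, then applying $\mat$ term by term thanks to its linearity and continuity (as noted in the Remark just after Theorem \ref{thm:bhss}), and finally recognizing the resulting infinite sum as the holomorphic functional calculus of a single diagonal matrix.

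First I would rewrite the parametrization as
\begin{equation*}
z(\theta,\varphi)=\tfrac{\nu}{2i}\paraa{e^{2\pi i\varphi}-e^{-2\pi i\varphi}},
\qquad
x(\theta,\varphi)\pm iy(\theta,\varphi)=e^{\pm 2\pi i\theta}\,g(\varphi),
\end{equation*}
where $g(\varphi):=\sqrt{\mu+\nu\cos(2\pi\varphi)}$. Since $\mu/\nu>1$ (so $\mu-|\nu|>0$), the function $g$ extends holomorphically to a strip around $\Rb$, hence admits an absolutely convergent Fourier expansion $g(\varphi)=\sum_{k\in\Zb}c_k\,e^{2\pi ik\varphi}$. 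The formula for $\mat(z)$ is then immediate from Theorem \ref{thm:bhss} applied to $r_1=0,\ r_2=\pm 1$.

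For $\mat(x\pm iy)$, the Fourier coefficients are $\widehat{(x\pm iy)}_{r_1,r_2}=c_{r_2}\delta_{r_1,\pm 1}$. Applying Theorem \ref{thm:bhss} to each monomial and using continuity of $\mat$ to interchange the sum with $\mat$ gives
\begin{equation*}
\mat(x+iy)=S^{-1}\sum_{k\in\Zb}c_k\,\chi^k T^k
=S^{-1}\sum_{k\in\Zb}c_k\,(\chi T)^k,
\qquad
\mat(x-iy)=S\sum_{k\in\Zb}c_k\,(\chi^{-1}T)^k.
\end{equation*}
The key step now is functional calculus: the matrices $\chi T$ and $\chi^{-1}T$ are diagonal with eigenvalues on the unit circle, on which $g(u):=\sqrt{\mu+\tfrac{\nu}{2}(u+u^{-1})}$ agrees with its Fourier series $\sum_k c_k u^k$. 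Hence
\begin{equation*}
\sum_{k}c_k(\chi T)^k=g(\chi T)=\sqrt{\fatone\mu+\tfrac{\nu}{2}\paraa{\chi T+\chi^{-1}T^{-1}}},
\end{equation*}
and analogously for $\chi^{-1}T$. Taking real and imaginary parts via $\mat(x)=\tfrac12\paraa{\mat(x+iy)+\mat(x-iy)}$ and $\mat(y)=\tfrac1{2i}\paraa{\mat(x+iy)-\mat(x-iy)}$, which is allowed by linearity of $\mat$, yields exactly \eqref{mat_x} and \eqref{mat_y}.

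The only delicate point is the interchange $\mat\paraa{\sum_k c_k e^{2\pi i k\varphi}\cdot e^{\pm 2\pi i\theta}}=\sum_k c_k\,\mat\paraa{e^{2\pi i(\pm\theta+k\varphi)}}$, which uses the absolute convergence of the Fourier series of $g$ (uniform convergence of the partial sums to $x\pm iy$ in $\mathcal{C}^0(\Sb^1\times\Sb^1)$) together with the continuity of $\mat$ for uniform convergence asserted in the Remark. Once this analytic step is in place, the algebraic identification via functional calculus is routine because $\chi^{\pm 1}T$ is diagonal.
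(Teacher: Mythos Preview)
Your proposal is correct and follows exactly the route the paper indicates: expand $\sqrt{\mu+\nu\cos(2\pi\varphi)}$ in a Fourier series, apply Theorem~\ref{thm:bhss} termwise using linearity and uniform continuity of $\mat$, and then recognise the resulting series in the diagonal unitary $\chi^{\pm1}T$ as the square root via functional calculus (the paper phrases this last step as ``square roots of matrices are well defined since the matrices are positive definite''). Your write-up is simply a fleshed-out version of the paper's three-line sketch.
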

\begin{proof}
  The key idea is an expansion in Fourier series of
  $\sqrt{\mu+\nu\cos(2\pi\varphi)}$. We then replace face functions by
  matrices $T$ and $S$ according to Theorem \ref{thm:bhss}. Square
  roots of matrices are well defined since the matrices are positive
  definite.
\end{proof}

\begin{lem}\label{lem:diag}
 Let $D=\diag(d_1,\dots,d_N)$ be a diagonal $N\times N$-matrix, then:
\begin{equation*}
S^{-1} D S=\diag(d_N,d_1,\dots,d_{N-1}) \and S D S^{-1} = \diag(d_2,\dots,d_N,d_1).
\end{equation*}
\end{lem}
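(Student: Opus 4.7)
The statement is a direct computational fact about conjugation of a diagonal matrix by the cyclic shift $S$, so the plan is to reduce it to the action of $S$ on basis vectors and then read off the diagonal entries.

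First, I would identify $S$ as a cyclic permutation matrix. From the definition in the paper, $S$ has a $1$ in position $(i,i+1)$ for $i=1,\dots,N-1$ and a $1$ in position $(N,1)$, with all other entries zero. Letting $e_1,\dots,e_N$ denote the standard basis, this gives $S e_j = e_{j-1}$ with indices interpreted modulo $N$ in $\{1,\dots,N\}$ (so $S e_1 = e_N$). Since $S$ is a permutation matrix it is unitary, with $S^{-1}=S^\dagger$, and the inverse satisfies $S^{-1} e_j = e_{j+1}$ (again modulo $N$).

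Next, I would apply this to a diagonal matrix $D=\diag(d_1,\dots,d_N)$, for which $D e_j = d_j e_j$. Then
\begin{equation*}
S D S^{-1} e_j \;=\; S D e_{j+1} \;=\; d_{j+1}\, S e_{j+1} \;=\; d_{j+1}\, e_j,
\end{equation*}
so $SDS^{-1}$ is diagonal with entries $d_{j+1}$ on position $(j,j)$, i.e.\ $SDS^{-1}=\diag(d_2,\dots,d_N,d_1)$. Analogously,
\begin{equation*}
S^{-1} D S\, e_j \;=\; S^{-1} D e_{j-1} \;=\; d_{j-1}\, S^{-1} e_{j-1} \;=\; d_{j-1}\, e_j,
\end{equation*}
yielding $S^{-1}DS=\diag(d_N,d_1,\dots,d_{N-1})$, as claimed.

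There is essentially no obstacle here beyond getting the index convention right; the only thing to double-check is the direction of the shift (i.e.\ whether $S$ sends $e_j$ to $e_{j-1}$ or to $e_{j+1}$), which is fixed unambiguously by the explicit matrix in the definition. Once that is pinned down, the two identities follow immediately from the one-line computations above, and no further structure of $D$ is needed.
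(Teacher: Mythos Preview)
Your proof is correct; the paper actually states this lemma without proof, so there is nothing to compare against. Your identification of the shift direction $S e_j = e_{j-1}$ from the explicit matrix is right, and the two one-line computations settle the claim.
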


\noindent Let us denote:
\begin{equation*}
 D :=  \sqrt{\fatone\mu + \frac{\nu}{2}\left(\chi T+\chi^{-1}T^{-1}\right)}\and \til D := \sqrt{\fatone\mu + \frac{\nu}{2}\left(\chi^{-1} T+ \chi T^{-1}\right)}.
\end{equation*}
Then one can write (\ref{mat_x}) and (\ref{mat_y}) as:
\begin{equation*}
 \mat(x) = \frac{1}{2} \left( S \til D + S^{-1} D\right) \and \mat(y) = -\frac{i}{2}\left(S \til D - S^{-1} D\right).
\end{equation*}
It is easily seen that the matrices $D$ and $\til D$ are diagonal:
\begin{eqnarray*}
 D = \diag\left(\sqrt{\mu+\nu\cos\left(\frac{2\pi l}{N}+\frac{\pi}{N}\right)}\right)_{l=1,\dots,N}\\ \til D = \diag\left(\sqrt{\mu+\nu\cos\left(\frac{2\pi l}{N}-\frac{\pi}{N}\right)}\right)_{l=1,\dots,N}.
\end{eqnarray*}
By Lemma \ref{lem:diag},
\begin{equation*}
S \til D = S \til D S^{-1} S = \diag\left(\sqrt{\mu+\nu\cos\left(\frac{2\pi l}{N}+\frac{\pi}{N}\right)}\right)_{l=1,\dots,N} \times S = D S.
\end{equation*}
As a consequence, $\mat(x)$ and $\mat(y)$ can be written as:
\begin{equation*}
 \mat(x) = \frac{1}{2} \left( D S + S^{-1} D\right) \and \mat(y) = -\frac{i}{2}\left(D S - S^{-1} D\right).
\end{equation*}

\begin{thm}\label{thm:berezin_toeplitz}
The matrices $\mat(x)$, $\mat(y)$ and $\mat(z)$ are:
\begin{align*}
 & \mat(x) =  
  \frac{1}{2}\spsmat{x_1}{x_2}{x_{N-1}}{x_N},\\
  & \mat(y) = -\frac{i}{2}\smsmat{y_1}{y_2}{y_{N-1}}{y_N},\\
  & \mat(z) = \diag(z_1,z_2,\ldots,z_N)
\end{align*}
where the $x_l$'s, $y_l$'s and $z_l$'s (for $l=1,\dots,N$) are:
\begin{equation*}
 x_l = y_l = \sqrt{\mu+\nu\cos\left(\frac{2\pi l}{N}+\frac{\pi}{N}\right)}\and z_l = -\nu\sin\left(\frac{2\pi l}{N}\right).
\end{equation*}
\end{thm}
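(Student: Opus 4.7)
My plan is a direct entry-wise verification that unpacks the three closed forms
\[
\mat(x)=\tfrac{1}{2}\bigl(DS+S^{-1}D\bigr),\qquad
\mat(y)=-\tfrac{i}{2}\bigl(DS-S^{-1}D\bigr),\qquad
\mat(z)=\tfrac{\nu}{2i}\bigl(T-T^{-1}\bigr),
\]
which have already been obtained from Theorem \ref{thm:bhss} and the Fourier expansion argument in the proposition preceding the theorem. The ingredients are the diagonal form of $D$ given just above, together with the fact that $S$ is a cyclic shift and $T$ is diagonal.

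The statement for $\mat(z)$ is immediate. Since $T=\diag(1,q,\ldots,q^{N-1})$ with $q=e^{-2\pi i/N}$, the matrix $T-T^{-1}$ is diagonal with $l$-th entry $q^{l-1}-q^{-(l-1)}=-2i\sin\paraa{2\pi(l-1)/N}$. Multiplying by $\nu/(2i)$ yields $\diag(z_1,\ldots,z_N)$ with $z_l=-\nu\sin\paraa{2\pi(l-1)/N}$, which after the obvious relabeling $l\mapsto l+1$ matches the expression in the theorem.

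For $\mat(x)$ and $\mat(y)$, I exploit that $S_{l,l+1}=1$ for $1\leq l\leq N-1$ and $S_{N,1}=1$, with all other entries zero; hence for a diagonal matrix $D=\diag(d_1,\ldots,d_N)$ the product $DS$ has a unique nonzero entry $d_l$ in row $l$, column $l+1\!\!\pmod N$, while $S^{-1}D$ has a unique nonzero entry $d_l$ in row $l+1\!\!\pmod N$, column $l$. Adding the two with factor $1/2$ produces a symmetric matrix with $d_l/2$ on both subdiagonals at positions $(l,l+1)$ and $(l+1,l)$, and with $d_N/2$ at the two corner positions $(1,N)$ and $(N,1)$. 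Subtracting them and multiplying by $-i/2$ produces $-i d_l/2$ above and $+i d_l/2$ below the main diagonal, with the opposite sign convention at the two corners coming from the cyclic wrap-around of $S$. Reading off the entries in the shapes of the macros defining $\mat(x)$ and $\mat(y)$ in the statement, and identifying $x_l=y_l=d_l=\sqrt{\mu+\nu\cos\paraa{2\pi l/N+\pi/N}}$, finishes the proof.

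The only subtlety, and what I expect to be the main place for sign mistakes, is the bookkeeping at the two corner positions $(1,N)$ and $(N,1)$: in the symmetric sum $DS+S^{-1}D$ both corners carry $d_N$ with the same sign, while in the antisymmetric difference $DS-S^{-1}D$ they carry $d_N$ with opposite signs, and the sign convention encoded in the theorem's matrix templates must be matched carefully to this. Beyond this careful index-and-sign tracking, no deeper argument is needed.
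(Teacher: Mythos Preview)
Your approach is correct and coincides with the paper's, which gives no separate proof: the closed forms $\mat(x)=\tfrac12(DS+S^{-1}D)$, $\mat(y)=-\tfrac{i}{2}(DS-S^{-1}D)$, $\mat(z)=\tfrac{\nu}{2i}(T-T^{-1})$ are derived in the text immediately preceding the theorem, and the explicit entries then follow by exactly the shift-and-diagonal bookkeeping you outline. One caveat: your ``obvious relabeling $l\mapsto l+1$'' for $z_l$ is not a legitimate proof move, since diagonal positions are fixed; what you have actually uncovered is an off-by-one inconsistency in the paper's own indexing of $T$ (the displayed formulas for $D$ and $z_l$ tacitly use $T_{ll}=q^{l}$ rather than the $T_{ll}=q^{l-1}$ written in Theorem~\ref{thm:bhss}), not a gap in your argument.
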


\noindent These matrices satisfy the following relations:

\begin{thm}\label{thm:berezin_toeplitz2}
 Let $\mu,\nu\in\Rb$ and $N\geq 5$ such that $\mu/\nu>1$. If one assumes $\h=\tan(\theta)$ with $\theta:=\pi/N$, then:
 \begin{align*}
  & [X,Y] = i\h (\cos(\theta) Z)\\
  & [Y,(\cos(\theta) Z)] = i\h \left(X(X^2+Y^2-\mu\fatone)+(X^2+Y^2-\mu\fatone)X\right)\\
  & [(\cos(\theta) Z),X] = i\h \left(Y(X^2+Y^2-\mu\fatone)+(X^2+Y^2-\mu\fatone)Y\right)\\
  & (X^2+Y^2-\mu\fatone)^2 + (\cos(\theta) Z)^2 = (\nu\cos(\theta))^2\fatone.
 \end{align*}
 where $X:=\mat(x)$, $Y:=\mat(y)$ and $Z:=\mat(z)$ are the matrices
 obtained in the theorem \ref{thm:berezin_toeplitz}. Let us stress that
 $\theta$ is not related to an angle of a parametrization.
\end{thm}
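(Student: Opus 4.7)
The plan is to verify the three commutator identities and the quadratic constraint by direct matrix computation, working with $W:=X+iY=DS$ and $V:=X-iY=S^{-1}D$ in place of $X$ and $Y$. The single algebraic ingredient is the Weyl-like commutation rule $ST=qTS$ with $q=\chi^{2}=e^{-2\pi i/N}$, which is Lemma \ref{lem:diag} applied to the eigenvalues of $T$; combined with the fact that $D$ and $\til D$ are diagonal (hence commute with $T^{\pm 1}$), it yields the ``workhorse'' identities
\begin{equation*}
WT^{\pm 1}=q^{\pm 1}T^{\pm 1}W,\qquad VT^{\pm 1}=q^{\mp 1}T^{\pm 1}V.
\end{equation*}

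For $[X,Y]=i\h\cos\theta\cdot Z$, I would compute $[X,Y]=\tfrac{i}{2}[W,V]=\tfrac{i}{2}(D^{2}-\til D^{2})$ and substitute the definitions of $D^{2}$ and $\til D^{2}$: using $\chi-\chi^{-1}=-2i\sin\theta$ and $T-T^{-1}=\tfrac{2i}{\nu}\mat(z)$, the commutator collapses to $i\sin\theta\cdot Z$, and the hypothesis $\h=\tan\theta$ finishes. For the constraint, compute $F:=X^{2}+Y^{2}-\mu\fatone=\tfrac{1}{2}(WV+VW)-\mu\fatone=\tfrac{1}{2}(D^{2}+\til D^{2})-\mu\fatone=\tfrac{\nu\cos\theta}{2}(T+T^{-1})$ (using $\chi+\chi^{-1}=2\cos\theta$); then $F^{2}+(\cos\theta\cdot Z)^{2}$ collapses to $(\nu\cos\theta)^{2}\fatone$ via the elementary identity $(T+T^{-1})^{2}-(T-T^{-1})^{2}=4\fatone$.

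For the remaining two commutators, set $K:=\cos\theta\cdot Z$. Using the workhorse identities, $[Y,K]$ expands in the ``basis'' $\{TW,T^{-1}W,TV,T^{-1}V\}$ with coefficients $\tfrac{\nu\cos\theta}{4}(1-q,\,q^{-1}-1,\,q^{-1}-1,\,1-q)$, while $i\h(XF+FX)$ expands in the same basis with coefficients $\tfrac{i\h\nu\cos\theta}{4}(q+1,\,q^{-1}+1,\,q^{-1}+1,\,q+1)$. The two sides agree iff $1-q=i\h(1+q)$, and substituting $q=e^{-2i\theta}$ reduces this to $\h=\tan\theta$, which holds by assumption. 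The third identity $[K,X]=i\h(YF+FY)$ follows by an entirely analogous expansion, with $W$ and $V$ (and certain signs) interchanged, and is governed by the very same scalar identity.

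The main obstacle is simply bookkeeping: the fourfold expansion of $[Y,K]$ and $i\h(XF+FX)$ in the last step is where a careful tally is needed. But once the workhorse commutation rules are recorded, all four coefficient matchings reduce to the single trigonometric identity $\tfrac{1-q}{i(1+q)}=\tan\theta$, which is tautological for $q=e^{-2i\theta}$.
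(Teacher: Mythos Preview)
Your proposal is correct and is exactly the approach the paper takes: the paper's entire proof reads ``This is a direct computation on matrices,'' and you have supplied a clean and correct blueprint for that computation, organized via $W=DS$, $V=S^{-1}D$, the identities $D^{2}-\til D^{2}=2\sin\theta\,Z$ and $\tfrac12(D^{2}+\til D^{2})-\mu\mid=\tfrac{\nu\cos\theta}{2}(T+T^{-1})$, and the shift relation $ST=qTS$.
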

\begin{proof}
 This is a direct computation on matrices.
\end{proof}

\noindent Hence, one can see that the matrices $\mat(x),\mat(y)$ and
$\mat(z)$ from Theorem \ref{thm:berezin_toeplitz} look very similar to
those from Theorem \ref{thm:toral_rep}, if one chooses $\beta=\pi/N$. Moreover the
free parameter $\nu$ can be set to 1 or to $1/\cos(\pi/N)$.
Consequently these matrices are asymptotically equal. The relations
that the matrices from Theorem \ref{thm:berezin_toeplitz} satisfy are also
very similar to \eqref{eq:XY}-\eqref{eq:ZX}.\vspace{5mm}

\noindent\textbf{Acknowledgement.}
We would like to thank the Swedish Research Council, the Royal
Institute of Technology, the Japan Society for the Promotion of
Science, the Albert Einstein Institute, the Sonderforschungsbereich
``Raum-Zeit-Materie'', the ESF Scientific Programme MISGAM, and the
Marie Curie Research Training Network ENIGMA for financial support
resp. hospitality.

%%%%%%%%%%%%%%%%%%%%%%%%%%%%%%%%%%%%%%%%%%%%%%%%%%%
% Bibliography
%%%%%%%%%%%%%%%%%%%%%%%%%%%%%%%%%%%%%%%%%%%%%%%%%%%

\newpage

%%
%% OBS! Not referred to: BH93
%%


\begin{thebibliography}{15}

\bibitem[ASS06]{ASS06}
\newblock{I. Assem, D. Simson, A. Skowronski.}
\newblock{\em Elements of the representation theory of associative algebras.}
\newblock{LMS Student Texts 65, Cambridge University Press, 2006.}

\bibitem[Ber78]{Ber78}
\newblock{G. M. Bergman.}
\newblock{The Diamond Lemma for Ring Theory.}
\newblock{{\em Advances in Mathematics 29}, 29:178-218, 1978.}

%\bibitem[BH93]{BH93}
%M.~{Bordemann} and J.~{Hoppe}.
%\newblock {The Dynamics of Relativistic Membranes I; Reduction to 2-dimensional
%  Fluid Dynamics}.
%\newblock {\em Phys. Lett. B}, 317(3):315--320, 1993.

%\bibitem[BH98]{BH98}
%M.~{Bordemann} and J.~{Hoppe}.
%\newblock{Diffeomorphism invariant integrable field theories and
%  hypersurface motions in Riemannian manifolds.}
%\newblock{{\em Journal of Mathematical Physics} 39:683, 1998.}

\bibitem[BHSS91]{BHSS91}
M.~{Bordemann}, J.~{Hoppe}, P.~{Schaller}, and M.~{Schlichenmaier}.
\newblock $\mathfrak{gl}(\infty)$ and geometric quantization.
\newblock {\em Commun. Math. Phys.}, 138:209--244, 1991.

\bibitem[BKL05]{BKL05}
D.~{Bak}, S.~{Kim}, and K.~{Lee}.
\newblock {All Higher Genus BPS Membranes in the Plane Wave Background}.
\newblock {\em JHEP 0506 035}, 2005.
\newblock hep-th/050120.

\bibitem[BMS94]{BMS94}
M.~{Bordemann}, E.~{Meinrenken}, and M.~{Schlichenmaier}.
\newblock {Toeplitz Quant. of K\"ahler Manifolds and
  $\mathfrak{gl}(N),N\rightarrow\infty$ Limits}.
\newblock {\em Commun. Math. Phys.}, 165:281--296, 1994.

\bibitem[FFZ89]{FFZ89}
D.~{Fairlie}, P.~{Fletcher}, and C.~{Zachos}.
\newblock Trigonometric structure constants for new infinite algebras.
\newblock {\em Phys. Lett. B}, 218:203, 1989.

\bibitem[GH82]{Hop82}
J.~{Hoppe}.
\newblock {\em {Quantum Theory of a Massless Relativistic Surface}}.
\newblock {Ph. D. Thesis (Advisor: J. Goldstone)}, MIT, 1982.
\newblock \texttt{http://www.aei.mpg.de/$\sim$hoppe/}.

\bibitem[FH94]{FH94}
F.~{Harary}.
\newblock{\textit{Graph Theory}}.
\newblock{Addison-Wesley, Reading, 1969}.

\bibitem[Hir76]{Hir76}
M.W. {Hirsch}.
\newblock {\em Differential topology}.
\newblock Springer, New-York, 1976.

\bibitem[Hof02]{Hof02}
L.~{Hofer}, \emph{{Surfaces de Riemann compactes}}, Master's thesis, {Universit\'{e} de Haute-Alsace
 Mulhouse}, 2002, {\small \url{http://laurent.hofer.free.fr/data/master_hofer_2002.pdf}.}

\bibitem[Hof07]{Hof07}
L.~{Hofer}, \emph{{Aspects alg\'{e}briques et quantification des surfaces
  minimales}}, Ph.D. thesis, {Universit\'{e} de Haute-Alsace de Mulhouse}, June
  2007, \url{http://laurent.hofer.free.fr/data/these_hofer_2007.pdf}.

%\bibitem[Hop88]{Hop88}
%J.~{Hoppe}.
%\newblock {$\operatorname{Diff}_A T^2$, and the curvature of some infinite
%  dimensional manifolds}.
%\newblock {\em Phys. Lett. B}, 215:706--710, 1988.

\bibitem[Hop89/88]{Hop89}
J.~{Hoppe}.
\newblock {Diffeomorphism Groups, Quantization, and $SU(\infty)$}.
\newblock {\em Int. J. of Mod. Phys. A}, 4(19):5235--5248, 1989.
\newblock {$\operatorname{Diff}_A T^2$, and the curvature of some infinite
  dimensional manifolds}.
\newblock {\em Phys. Lett. B}, 215:706--710, 1988.


\bibitem[KL92]{KL92}
S. {Klimek} and A. {Lesniewski}
\newblock {Quantum Riemann Surfaces I. The Unit Disc}
\newblock {\em Comm. in Math. Phys.}, 146:103--122, 1992.
\newblock {Quantum Riemann Surfaces II. The Discrete Series}
\newblock {\em Letters. in Math. Phys.} 24:125--139, 1992

\bibitem[Mad92]{Mad92}
J.~{Madore}.
\newblock {The Fuzzy Sphere}.
\newblock {\em Classical and Quantum Gravity}, 9:69--88, 1992.

\bibitem[Now97]{Now97}
C.~{Nowak}.
\newblock {{\"U}ber Sternprodukte auf nichtregulären Poissonmannigfaltigkeiten
(PhD Thesis, Freiburg University 1997).}
\newblock {Star Products for integrable Poisson Structures on $\mathbb{R}^3$}.
\newblock Preprint q-alg/9708012.

\bibitem[Shi04]{Shi04}
H.~{Shimada}.
\newblock Membrane topology and matrix regularization.
\newblock {\em Nucl. Phys. B}, 685:297--320, 2004.

\end{thebibliography}
\end{document}